%
\documentclass[runningheads]{llncs}
%

\let\llncssubparagraph\subparagraph
\let\subparagraph\paragraph
\usepackage[compact]{titlesec}
\let\subparagraph\llncssubparagraph

\usepackage{marvosym}
%

\usepackage{amsmath}
\usepackage{amssymb}
\usepackage{palatino,mathpazo} 
\usepackage[T1]{fontenc}
\usepackage[utf8]{inputenc}
\usepackage[mathscr]{eucal}
\usepackage{titlesec}
\usepackage{bm}
\usepackage[all,cmtip,2cell]{xy}
\usepackage{enumerate}
\usepackage{graphicx}
\usepackage{stmaryrd}
\usepackage{rotating}
\usepackage{enumitem}
\usepackage{algorithm}
\usepackage{algorithmic}
\usepackage{hyperref}

\allowdisplaybreaks



\usepackage{makecell}

\newcounter{para}[section]


\usepackage{tikz}
\usetikzlibrary{arrows,snakes,automata,backgrounds,petri, positioning,calc,cd, automata}
\usetikzlibrary{shapes.geometric,shapes.misc,matrix,arrows.meta,decorations.markings}
\tikzset{x=1em, y=1.5ex, baseline=-0.5ex}
\tikzset{ihbase/.style={inner sep=0,circle,draw,fill=lightgray,minimum size=0.4em,node contents={}}}
\tikzset{ihblack/.style={ihbase,fill=black}}
\tikzset{ihwhite/.style={ihbase,fill=white}}
\tikzset{mat/.style={draw,fill=white,rectangle,node font=\scriptsize}}
\tikzset{ha/.style={mat,rounded rectangle,rounded rectangle left arc=none}}
\tikzset{haop/.style={mat,rounded rectangle,rounded rectangle right arc=none}}
\tikzset{blackha/.style={mat,rounded rectangle,rounded rectangle left arc=none,font=\color{white},fill=black}}
\tikzset{blackhaop/.style={mat,rounded rectangle,rounded rectangle right arc=none,font=\color{white},fill=black}}
\tikzset{anti/.style={inner sep=0,isosceles triangle,fill=black,draw=black, minimum width=0.75em, node contents={}}}
\tikzset{antiop/.style={anti,shape border rotate=180}}
\tikzset{antisq/.style={inner sep=0,rectangle,fill=black, minimum height=1em, minimum width=0.6em, node contents={}}}
\tikzset{count/.style={above,inner ysep=0.15em,font=\scriptsize}}
\tikzset{axiom/.style={above,font=\small}}
\tikzset{dir/.style={-Latex}}
\tikzset{st/.style={decoration={markings,
    mark={at position 0.5 with {\draw (0, 2pt) to (0, -2pt);}}},
    postaction=decorate}}

\newcommand{\objred}{{\color{red} \blacktriangleright}}
\newcommand{\objr}{\blacktriangleright}
\newcommand{\objl}{\blacktriangleleft}

\newcommand{\sem}[1]{\left\llbracket{#1}\right\rrbracket}
\newcommand{\semreg}[1]{\left\llbracket{#1}\right\rrbracket_R}
\newcommand{\transreg}[1]{\left\langle{#1}\right\rangle}
\newcommand{\RegExp}{\mathsf{RegExp}}

\newcommand{\arrowright}{
\tikzset{x=1em, y=2.1ex}
\begin{tikzpicture}
	\begin{pgfonlayer}{nodelayer}
		\node [style=none] (0) at (-1.5, 0) {};
		\node [style=none] (1) at (0, 0) {};
		\node [style=none] (2) at (1.25, 0) {};
	\end{pgfonlayer}
	\begin{pgfonlayer}{edgelayer}
		\draw [->] (0.center) to (1.center);
		\draw (1.center) to (2.center);
	\end{pgfonlayer}
\end{tikzpicture}
}
\tikzset{x=1em, y=1.5ex}
}
\newcommand{\arrowleft}{
\tikzset{x=1em, y=2.1ex}
\begin{tikzpicture}
	\begin{pgfonlayer}{nodelayer}
		\node [style=none] (1) at (-1.25, 0) {};
		\node [style=none] (2) at (0, 0) {};
		\node [style=none] (3) at (1.25, 0) {};
	\end{pgfonlayer}
	\begin{pgfonlayer}{edgelayer}
		\draw (1.center) to (2.center);
		\draw [->] (3.center) to (2.center);
	\end{pgfonlayer}
\end{tikzpicture}
}
\tikzset{x=1em, y=1.5ex}
}

\newcommand{\genericcomult}[2]{
  \begin{tikzpicture}
    \node at (1, 0) [ihbase,solid,name=copy,#1];
    \draw[#2] (copy) .. controls (1.25, 0.75) .. (2, 0.75);
    \draw[#2] (0, 0) -- (copy);
    \draw[#2] (copy) .. controls (1.25, -0.75) .. (2, -0.75);
  \end{tikzpicture}
}

\newcommand{\genericcounit}[2]{
  \tikz \draw[#2] (0, 0) -- (1, 0) node[ihbase,#1, solid];
}
\newcommand{\genericcounitn}[2]{
  \tikz \draw (0, 0) -- node[count] {#2} (1, 0) node[ihbase,#1];
}
\newcommand{\genericmult}[2]{
  \tikz {
    \node at (1,0) (copy) [ihbase,#1,solid];
    \draw[#2] (0,  0.75) .. controls (0.75,  0.75) .. (copy);
    \draw[#2] (0, -0.75) .. controls (0.75, -0.75) .. (copy);
    \draw[#2] (copy) -- (2, 0);
  }
}

\newcommand{\genericunit}[2]{
  \tikz \draw[#2] (0, 0) node[ihbase,#1, solid] -- (1, 0);
}

\newcommand{\Bcomult}{\genericcomult{ihblack}{}}

\newcommand{\Bcounit}{\genericcounit{ihblack}{}}

\newcommand{\Bmult}{\genericmult{ihblack}{}}

\newcommand{\Bunit}{\genericunit{ihblack}{}}

\newcommand{\Wcounitn}[1]{\genericcounitn{ihwhite}}


\newcommand\scalar[1]{
  \tikz {
    \node[ha] (ha) {$#1$};
    \draw (ha.west) -- ++(-0.75, 0);
    \draw (ha.east) -- ++(0.75, 0);
  }
}

\tikzset{x=1em, y=1.5ex}
}




\pgfdeclarelayer{edgelayer}
\pgfdeclarelayer{nodelayer}
\pgfsetlayers{edgelayer,nodelayer,main}

\definecolor{light-gray}{gray}{.5}
\tikzstyle{none}=[inner sep=0pt]
\tikzstyle{plain}=[inner sep=0pt]
\tikzstyle{black}=[circle, draw=black, fill=black, inner sep=0pt, minimum size=4pt]
\tikzstyle{black-faded}=[circle, draw=light-gray, fill=light-gray, inner sep=0pt, minimum size=4pt]
\tikzstyle{white}=[circle, draw=black, fill=white, inner sep=0pt, minimum size=4.5pt]
\tikzstyle{white-faded}=[circle, draw=light-gray, fill=white, inner sep=0pt, minimum size=4.5pt]
\tikzstyle{delay}=[fill=black, regular polygon, regular polygon sides=3,rotate=-90, scale=.55]
\tikzstyle{delay-op}=[fill=black, regular polygon, regular polygon sides=3,rotate=90, scale=.55]
\tikzstyle{reg}=[draw, fill=white, rounded rectangle, rounded rectangle left arc=none, minimum height=1.2em, minimum width=1.4em, node font={\scriptsize}]
\tikzstyle{coreg}=[draw, fill=white, rounded rectangle, rounded rectangle right arc=none, minimum height=1.2em, minimum width=1.4em, node font={\scriptsize}]
\tikzstyle{rcoreg}=[draw=red, fill=white, rounded rectangle, rounded rectangle right arc=none, minimum height=1.2em, minimum width=1.4em, node font={\scriptsize}]
\tikzstyle{regb}=[draw, fill=black, rounded rectangle, rounded rectangle left arc=none, minimum height=1.2em, minimum width=1.4em, node font={\scriptsize}]
\tikzstyle{coregb}=[draw, fill=black, rounded rectangle, rounded rectangle right arc=none, minimum height=1.2em, minimum width=1.4em, node font={\scriptsize}]
\tikzstyle{rn}=[circle, draw=red, fill=red, inner sep=0pt, minimum size=4pt]
\tikzstyle{wrn}=[circle, draw=red, fill=white, inner sep=0pt, minimum size=4pt]
\tikzstyle{place}=[circle, draw=black, fill=white, inner sep=0pt, minimum size=9pt]
\tikzstyle{act}=[circle, draw=black, fill=white, inner sep=0pt, minimum size=4.5pt]
\tikzstyle{coact}=[draw, fill=white, rounded rectangle, rounded rectangle right arc=none, minimum height=.7em, minimum width=.9em, node font={\scriptsize}]

\tikzstyle{pl}=[circle,thick,draw=black!75,fill=white,minimum size=17pt]
\tikzstyle{port}=[circle, fill,inner sep=1.2pt]
\tikzstyle{transition}=[rectangle,thick,draw=black!75,
  			  fill=black!20,minimum size=7pt]

\tikzstyle{arrow}=[->]

\newcommand{\diagbox}[3]{
\tikzset{x=1em, y=2.1ex}
\begin{tikzpicture}
	\begin{pgfonlayer}{nodelayer}
		\node [style=none] (0) at (-0.75, 0.5) {};
		\node [style=none] (1) at (-0.25, 1) {};
		\node [style=none] (2) at (-0.75, -0.5) {};
		\node [style=none] (3) at (0.75, -0.5) {};
		\node [style=none] (4) at (-0.25, -1) {};
		\node [style=none] (5) at (0.75, 0.5) {};
		\node [style=none] (6) at (2.5, -0) {};
		\node [style=none] (7) at (0.75, -0) {};
		\node [style=none] (8) at (0.25, -1) {};
		\node [style=none] (9) at (0.25, 1) {};
		\node [style=none] (10) at (0, -0) {$#1$};
		\node [style=none] (11) at (2.25, 0.5) {\scriptsize $#3$};
		\node [style=none] (12) at (-2.25, 0.5) {\scriptsize $#2$};
		\node [style=none] (13) at (-2.5, -0) {};
		\node [style=none] (14) at (-0.75, -0) {};
	\end{pgfonlayer}
	\begin{pgfonlayer}{edgelayer}
		\draw [in=180, out=0, looseness=1.25] (7.center) to (6.center);
		\draw [semithick, in=0, out=-90, looseness=1.00] (3.center) to (8.center);
		\draw [semithick, in=-90, out=180, looseness=1.00] (4.center) to (2.center);
		\draw [semithick, in=180, out=90, looseness=1.00] (0.center) to (1.center);
		\draw [semithick, in=90, out=0, looseness=1.00] (9.center) to (5.center);
		\draw [semithick] (1.center) to (9.center);
		\draw [semithick] (5.center) to (3.center);
		\draw [semithick] (8.center) to (4.center);
		\draw [semithick] (2.center) to (0.center);
		\draw [in=180, out=0, looseness=1.25] (13.center) to (14.center);
	\end{pgfonlayer}
\end{tikzpicture}
\tikzset{x=1em, y=1.5ex}
}



\newcommand{\diagstate}[4]{
\tikzset{x=1em, y=2.1ex}
\begin{tikzpicture}
	\begin{pgfonlayer}{nodelayer}
		\node [style=none] (0) at (-0.75, 0.5) {};
		\node [style=none] (1) at (-0.25, 1) {};
		\node [style=none] (2) at (-0.75, -0.5) {};
		\node [style=none] (3) at (0.75, -0.5) {};
		\node [style=none] (4) at (-0.25, -1) {};
		\node [style=none] (5) at (0.75, 0.5) {};
		\node [style=none] (6) at (2.5, -0.5) {};
		\node [style=none] (7) at (0.75, -0.5) {};
		\node [style=none] (8) at (0.25, -1) {};
		\node [style=none] (9) at (0.25, 1) {};
		\node [style=none] (10) at (0, -0) {$#1$};
		\node [style=none] (11) at (2.5, -1) {\scriptsize $#3$};
		\node [style=none] (12) at (-2.5, -1) {\scriptsize $#2$};
		\node [style=none] (13) at (-2.5, -0.5) {};
		\node [style=none] (14) at (-0.75, -0.5) {};
		\node [style=none] (15) at (0.75, 0.5) {};
		\node [style=none] (16) at (2.5, 0.5) {};
		\node [style=none] (17) at (-2.5, 0.5) {};
		\node [style=none] (18) at (-0.75, 0.5) {};
		\node [style=none] (19) at (-2.5, 1) {\scriptsize $#4$};
		\node [style=none] (20) at (2.5, 1) {\scriptsize $#4$};
	\end{pgfonlayer}
	\begin{pgfonlayer}{edgelayer}
		\draw [in=180, out=0, looseness=1.25] (7.center) to (6.center);
		\draw [semithick, in=0, out=-90, looseness=1.00] (3.center) to (8.center);
		\draw [semithick, in=-90, out=180, looseness=1.00] (4.center) to (2.center);
		\draw [semithick, in=180, out=90, looseness=1.00] (0.center) to (1.center);
		\draw [semithick, in=90, out=0, looseness=1.00] (9.center) to (5.center);
		\draw [semithick] (1.center) to (9.center);
		\draw [semithick] (5.center) to (3.center);
		\draw [semithick] (8.center) to (4.center);
		\draw [semithick] (2.center) to (0.center);
		\draw [in=180, out=0, looseness=1.25] (13.center) to (14.center);
		\draw [in=180, out=0, looseness=1.25] (15.center) to (16.center);
		\draw [in=180, out=0, looseness=1.25] (17.center) to (18.center);
	\end{pgfonlayer}
\end{tikzpicture}
\tikzset{x=1em, y=1.5ex}
}

\newcommand{\traceform}[4]{
\tikzset{x=1em, y=2.1ex}
\begin{tikzpicture}
	\begin{pgfonlayer}{nodelayer}
		\node [style=none] (0) at (-0.75, 1.25) {};
		\node [style=none] (1) at (0.75, 1) {};
		\node [style=none] (2) at (-0.25, 1.75) {};
		\node [style=none] (3) at (-0.75, -0.25) {};
		\node [style=none] (4) at (0.75, -0.25) {};
		\node [style=none] (5) at (-0.25, -0.75) {};
		\node [style=none] (6) at (-0.75, 1) {};
		\node [style=none] (7) at (0.75, 1.25) {};
		\node [style=none] (8) at (3.25, 0) {};
		\node [style=none] (9) at (0.75, 0) {};
		\node [style=none] (10) at (0.25, -0.75) {};
		\node [style=none] (11) at (0.25, 1.75) {};
		\node [style=none] (12) at (0, 0.5) {$#1$};
		\node [style=none] (13) at (3, 0.5) {\scriptsize $#3$};
		\node [style=none] (14) at (1.5, 2.5) {};
		\node [style=none] (15) at (-1.5, 2.5) {};
		\node [style=none] (16) at (-2.75, 0.5) {\scriptsize $#2$};
		\node [style=none] (17) at (-3, 0) {};
		\node [style=none] (18) at (-0.75, 0) {};
		\node [style=none] (20) at (1.5, 1) {};
		\node [style=none] (21) at (2.75, 2.35) {\scriptsize $#4$};
		\node [style=none] (22) at (-1.5, 1) {};
	\end{pgfonlayer}
	\begin{pgfonlayer}{edgelayer}
		\draw [in=180, out=0, looseness=1.25] (9.center) to (8.center);
		\draw [semithick, in=0, out=-90] (4.center) to (10.center);
		\draw [semithick, in=-90, out=180] (5.center) to (3.center);
		\draw [semithick, in=180, out=90] (0.center) to (2.center);
		\draw [semithick, in=90, out=0] (11.center) to (7.center);
		\draw [semithick] (2.center) to (11.center);
		\draw [semithick] (7.center) to (4.center);
		\draw [semithick] (10.center) to (5.center);
		\draw [semithick] (3.center) to (0.center);
		\draw (15.center) to (14.center);
		\draw [in=180, out=0, looseness=1.25] (17.center) to (18.center);
		\draw (6.center) to (22.center);
		\draw [bend right=90, looseness=1.75] (20.center) to (14.center);
		\draw [bend left=90, looseness=1.75] (22.center) to (15.center);
		\draw (1.center) to (20.center);
	\end{pgfonlayer}
\end{tikzpicture}
\tikzset{x=1em, y=1.5ex}
}
\newcommand{\traceaction}[5]{
\tikzset{x=1em, y=2.1ex}
\begin{tikzpicture}
	\begin{pgfonlayer}{nodelayer}
		\node [style=none] (0) at (-0.75, 1.25) {};
		\node [style=none] (1) at (0.75, 1) {};
		\node [style=none] (2) at (-0.25, 1.75) {};
		\node [style=none] (3) at (-0.75, -0.25) {};
		\node [style=none] (4) at (0.75, -0.25) {};
		\node [style=none] (5) at (-0.25, -0.75) {};
		\node [style=none] (6) at (-0.75, 1) {};
		\node [style=none] (7) at (0.75, 1.25) {};
		\node [style=none] (8) at (4.25, 0) {};
		\node [style=none] (9) at (0.75, 0) {};
		\node [style=none] (10) at (0.25, -0.75) {};
		\node [style=none] (11) at (0.25, 1.75) {};
		\node [style=none] (12) at (0, 0.5) {$#1$};
		\node [style=none] (13) at (4, 0.5) {\scriptsize $#3$};
		\node [style=none] (14) at (2.5, 2.5) {};
		\node [style=none] (15) at (-1.5, 2.5) {};
		\node [style=none] (16) at (-2.75, 0.5) {\scriptsize $#2$};
		\node [style=none] (17) at (-3, 0) {};
		\node [style=none] (18) at (-0.75, 0) {};
		\node [style=none] (20) at (2.5, 1) {};
		\node [style=none] (21) at (3.5, 2.75) {\scriptsize $#4$};
		\node [style=none] (22) at (-1.5, 1) {};
		\node [style=reg] (23) at (1.75, 1) {$#5$};
	\end{pgfonlayer}
	\begin{pgfonlayer}{edgelayer}
		\draw [in=180, out=0, looseness=1.25] (9.center) to (8.center);
		\draw [semithick, in=0, out=-90] (4.center) to (10.center);
		\draw [semithick, in=-90, out=180] (5.center) to (3.center);
		\draw [semithick, in=180, out=90] (0.center) to (2.center);
		\draw [semithick, in=90, out=0] (11.center) to (7.center);
		\draw [semithick] (2.center) to (11.center);
		\draw [semithick] (7.center) to (4.center);
		\draw [semithick] (10.center) to (5.center);
		\draw [semithick] (3.center) to (0.center);
		\draw (15.center) to (14.center);
		\draw [in=180, out=0, looseness=1.25] (17.center) to (18.center);
		\draw (6.center) to (22.center);
		\draw [bend right=90, looseness=1.75] (20.center) to (14.center);
		\draw [bend left=90, looseness=1.75] (22.center) to (15.center);
		\draw (1.center) to (20.center);
	\end{pgfonlayer}
\end{tikzpicture}
\tikzset{x=1em, y=1.5ex}}


\newcommand{\myeq}[1]{\mathrel{\overset{\makebox[0pt]{\mbox{\normalfont\tiny\sffamily (#1)}}}{=}}}



\newcommand{\N}{\mathbb{N}}




\newcommand{\from}{\mathrel{:}\,}

\newcommand{\Aut}{\mathsf{Aut_{\scriptscriptstyle{\Sigma}}}}

\newcommand{\poi}{\,;\,}

\newcommand{\adjto}{\,\lower1pt\hbox{$\dashv$}\,}

\newcommand{\Rel}{\mathsf{Rel}}

\newcommand{\BProf}{\mathsf{Prof}_\mathbb{B}}

\newcommand{\eqKa}{=_{\scriptscriptstyle KAA}}




\makeatletter
\def\moverlay{\mathpalette\mov@rlay}
\def\mov@rlay#1#2{\leavevmode\vtop{%
\baselineskip\z@skip \lineskiplimit-\maxdimen
\ialign{\hfil$#1##$\hfil\cr#2\crcr}}}
\makeatother

\newcommand\twarr[2]{%
\mathrel{\mathop{\moverlay{\scriptstyle\xrightarrow{\,#1\,}\cr{\lower.2em\hbox{$\scriptstyle{}_{#2}$}}}}}}
\newcommand\twarrw[2]{%
\mathrel{\mathop{\moverlay{\scriptstyle\Longrightarrow\cr{\lower-.6em\hbox{$\scriptstyle{}_{#1}$}}
\cr{\lower.3em\hbox{$\scriptstyle{}_{#2}$}}}}}}

\newcommand{\dtransw}[2]{\raise1pt\hbox{$\;\twarrw{#1}{#2}\;$}}

\newcommand{\atom}[1]{
\begin{tikzpicture}
	\begin{pgfonlayer}{nodelayer}
		\node [style=wrn] (37) at (0.75, 0) {};
		\node [style=none] (44) at (2, 0) {};
		\node [style=none] (45) at (0.75, 1) {{\color{red} $\scriptstyle #1$}};
	\end{pgfonlayer}
	\begin{pgfonlayer}{edgelayer}
		\draw [red] (44.center) to (37);
	\end{pgfonlayer}
\end{tikzpicture}}

\newcommand{\kaId}{
\begin{tikzpicture}
	\begin{pgfonlayer}{nodelayer}
		\node [style=none] (2) at (1.25, 0) {};
		\node [style=none] (4) at (-1.25, 0) {};
	\end{pgfonlayer}
	\begin{pgfonlayer}{edgelayer}
		\draw [red] (4.center) to (2.center);
	\end{pgfonlayer}
\end{tikzpicture}}

\newcommand{\diagregexp}[1]{
\begin{tikzpicture}
	\begin{pgfonlayer}{nodelayer}
		\node [style=none] (0) at (1.5, 0) {};
		\node [style=rcoreg] (1) at (0, 0) {{\color{red} $e$}};
	\end{pgfonlayer}
	\begin{pgfonlayer}{edgelayer}
		\draw [red] (1) to (0.center);
	\end{pgfonlayer}
\end{tikzpicture}}

\newcommand{\smalldiag}[1]{\begin{tikzpicture}
	\begin{pgfonlayer}{nodelayer}
		\node [style=none] (0) at (-0.5, 0.5) {};
		\node [style=none] (1) at (-0.25, 0.75) {};
		\node [style=none] (2) at (-0.5, -0.5) {};
		\node [style=none] (3) at (0.5, -0.5) {};
		\node [style=none] (4) at (-0.25, -0.75) {};
		\node [style=none] (5) at (0.5, 0.5) {};
		\node [style=none] (6) at (1.5, 0) {};
		\node [style=none] (7) at (0.5, 0) {};
		\node [style=none] (8) at (0.25, -0.75) {};
		\node [style=none] (9) at (0.25, 0.75) {};
		\node [style=none] (10) at (0, 0) {$#1$};
		\node [style=none] (11) at (-1.5, 0) {};
		\node [style=none] (12) at (-0.5, 0) {};
	\end{pgfonlayer}
	\begin{pgfonlayer}{edgelayer}
		\draw [in=180, out=0, looseness=1.25] (7.center) to (6.center);
		\draw [semithick, in=0, out=-90] (3.center) to (8.center);
		\draw [semithick, in=-90, out=180] (4.center) to (2.center);
		\draw [semithick, in=180, out=90] (0.center) to (1.center);
		\draw [semithick, in=90, out=0] (9.center) to (5.center);
		\draw [semithick] (1.center) to (9.center);
		\draw [semithick] (5.center) to (3.center);
		\draw [semithick] (8.center) to (4.center);
		\draw [semithick] (2.center) to (0.center);
		\draw [in=180, out=0, looseness=1.25] (11.center) to (12.center);
	\end{pgfonlayer}
\end{tikzpicture}
}

\begin{document}
\title{A String Diagrammatic Axiomatisation \\ of Finite-State Automata}
%
%
\author{
Robin Piedeleu\inst{1}
\and Fabio Zanasi\inst{1}$^\star$(\Letter)
}
\authorrunning{R. Piedeleu and F. Zanasi}
%
\institute{
University College London, UK,
\email{\{r.piedeleu, f.zanasi\}@ucl.ac.uk}
}
\maketitle              
\begin{abstract}
We develop a fully diagrammatic approach to the theory of finite-state automata, based on reinterpreting their usual state-transition graphical representation as a two-dimensional syntax of string diagrams. In this setting, we are able to provide a sound and complete equational theory for language equivalence, with two notable features. First, the proposed axiomatisation is finite--- a result which is provably impossible to obtain for the one-dimensional syntax of regular expressions. Second, the Kleene star is a derived concept, as it can be decomposed into more primitive algebraic blocks.
\keywords{string diagrams \and finite-state automata \and symmetric monoidal category \and complete axiomatisation}
\end{abstract}


\section{Introduction}\label{sec:intro}


Finite-state automata are one of the most studied structures in theoretical computer science, with an illustrious history and roots reaching far beyond, in the work of biologists, psychologists, engineers and mathematicians. Kleene~\cite{kleene1951representation} introduced regular expressions to give finite-state automata an algebraic presentation, motivated by the study of (biological) neural networks~\cite{mcculloch1943logical}. They are the terms freely generated by the following grammar:
\begin{equation}
\label{regexp}
e, f ::= e + f \mid ef \mid e^* \mid 0 \mid 1 \mid a\in A
\end{equation}
Equational properties of regular expressions were studied by Conway~\cite{conway2012regular} who introduced the term \emph{Kleene algebra}: this is an idempotent semiring with an operation $(-)^*$ for iteration, called the (Kleene) star. The equational theory of Kleene algebra is now well-understood, and multiple complete axiomatisations, both for language and relational models, have been given. Crucially, Kleene algebra is not finitely-based: no finite equational theory can appropriately capture the behaviour of the star~\cite{redko1964defining}. Instead, there are purely  equational infinitary axiomatisations~\cite{krob1991complete,bloom1993equational} and Kozen's finitary implicational theory~\cite{kozen1994completeness}.

Since then, much research has been devoted to extending Kleene algebra with additional operations, in order to capture richer patterns of behaviour, useful in program verification. Examples include conditional branching (Kleene algebra with tests~\cite{kozen1997kleene}, and its recent guarded version~\cite{smolka2019guarded}),  concurrent computation (CKA~\cite{hoare2009concurrent,KappeB0Z18}), and specification of message-passing behaviour in networks (NetKAT~\cite{anderson2014netkat}). 	

The meta-theory of the formalisms above essentially rests on the same three ingredients: (1) given an operational model (e.g., finite-state automata), (2) devise a syntax (regular expressions) that is sufficiently expressive to capture the class of behaviours of the operational model (regular languages), and (3) find a complete axiomatisation (Kleene algebra) for the given semantics.

In this paper, we open up a direct path from (1) to (3). Instead of thinking of automata as a combinatorial model, we formalise them as a bona-fide (two-dimensional) syntax, using the well-established mathematical theory of \emph{string diagrams} and monoidal categories~\cite{Selinger2009}. This approach lets us axiomatise the behaviour of automata directly, freeing us from the necessity of compressing them down to a one-dimensional notation like regular expressions. 

This perspective not only sheds new light on a venerable topic, but has significant consequences. First, as our most important contribution, we are able to provide a \emph{finite and purely equational} axiomatisation of finite-state automata, up to language equivalence. Intriguingly, this does not contradict the impossibility of finding a finite basis for Kleene algebra, as the algebraic setting is different: our result gives a finite presentation as a symmetric monoidal category, while the impossibility result prevents any such presentation to exist as an algebraic theory (in the standard sense). In other words, there is no finite axiomatisation based on terms (\emph{tree}-like structures), but we demonstrate that there is one based on string diagrams (\emph{graph}-like structures).

Secondly, embracing the two-dimensional nature of automata guarantees a strong form of compositionality, that the one-dimensional syntax of regular expressions does not have. In the string diagrammatic setting, automata may have multiple inputs and outputs and, as a result, can be decomposed into subcomponents that retain a meaningful interpretation. For example, if we split the automata below left, the resulting components are still valid string diagrams within our syntax, below right:
\begin{equation}\label{ex:decompose-automaton}

\tikzset{x=1em, y=2.1ex}
\InputIfFileExists{ex-automaton-graph-split.tikz}{}{\input{./tikz/ex-automaton-graph-split.tikz}}
\tikzset{x=1em, y=1.5ex}
 \qquad\mapsto\qquad 
\tikzset{x=1em, y=2.1ex}
\InputIfFileExists{ex-automaton-diagram-multiple.tikz}{}{\input{./tikz/ex-automaton-diagram-multiple.tikz}}
\tikzset{x=1em, y=1.5ex}
\quad 
\tikzset{x=1em, y=2.1ex}
\InputIfFileExists{ex-automaton-diagram-multiple-1.tikz}{}{\input{./tikz/ex-automaton-diagram-multiple-1.tikz}}
\tikzset{x=1em, y=1.5ex}

\end{equation}
In line with the compositional approach, it is significant that the Kleene star can be decomposed into more elementary building blocks (which come together to form a feedback loop):
\begin{equation}\label{eq:star-decomposed}
e^* \quad \mapsto\quad 
\tikzset{x=1em, y=2.1ex}
\InputIfFileExists{star-decomposed.tikz}{}{\input{./tikz/star-decomposed.tikz}}
\tikzset{x=1em, y=1.5ex}

\end{equation}
This property opens up for interesting possibilities when studying extensions of Kleene algebra within the same approach. We elaborate further on this in the Discussion (Section~\ref{sec:conclusion}). 

Finally, we believe our proof of completeness is of independent interest, as it relies on fully diagrammatic reformulation of Brzozowski's minimisation procedure~\cite{brzozowski1962canonical}. In the string diagrammatic setting, the symmetries of the equational theory give this procedure a particularly elegant and simple form. Because all of the axioms involved in the determinisation procedure come with a dual, a co-determinisation procedure can be defined immediately by simply reversing the former. This reduces the proof of completeness to determinisation. 

\medskip

We should also note that this is not the first time that automata and regular languages are recast into a categorical mould. The \emph{iteration theories}~\cite{bloom1993iteration} of Bloom and {\'E}sik, the \emph{sharing graphs}~\cite{Hasegawa97recursionfrom} of Hasegawa or the \emph{Network algebras}~\cite{stefanescu2000network} of Stefanescu are all categorical frameworks designed to reason about iteration or recursion, that have found fruitful applications in this domain. They are based on a notion of parameterised fixed-point operation which defines a categorical \emph{trace} in the sense of~\cite{Joyal_tracedcategories}. While our proposal bears resemblance to (and is inspired by) this prior work, it goes beyond in one fundamental aspect: it is the first to give a \emph{finite} complete axiomatisation of automata up to language equivalence. 

A second difference is methodological: our syntax (see~\eqref{eq:syn1} below) does not feature any primitive for iteration or recursion. In particular, the Kleene star is a derivative concept, in the sense that it is decomposable into more elementary operation~\eqref{eq:star-decomposed}. Categorically, our starting point is a compact-closed rather than a traced category. 

 We elaborate further on the relation between our paper and existing work in the Discussion (Section~\ref{sec:conclusion}).

\paragraph{Outline.} Section~\ref{sec:syntax-semantics} lays out the diagrammatic syntax and its semantics. Section~\ref{sec:axioms} introduces the equational theory that we will prove complete. In Section~\ref{sec:encoding} we explain the precise link between our syntax and the traditional formalisms of regular expressions and finite-state automata. We also show how a simple change to the syntax captures context-free languages. Section~\ref{sec:completeness} is dedicated to the proof of completeness. We rely on a normal form argument, which implements Brzozowski's minimisation algorithm equationally and whose main ingredient is a determinisation procedure for diagrams. Omitted proofs can be found in Appendix.

\section{Syntax and semantics}\label{sec:syntax-semantics}

\paragraph{Syntax.} Let us fix an alphabet $\Sigma$ of letters $a \in \Sigma$. We call $\Aut$ the symmetric strict monoidal category freely generated by the following objects and morphisms:
\begin{itemize}
\item three generating objects $\objred$ (`action'), $\objr$ (`right') and $\objl$ (`left') with their identity morphisms depicted respectively as
\begin{equation}\label{eq:syn1}
	\kaId \qquad \arrowright \qquad \arrowleft
\end{equation}
\item the following generating morphisms, depicted as \emph{string diagrams}~\cite{Selinger2009}:
\begin{equation}
\label{eq:syn2}
\begin{gathered}

\tikzset{x=1em, y=2.1ex}
\InputIfFileExists{ka-copy.tikz}{}{\input{./tikz/ka-copy.tikz}}
\tikzset{x=1em, y=1.5ex}
\quad
\tikzset{x=1em, y=2.1ex}
\begin{tikzpicture}
	\begin{pgfonlayer}{nodelayer}
		\node [style=rn] (37) at (0.75, 0) {};
		\node [style=none] (44) at (-0.5, 0) {};
	\end{pgfonlayer}
	\begin{pgfonlayer}{edgelayer}
		\draw [red] (44.center) to (37);
	\end{pgfonlayer}
\end{tikzpicture}
}
\tikzset{x=1em, y=1.5ex}
\quad 
\tikzset{x=1em, y=2.1ex}
\begin{tikzpicture}
	\begin{pgfonlayer}{nodelayer}
		\node [style=wrn] (37) at (0.75, 0) {};
		\node [style=none] (44) at (2, 0) {};
		\node [style=none] (45) at (-0.5, 0) {};
	\end{pgfonlayer}
	\begin{pgfonlayer}{edgelayer}
		\draw [red] (44.center) to (37);
		\draw [red] (45.center) to (37);
	\end{pgfonlayer}
\end{tikzpicture}
}
\tikzset{x=1em, y=1.5ex}
\quad

\tikzset{x=1em, y=2.1ex}
\InputIfFileExists{ka-product.tikz}{}{\input{./tikz/ka-product.tikz}}
\tikzset{x=1em, y=1.5ex}
\quad 
\tikzset{x=1em, y=2.1ex}
\begin{tikzpicture}
	\begin{pgfonlayer}{nodelayer}
		\node [style=wrn] (37) at (0.75, 0) {};
		\node [style=none] (44) at (2, 0) {};
	\end{pgfonlayer}
	\begin{pgfonlayer}{edgelayer}
		\draw [red] (44.center) to (37);
	\end{pgfonlayer}
\end{tikzpicture}
}
\tikzset{x=1em, y=1.5ex}
\quad 
\tikzset{x=1em, y=2.1ex}
\InputIfFileExists{ka-sum.tikz}{}{\input{./tikz/ka-sum.tikz}}
\tikzset{x=1em, y=1.5ex}
 \quad
\tikzset{x=1em, y=2.1ex}
\begin{tikzpicture}
	\begin{pgfonlayer}{nodelayer}
		\node [style=rn] (37) at (0.75, 0) {};
		\node [style=none] (44) at (2, 0) {};
	\end{pgfonlayer}
	\begin{pgfonlayer}{edgelayer}
		\draw [red] (44.center) to (37);
	\end{pgfonlayer}
\end{tikzpicture}
}
\tikzset{x=1em, y=1.5ex}
\quad \atom{a} \quad(a\in\Sigma)
\\

\tikzset{x=1em, y=2.1ex}
\InputIfFileExists{action.tikz}{}{\input{./tikz/action.tikz}}
\tikzset{x=1em, y=1.5ex}
\quad
\tikzset{x=1em, y=2.1ex}
\InputIfFileExists{lr-copy.tikz}{}{\input{./tikz/lr-copy.tikz}}
\tikzset{x=1em, y=1.5ex}
\quad
\tikzset{x=1em, y=2.1ex}
\begin{tikzpicture}
	\begin{pgfonlayer}{nodelayer}
		\node [style=black] (37) at (0.75, 0) {};
		\node [style=none] (43) at (0.25, 0) {};
		\node [style=none] (44) at (-0.5, 0) {};
	\end{pgfonlayer}
	\begin{pgfonlayer}{edgelayer}
		\draw (43.center) to (37);
		\draw [->] (44.center) to (43.center);
	\end{pgfonlayer}
\end{tikzpicture}
}
\tikzset{x=1em, y=1.5ex}
\quad
\tikzset{x=1em, y=2.1ex}
\InputIfFileExists{lr-merge.tikz}{}{\input{./tikz/lr-merge.tikz}}
\tikzset{x=1em, y=1.5ex}
\quad
\tikzset{x=1em, y=2.1ex}
\begin{tikzpicture}
	\begin{pgfonlayer}{nodelayer}
		\node [style=black] (37) at (-0.5, 0) {};
		\node [style=none] (43) at (0.25, 0) {};
		\node [style=none] (44) at (0.75, 0) {};
	\end{pgfonlayer}
	\begin{pgfonlayer}{edgelayer}
		\draw [->] (37) to (43.center);
		\draw (44.center) to (43.center);
	\end{pgfonlayer}
\end{tikzpicture}
}
\tikzset{x=1em, y=1.5ex}
 \quad
\tikzset{x=1em, y=2.1ex}
\InputIfFileExists{cap-down.tikz}{}{\input{./tikz/cap-down.tikz}}
\tikzset{x=1em, y=1.5ex}
 \quad
\tikzset{x=1em, y=2.1ex}
\InputIfFileExists{cup-down.tikz}{}{\input{./tikz/cup-down.tikz}}
\tikzset{x=1em, y=1.5ex}

\end{gathered}
\end{equation}
\end{itemize}
Freely generating $\Aut$ from these data (usually called a \emph{symmetric monoidal theory}~\cite{ZanasiThesis,bonchi2018deconstructing}) means that morphisms of $\Aut$ will be the string diagrams obtained by pasting together (by sequential composition and monoidal product in $\Aut$) the basic components in \eqref{eq:syn1}-\eqref{eq:syn2}, and then quotienting by the laws of symmetric monoidal categories. For instance, \eqref{eq:star-decomposed} is a morphism of $\Aut$ of type $\objr \to \objr$, and $
\tikzset{x=1em, y=2.1ex}
\InputIfFileExists{action-product.tikz}{}{\input{./tikz/action-product.tikz}}
\tikzset{x=1em, y=1.5ex}
$ is one of type $\objred \objred \objr \ \to \ \objr$.

\paragraph{Semantics.}
We first define the semantics for string diagrams simply as a function, and then discuss how to extend it to a functor from $\Aut$ to another category. Our interpretation maps generating morphisms to relations between regular expressions and languages over $\Sigma$:
\begin{gather}
\label{def:ka-star} \nonumber
\sem{{\kaId}} = \left\{((e, e) \mid e \in \RegExp \right\}
\qquad\sem{
\tikzset{x=1em, y=2.1ex}
}
\tikzset{x=1em, y=1.5ex}
} = \left\{(e, e^*) \mid e \in \RegExp \right\}
\\
\label{def:ka-copy} \nonumber
\sem{
\tikzset{x=1em, y=2.1ex}
\InputIfFileExists{ka-copy.tikz}{}{\input{./tikz/ka-copy.tikz}}
\tikzset{x=1em, y=1.5ex}
\;} = \left\{\big(e, (e,e)\big) \mid e \in \RegExp \right\} \qquad \sem{
\tikzset{x=1em, y=2.1ex}
}
\tikzset{x=1em, y=1.5ex}
\;} = \left\{(e, \bullet) \mid e \in \RegExp\right\}
\\
\label{def:ka-product} \nonumber
\sem{
\tikzset{x=1em, y=2.1ex}
\InputIfFileExists{ka-product.tikz}{}{\input{./tikz/ka-product.tikz}}
\tikzset{x=1em, y=1.5ex}
} = \left\{((e, f), ef) \mid e,f \in \RegExp \right\} \quad \sem{\,
\tikzset{x=1em, y=2.1ex}
}
\tikzset{x=1em, y=1.5ex}
} = \left\{(\bullet, 1)\right\}\quad \sem{\,\atom{a}\,} = \big\{(\bullet, a)\big\}
\\
\label{def:ka-sum} \nonumber
\sem{
\tikzset{x=1em, y=2.1ex}
\InputIfFileExists{ka-sum.tikz}{}{\input{./tikz/ka-sum.tikz}}
\tikzset{x=1em, y=1.5ex}
} = \left\{((e, f), e+f) \mid e,f \in \RegExp\right\} \qquad \sem{\,
\tikzset{x=1em, y=2.1ex}
}
\tikzset{x=1em, y=1.5ex}
} = \left\{(\bullet, 0)\right\}
\\
\label{def:lr-copy} \nonumber
\sem{
\tikzset{x=1em, y=2.1ex}
\InputIfFileExists{lr-copy.tikz}{}{\input{./tikz/lr-copy.tikz}}
\tikzset{x=1em, y=1.5ex}
} = \left\{\big(L, (K_1,K_2)\big)\mid L\subseteq K_i,\, i=1,2 \text{ and } L,K_1, K_2 \subseteq \Sigma^{\star} \right\}
\\
\label{def:cup} \nonumber
\quad \sem{
\tikzset{x=1em, y=2.1ex}
}
\tikzset{x=1em, y=1.5ex}
\;} = \left\{(L, \bullet)\mid L \subseteq \Sigma^{\star} \right\} \qquad
\sem{
\tikzset{x=1em, y=2.1ex}
\InputIfFileExists{cup-down.tikz}{}{\input{./tikz/cup-down.tikz}}
\tikzset{x=1em, y=1.5ex}
} = \left\{(\bullet, (L,K)) \mid L\subseteq K \mid L,K \subseteq \Sigma^{\star} \right\}
\\
\label{def:blackid} \nonumber
\sem{\arrowright} = \left\{((L, K), L\subseteq K)\mid L,K \subseteq \Sigma^{\star} \right\}\quad \sem{\arrowleft} = \left\{((L, K), K\subseteq L)\mid L,K \subseteq \Sigma^{\star} \right\}
\\
\label{def:ka-action}
\sem{
\tikzset{x=1em, y=2.1ex}
\InputIfFileExists{action.tikz}{}{\input{./tikz/action.tikz}}
\tikzset{x=1em, y=1.5ex}
} = \left\{((e, L), K)\mid \semreg{e}L\subseteq K \text{ and } e \in \RegExp, L, K \subseteq \Sigma^{\star} \right\}
\end{gather}
and the converse relations for the mirror black generators. In \eqref{def:ka-action}, the semantics $\semreg{e} \in 2^{A^*}$ of a regular expression $e \in \RegExp$ is defined inductively on $e$ (see~\eqref{regexp}), in the standard way:
\begin{align*}
\semreg{e+f} = \semreg{e} \cup \semreg{f} \quad 
\semreg{ef} = \{vw \mid v\in \semreg{e}, w\in \semreg{f}\} \\ 
\semreg{1} = \{\varepsilon\} \qquad \semreg{0} = \varnothing 
\qquad \semreg{a} =  \{a\}   \qquad 
\semreg{e^*} = \bigcup_{n\in\N}\semreg{e}^n \quad 
\end{align*}
where $\semreg{e}^{n+1} := \semreg{e}\cdot \semreg{e}^{n}$ and $\semreg{e}^0 = \{\varepsilon\}$.
The semantics highlights the different roles played by red and black generators. In a nutshell, red generators stand for regular expressions ($
\tikzset{x=1em, y=2.1ex}
\InputIfFileExists{ka-sum.tikz}{}{\input{./tikz/ka-sum.tikz}}
\tikzset{x=1em, y=1.5ex}
$ the sum, $
\tikzset{x=1em, y=2.1ex}
}
\tikzset{x=1em, y=1.5ex}
$ is $0$, $
\tikzset{x=1em, y=2.1ex}
\InputIfFileExists{ka-product.tikz}{}{\input{./tikz/ka-product.tikz}}
\tikzset{x=1em, y=1.5ex}
$ the product, $
\tikzset{x=1em, y=2.1ex}
}
\tikzset{x=1em, y=1.5ex}
$ is $1$, $
\tikzset{x=1em, y=2.1ex}
}
\tikzset{x=1em, y=1.5ex}
$ the Kleene star,  and $\;\atom{a}$ the letters of $\Sigma$), and black generators for operations on the set of languages ($
\tikzset{x=1em, y=2.1ex}
\InputIfFileExists{lr-copy.tikz}{}{\input{./tikz/lr-copy.tikz}}
\tikzset{x=1em, y=1.5ex}
$ is copy, $
\tikzset{x=1em, y=2.1ex}
}
\tikzset{x=1em, y=1.5ex}
$ is delete, $
\tikzset{x=1em, y=2.1ex}
\InputIfFileExists{cup-down.tikz}{}{\input{./tikz/cup-down.tikz}}
\tikzset{x=1em, y=1.5ex}
$ and $
\tikzset{x=1em, y=2.1ex}
\InputIfFileExists{cap-down.tikz}{}{\input{./tikz/cap-down.tikz}}
\tikzset{x=1em, y=1.5ex}
$ feed back outputs into inputs, in a way made more precise later). These two perspectives, which are usually merged, are kept distinct in our approach and only allowed to communicate via 
\tikzset{x=1em, y=2.1ex}
\InputIfFileExists{action.tikz}{}{\input{./tikz/action.tikz}}
\tikzset{x=1em, y=1.5ex}
, which represents the product action of regular expressions (the red wire) on languages. 

In order for this mapping to be functorial from $\Aut$, we now introduce a suitable target semantic category. Interestingly, this will not be the category $\Rel$ of sets and relations: indeed, the identity morphisms $\arrowright$ and $\arrowleft$ are not interpreted as identities of $\Rel$. Instead, the semantic domain will be the category $\BProf$ of \emph{Boolean(-enriched) profunctors}~\cite{fong2018seven} (also called in the literature relational profunctors~\cite{hyland2003glueing} or weakening relations~\cite{moshier2015coherence}).

\begin{definition}\label{def:bool-prof}
Given two preorders $(X, \leq_X)$ and $(Y, \leq_Y)$, a \emph{Boolean profunctor} $R\from X \to Y$ is a relation $R\subseteq X \times Y$ such that if $(x,y)\in R \text{ and } x'\leq_X x,\; y\leq_Y y' \text{ then } (x',y')\in R$.

Preorders and Boolean profunctors form a  symmetric monoidal category $\BProf$ with composition given by relational composition, where the identity for an object $(X, \leq_X)$ is the order relation $\leq_X$ itself, and where the monoidal product is the usual product of preorders. 
\end{definition}

The rich features of our diagrammatic language are reflected in the profunctor interpretation. Indeed, the order relation is built into the wires $\arrowright$ and $\arrowleft$. The two possible directions represent the identities on the ordered set of languages and the same set with the reversed order, respectively. The additional red wire $\kaId$ represents 
the set $\RegExp$ of regular expressions, with \emph{equality} as the associated order relation.\footnote{Note that we can always consider any set with equality as a poset and that, therefore, $\Rel$ is a subcategory of $\BProf$, but not vice-versa, for the simple reason that the identity relation of an arbitrary poset in $\BProf$ is not mapped to the identity relation in $\Rel$.} It is clear that all monochromatic generators satisfy the condition of Definition~\ref{def:bool-prof}. Similarly, the action generator $
\tikzset{x=1em, y=2.1ex}
\InputIfFileExists{action.tikz}{}{\input{./tikz/action.tikz}}
\tikzset{x=1em, y=1.5ex}
$ is a Boolean profunctor: if $((e,L),K)$ are such that $\semreg{e}L\subseteq K$ and $L'\subseteq L$, $K\subseteq K'$ then we have $\semreg{e}L'\subseteq \semreg{e}L\subseteq K\subseteq K'$ by monotony of the product of languages. 
We can conclude that 
\begin{proposition}
The semantics $\sem{\cdot}$ defines a symmetric monoidal functor of type $\Aut \to \BProf$.
\end{proposition}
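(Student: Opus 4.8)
The plan is to invoke the universal property of the free symmetric monoidal category. Since $\Aut$ is freely generated by the objects and morphisms of \eqref{eq:syn1}--\eqref{eq:syn2} (a symmetric monoidal theory~\cite{ZanasiThesis,bonchi2018deconstructing}), to produce a symmetric monoidal functor $\Aut \to \BProf$ it suffices to (i) send each generating object to a preorder and (ii) check that each generating morphism is interpreted as a \emph{Boolean profunctor of the matching type} between the corresponding products of these preorders. The assignment then extends uniquely, with functoriality and monoidality following automatically because $\Aut$ carries no equations beyond the symmetric monoidal laws, which $\BProf$ satisfies by Definition~\ref{def:bool-prof}. Concretely I set $\sem{\objred} = (\RegExp, =)$, $\sem{\objr} = (\powerset(\Sigma^\star), \subseteq)$ and $\sem{\objl} = (\powerset(\Sigma^\star), \supseteq)$, so that $\sem{\objl} = \sem{\objr}^{\op}$, and send a word of generating objects to the product of the assigned preorders. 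This is exactly what forces the wires $\arrowright$, $\arrowleft$ and $\kaId$ to denote the identity profunctors $\subseteq$, $\supseteq$ and $=$, rather than identities of $\Rel$, as noted in the footnote.

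Next I would type-check the generators against Definition~\ref{def:bool-prof}. The red generators have source and target built only from $(\RegExp,=)$, so the down/up-closure condition is vacuous (equality leaves nothing to close under) and they are profunctors trivially; one merely records that each is a relation of the declared arity, which is immediate. For the monochromatic black generators and the action generator $\tikzfig{action}$ the closure condition is already verified in the text. It remains to treat copy $\tikzfig{lr-copy}$, delete $\tikzfig{lr-delete}$ and the compact-closed cup $\tikzfig{cup-down}$ and cap $\tikzfig{cap-down}$: each reduces to a short chase through inclusions once the $\objl$-components are read with the reversed order. For instance, typing the cup as $I \to \objl\,\objr$, if $(\bullet,(L,K))$ lies in the relation (so $L\subseteq K$) and $L'\subseteq L$, $K\subseteq K'$ are the closure directions dictated by the orders on $\objl$ and $\objr$, then $L'\subseteq L\subseteq K\subseteq K'$, whence $(\bullet,(L',K'))$ is again in the relation. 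Fixing these orientations correctly --- in particular the typing of the cup and cap against the dual object $\objl$ --- is the only place that needs genuine care.

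Finally, the mirror black generators are handled uniformly rather than one by one: their interpretation is by definition the converse relation, and the converse of a Boolean profunctor $R\from (X,\leq_X) \to (Y,\leq_Y)$ is a Boolean profunctor $(Y,\leq_Y)^{\op} \to (X,\leq_X)^{\op}$, since swapping arguments exchanges down-closure on the left with up-closure on the right, which is precisely the closure condition for the opposite orders. Because $\sem{\objl} = \sem{\objr}^{\op}$, every mirror generator thereby acquires the correct dual type for free. With all generators shown to land in $\BProf$ with matching types, the universal property of $\Aut$ delivers the symmetric monoidal functor; that it sends composites to relational composites and monoidal products to products of preorders is exactly the composition and tensor of $\BProf$, matching the inductive definition of $\sem{\cdot}$. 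I expect the main obstacle to be purely bookkeeping --- orienting every wire and dualising every cup and cap so that each closure condition comes out the right way --- after which every individual verification is a one-line inclusion chase.
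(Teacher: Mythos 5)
Your proposal is correct and follows essentially the same route as the paper: the paper's (very terse) argument likewise assigns $(\RegExp,=)$, $(\powerset(\Sigma^\star),\subseteq)$ and its opposite to the generating objects, observes that each generating morphism satisfies the closure condition of Definition~\ref{def:bool-prof} (spelling out only the action generator explicitly), and concludes by freeness of $\Aut$. Your additional care with the cup/cap typings against the dual object and the uniform treatment of the mirror generators via converses just makes explicit what the paper leaves as "clear".
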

In particular, because $\Aut$ is free, we can unambiguously assign meaning to any composite diagram from the semantics of its components using composition and the monoidal product in $\BProf$: 
\begin{align*}
\sem{
\tikzset{x=1em, y=2.1ex}
\InputIfFileExists{seq-compose.tikz}{}{\input{./tikz/seq-compose.tikz}}
\tikzset{x=1em, y=1.5ex}
} &= \left\{(L,K)\mid \exists M \, (L,M)\in\sem{\smalldiag{c}}, (M,K)\in\sem{\smalldiag{{\scriptstyle d}}}\right\}\\\sem{
\tikzset{x=1em, y=2.1ex}
\InputIfFileExists{par-compose.tikz}{}{\input{./tikz/par-compose.tikz}}
\tikzset{x=1em, y=1.5ex}
} &= \left\{\big((L_1,L_2),(K_1, K_2)\big)\mid (L_i,K_i)\in\sem{\smalldiag{{\scriptstyle c_i}}},\, i=1,2\right\} 
\end{align*}

\begin{example}
We include here a worked out example to show how to compute the behaviour of a composite diagram which, as we will see, represents the action by concatenation of the regular language $a^*$. 
\begin{equation*}
\sem{
\tikzset{x=1em, y=2.1ex}
\InputIfFileExists{action-star-a.tikz}{}{\input{./tikz/action-star-a.tikz}}
\tikzset{x=1em, y=1.5ex}
} =  \{(L,K)\mid \exists M,N,O\text{ s.t. } \, L, O\subseteq N,\; \semreg{a}M \subseteq O,\; N\subseteq M, K\}
\end{equation*}
where $O$ is a variable name assigned to the top wire of the feedback loop, $M$ to the output wire of the action node, and $N$ is the name assigned to the wire joining $
\tikzset{x=1em, y=2.1ex}
\InputIfFileExists{lr-merge.tikz}{}{\input{./tikz/lr-merge.tikz}}
\tikzset{x=1em, y=1.5ex}
$ to $
\tikzset{x=1em, y=2.1ex}
\InputIfFileExists{lr-copy.tikz}{}{\input{./tikz/lr-copy.tikz}}
\tikzset{x=1em, y=1.5ex}
$. Since $\semreg{a} = \{a\}$ we continue with
\begin{align*}
& =  \{(L,K)\mid \exists M,N,O\text{ s.t. } \, L, O\subseteq N,\; aM \subseteq O,\; N\subseteq M, K\}\\
& =  \{(L,K)\mid \exists M,O\text{ s.t. } \,  aM \subseteq O,\;L, O\subseteq M,\; L,O\subseteq K\}\\
& =  \{(L,K)\mid \exists M\text{ s.t. } \,  aM \subseteq M,\;L\subseteq M,\; L,M\subseteq K\}\\
& =  \{(L,K)\mid \exists M\text{ s.t. } \,  L\cup aM \subseteq M,\; L,M\subseteq K\}\\
& =  \{(L,K)\mid \exists M\text{ s.t. } \,  a^*L \subseteq M,\; L,M\subseteq K\}\\
& =  \{(L,K)\mid\,  a^*L \subseteq K\}
\end{align*}
where the penultimate step is an application of Arden lemma~\cite{arden1961delayed}: $a^*L$ is the least solution of the language inequality $L\cup aX\subseteq X$. 
\end{example}

\section{Equational theory}\label{sec:axioms}

\begin{figure}
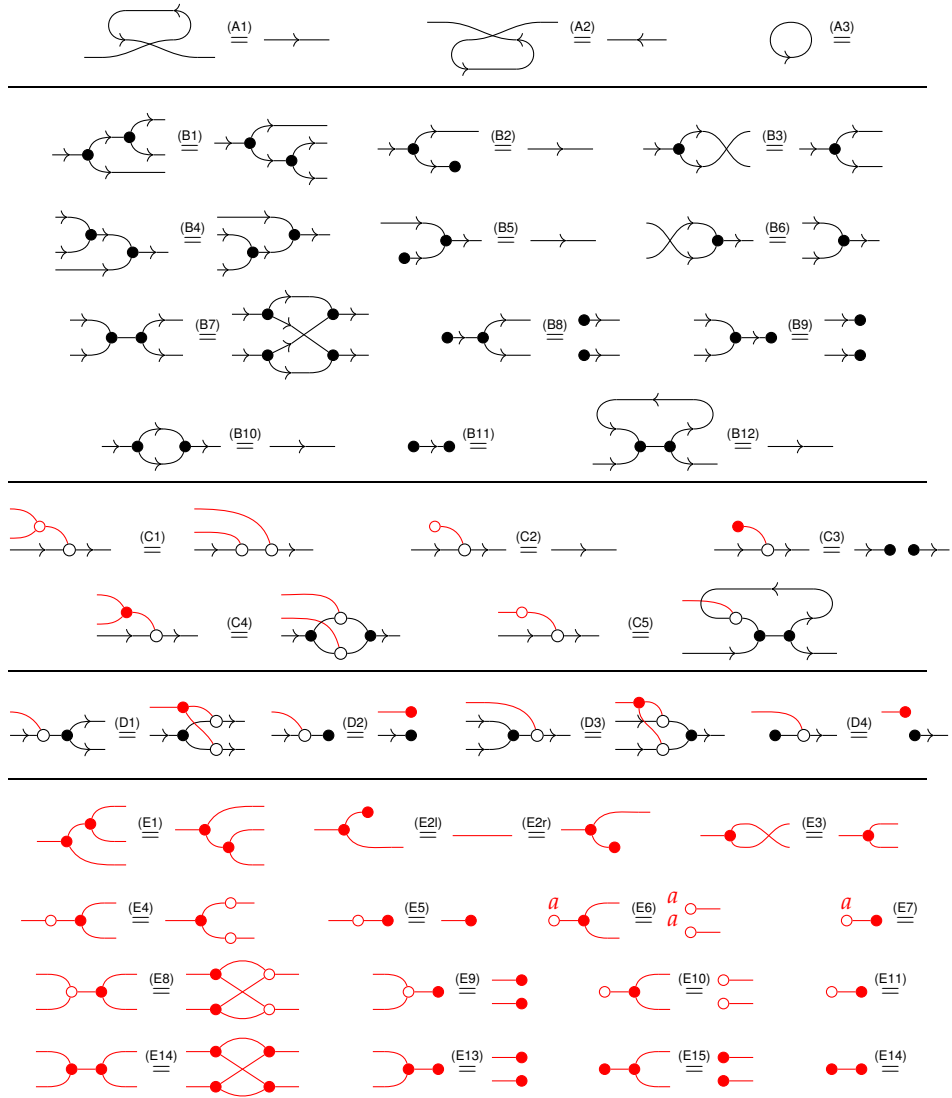

\vspace{-1cm}
\begin{equation*}\label{ax:compact-closed}

\tikzset{x=1em, y=2.1ex}
\InputIfFileExists{yanking-bis.tikz}{}{\input{./tikz/yanking-bis.tikz}}
\tikzset{x=1em, y=1.5ex}
\;\myeq{A1}\;\arrowright\quad \qquad \quad
\tikzset{x=1em, y=2.1ex}
\InputIfFileExists{yanking.tikz}{}{\input{./tikz/yanking.tikz}}
\tikzset{x=1em, y=1.5ex}
\;\myeq{A2}\;\arrowleft\quad \qquad \quad 
\tikzset{x=1em, y=2.1ex}
\InputIfFileExists{loop.tikz}{}{\input{./tikz/loop.tikz}}
\tikzset{x=1em, y=1.5ex}
\;\myeq{A3} 
\end{equation*}

\hrule

\begin{equation*}\label{ax:rel}

\tikzset{x=1em, y=2.1ex}
\InputIfFileExists{co-associativity.tikz}{}{\input{./tikz/co-associativity.tikz}}
\tikzset{x=1em, y=1.5ex}
 \; \myeq{B1} \; 
\tikzset{x=1em, y=2.1ex}
\InputIfFileExists{co-associativity-1.tikz}{}{\input{./tikz/co-associativity-1.tikz}}
\tikzset{x=1em, y=1.5ex}
 \qquad 
\tikzset{x=1em, y=2.1ex}
\InputIfFileExists{right-co-unitality.tikz}{}{\input{./tikz/right-co-unitality.tikz}}
\tikzset{x=1em, y=1.5ex}
 \; \myeq{B2} \; \arrowright  \qquad 
\tikzset{x=1em, y=2.1ex}
\InputIfFileExists{co-commutativity.tikz}{}{\input{./tikz/co-commutativity.tikz}}
\tikzset{x=1em, y=1.5ex}
\; \myeq{B3}\; 
\tikzset{x=1em, y=2.1ex}
\InputIfFileExists{large-copy.tikz}{}{\input{./tikz/large-copy.tikz}}
\tikzset{x=1em, y=1.5ex}

\end{equation*}
\begin{equation*}

\tikzset{x=1em, y=2.1ex}
\InputIfFileExists{associativity.tikz}{}{\input{./tikz/associativity.tikz}}
\tikzset{x=1em, y=1.5ex}
 \; \myeq{B4} \; 
\tikzset{x=1em, y=2.1ex}
\InputIfFileExists{associativity-1.tikz}{}{\input{./tikz/associativity-1.tikz}}
\tikzset{x=1em, y=1.5ex}
 \qquad  
\tikzset{x=1em, y=2.1ex}
\InputIfFileExists{right-unitality.tikz}{}{\input{./tikz/right-unitality.tikz}}
\tikzset{x=1em, y=1.5ex}
\; \myeq{B5} \; \arrowright \qquad 
\tikzset{x=1em, y=2.1ex}
\InputIfFileExists{commutativity.tikz}{}{\input{./tikz/commutativity.tikz}}
\tikzset{x=1em, y=1.5ex}
\; \myeq{B6}\; 
\tikzset{x=1em, y=2.1ex}
\InputIfFileExists{large-merge.tikz}{}{\input{./tikz/large-merge.tikz}}
\tikzset{x=1em, y=1.5ex}

\end{equation*}
\begin{equation*}

\tikzset{x=1em, y=2.1ex}
\InputIfFileExists{bimonoid.tikz}{}{\input{./tikz/bimonoid.tikz}}
\tikzset{x=1em, y=1.5ex}
\; \myeq{B7} \;
\tikzset{x=1em, y=2.1ex}
\InputIfFileExists{bimonoid-1.tikz}{}{\input{./tikz/bimonoid-1.tikz}}
\tikzset{x=1em, y=1.5ex}
 \qquad \quad
\tikzset{x=1em, y=2.1ex}
\InputIfFileExists{copy-co-delete.tikz}{}{\input{./tikz/copy-co-delete.tikz}}
\tikzset{x=1em, y=1.5ex}
\; \myeq{B8} \;
\tikzset{x=1em, y=2.1ex}
\InputIfFileExists{copy-co-delete-1.tikz}{}{\input{./tikz/copy-co-delete-1.tikz}}
\tikzset{x=1em, y=1.5ex}
\qquad\quad
\tikzset{x=1em, y=2.1ex}
\InputIfFileExists{merge-delete.tikz}{}{\input{./tikz/merge-delete.tikz}}
\tikzset{x=1em, y=1.5ex}
\; \myeq{B9} \;
\tikzset{x=1em, y=2.1ex}
\InputIfFileExists{merge-delete-1.tikz}{}{\input{./tikz/merge-delete-1.tikz}}
\tikzset{x=1em, y=1.5ex}

\end{equation*}
\begin{equation*}
 
\tikzset{x=1em, y=2.1ex}
\InputIfFileExists{idempotence.tikz}{}{\input{./tikz/idempotence.tikz}}
\tikzset{x=1em, y=1.5ex}
\; \myeq{B10} \;\arrowright \qquad \quad 
\tikzset{x=1em, y=2.1ex}
\InputIfFileExists{bone.tikz}{}{\input{./tikz/bone.tikz}}
\tikzset{x=1em, y=1.5ex}
\; \myeq{B11} \quad \quad \qquad 
\tikzset{x=1em, y=2.1ex}
\InputIfFileExists{feedback.tikz}{}{\input{./tikz/feedback.tikz}}
\tikzset{x=1em, y=1.5ex}
\; \myeq{B12} \; \arrowright
\end{equation*}

\hrule

\begin{equation*}\label{ax:action}

\tikzset{x=1em, y=2.1ex}
\InputIfFileExists{action-product.tikz}{}{\input{./tikz/action-product.tikz}}
\tikzset{x=1em, y=1.5ex}
 \quad \myeq{C1} \quad 
\tikzset{x=1em, y=2.1ex}
\InputIfFileExists{action-product-1.tikz}{}{\input{./tikz/action-product-1.tikz}}
\tikzset{x=1em, y=1.5ex}
 \qquad\qquad  
\tikzset{x=1em, y=2.1ex}
\InputIfFileExists{action-1.tikz}{}{\input{./tikz/action-1.tikz}}
\tikzset{x=1em, y=1.5ex}
 \ \myeq{C2} \ \arrowright \qquad\qquad 
\tikzset{x=1em, y=2.1ex}
\InputIfFileExists{action-0.tikz}{}{\input{./tikz/action-0.tikz}}
\tikzset{x=1em, y=1.5ex}
 \  \myeq{C3} \ 
\tikzset{x=1em, y=2.1ex}
\InputIfFileExists{delete-zero.tikz}{}{\input{./tikz/delete-zero.tikz}}
\tikzset{x=1em, y=1.5ex}

\end{equation*}
\begin{equation*}

\tikzset{x=1em, y=2.1ex}
\InputIfFileExists{action-sum.tikz}{}{\input{./tikz/action-sum.tikz}}
\tikzset{x=1em, y=1.5ex}
 \quad \myeq{C4} \quad 
\tikzset{x=1em, y=2.1ex}
\InputIfFileExists{action-sum-1.tikz}{}{\input{./tikz/action-sum-1.tikz}}
\tikzset{x=1em, y=1.5ex}
 \qquad\qquad  
\tikzset{x=1em, y=2.1ex}
\InputIfFileExists{action-star.tikz}{}{\input{./tikz/action-star.tikz}}
\tikzset{x=1em, y=1.5ex}
 \quad \myeq{C5} \quad 
\tikzset{x=1em, y=2.1ex}
\InputIfFileExists{action-star-1.tikz}{}{\input{./tikz/action-star-1.tikz}}
\tikzset{x=1em, y=1.5ex}

\end{equation*}

\hrule

\begin{equation*}\label{ax:action-homomorphism}

\tikzset{x=1em, y=2.1ex}
\InputIfFileExists{action-copy.tikz}{}{\input{./tikz/action-copy.tikz}}
\tikzset{x=1em, y=1.5ex}
 \  \myeq{D1} \  
\tikzset{x=1em, y=2.1ex}
\InputIfFileExists{action-copy-1.tikz}{}{\input{./tikz/action-copy-1.tikz}}
\tikzset{x=1em, y=1.5ex}
\quad  
\tikzset{x=1em, y=2.1ex}
\InputIfFileExists{action-delete.tikz}{}{\input{./tikz/action-delete.tikz}}
\tikzset{x=1em, y=1.5ex}
 \  \myeq{D2} \  
\tikzset{x=1em, y=2.1ex}
\InputIfFileExists{action-delete-1.tikz}{}{\input{./tikz/action-delete-1.tikz}}
\tikzset{x=1em, y=1.5ex}

\qquad

\tikzset{x=1em, y=2.1ex}
\InputIfFileExists{action-merge.tikz}{}{\input{./tikz/action-merge.tikz}}
\tikzset{x=1em, y=1.5ex}
 \ \myeq{D3} \ 
\tikzset{x=1em, y=2.1ex}
\InputIfFileExists{action-merge-1.tikz}{}{\input{./tikz/action-merge-1.tikz}}
\tikzset{x=1em, y=1.5ex}
\quad  
\tikzset{x=1em, y=2.1ex}
\InputIfFileExists{action-generate.tikz}{}{\input{./tikz/action-generate.tikz}}
\tikzset{x=1em, y=1.5ex}
 \ \myeq{D4} \ 
\tikzset{x=1em, y=2.1ex}
\InputIfFileExists{action-generate-1.tikz}{}{\input{./tikz/action-generate-1.tikz}}
\tikzset{x=1em, y=1.5ex}

\end{equation*}

\hrule

\begin{equation*}\label{ax:regexp}

\tikzset{x=1em, y=2.1ex}
\InputIfFileExists{ka-copy-co-assoc.tikz}{}{\input{./tikz/ka-copy-co-assoc.tikz}}
\tikzset{x=1em, y=1.5ex}
\; \myeq{E1}\; 
\tikzset{x=1em, y=2.1ex}
\InputIfFileExists{ka-copy-co-assoc-1.tikz}{}{\input{./tikz/ka-copy-co-assoc-1.tikz}}
\tikzset{x=1em, y=1.5ex}
\qquad 
\tikzset{x=1em, y=2.1ex}
\InputIfFileExists{ka-left-co-unit.tikz}{}{\input{./tikz/ka-left-co-unit.tikz}}
\tikzset{x=1em, y=1.5ex}
\; \myeq{E2l}\; \kaId \; \myeq{E2r}\; 
\tikzset{x=1em, y=2.1ex}
\InputIfFileExists{ka-right-co-unit.tikz}{}{\input{./tikz/ka-right-co-unit.tikz}}
\tikzset{x=1em, y=1.5ex}
\qquad 
\tikzset{x=1em, y=2.1ex}
\InputIfFileExists{ka-co-commut.tikz}{}{\input{./tikz/ka-co-commut.tikz}}
\tikzset{x=1em, y=1.5ex}
\;\myeq{E3}\;
\tikzset{x=1em, y=2.1ex}
\InputIfFileExists{ka-copy.tikz}{}{\input{./tikz/ka-copy.tikz}}
\tikzset{x=1em, y=1.5ex}

\end{equation*}
\begin{equation*}

\tikzset{x=1em, y=2.1ex}
\InputIfFileExists{ka-copy-star.tikz}{}{\input{./tikz/ka-copy-star.tikz}}
\tikzset{x=1em, y=1.5ex}
\; \myeq{E4}\; 
\tikzset{x=1em, y=2.1ex}
\InputIfFileExists{ka-copy-star-1.tikz}{}{\input{./tikz/ka-copy-star-1.tikz}}
\tikzset{x=1em, y=1.5ex}
\qquad\quad 
\tikzset{x=1em, y=2.1ex}
\InputIfFileExists{ka-star-delete.tikz}{}{\input{./tikz/ka-star-delete.tikz}}
\tikzset{x=1em, y=1.5ex}
\; \myeq{E5}\; 
\tikzset{x=1em, y=2.1ex}
\InputIfFileExists{ka-delete.tikz}{}{\input{./tikz/ka-delete.tikz}}
\tikzset{x=1em, y=1.5ex}
 \qquad\quad 
\tikzset{x=1em, y=2.1ex}
\InputIfFileExists{ka-atom-copy.tikz}{}{\input{./tikz/ka-atom-copy.tikz}}
\tikzset{x=1em, y=1.5ex}
\; \myeq{E6}\; 
\tikzset{x=1em, y=2.1ex}
\InputIfFileExists{ka-atom-copy-1.tikz}{}{\input{./tikz/ka-atom-copy-1.tikz}}
\tikzset{x=1em, y=1.5ex}
\qquad\quad\qquad 
\tikzset{x=1em, y=2.1ex}
\InputIfFileExists{ka-atom-delete.tikz}{}{\input{./tikz/ka-atom-delete.tikz}}
\tikzset{x=1em, y=1.5ex}
\; \myeq{E7}
\end{equation*}
\begin{equation*}

\tikzset{x=1em, y=2.1ex}
\InputIfFileExists{ka-copy-product.tikz}{}{\input{./tikz/ka-copy-product.tikz}}
\tikzset{x=1em, y=1.5ex}
\; \myeq{E8}\; 
\tikzset{x=1em, y=2.1ex}
\InputIfFileExists{ka-copy-product-1.tikz}{}{\input{./tikz/ka-copy-product-1.tikz}}
\tikzset{x=1em, y=1.5ex}
\qquad\quad 
\tikzset{x=1em, y=2.1ex}
\InputIfFileExists{ka-product-delete.tikz}{}{\input{./tikz/ka-product-delete.tikz}}
\tikzset{x=1em, y=1.5ex}
\; \myeq{E9}\; 
\tikzset{x=1em, y=2.1ex}
\InputIfFileExists{ka-product-delete-1.tikz}{}{\input{./tikz/ka-product-delete-1.tikz}}
\tikzset{x=1em, y=1.5ex}
\qquad \quad
\tikzset{x=1em, y=2.1ex}
\InputIfFileExists{ka-1-copy.tikz}{}{\input{./tikz/ka-1-copy.tikz}}
\tikzset{x=1em, y=1.5ex}
\; \myeq{E10}\; 
\tikzset{x=1em, y=2.1ex}
\InputIfFileExists{ka-1-copy-1.tikz}{}{\input{./tikz/ka-1-copy-1.tikz}}
\tikzset{x=1em, y=1.5ex}
\qquad\quad 
\tikzset{x=1em, y=2.1ex}
\InputIfFileExists{ka-1-delete.tikz}{}{\input{./tikz/ka-1-delete.tikz}}
\tikzset{x=1em, y=1.5ex}
\; \myeq{E11}
\end{equation*}
\begin{equation*}

\tikzset{x=1em, y=2.1ex}
\InputIfFileExists{ka-sum-copy.tikz}{}{\input{./tikz/ka-sum-copy.tikz}}
\tikzset{x=1em, y=1.5ex}
\; \myeq{E14}\; 
\tikzset{x=1em, y=2.1ex}
\InputIfFileExists{ka-sum-copy-1.tikz}{}{\input{./tikz/ka-sum-copy-1.tikz}}
\tikzset{x=1em, y=1.5ex}
\qquad\quad 
\tikzset{x=1em, y=2.1ex}
\InputIfFileExists{ka-sum-delete.tikz}{}{\input{./tikz/ka-sum-delete.tikz}}
\tikzset{x=1em, y=1.5ex}
\; \myeq{E13}\; 
\tikzset{x=1em, y=2.1ex}
\InputIfFileExists{ka-sum-delete-1.tikz}{}{\input{./tikz/ka-sum-delete-1.tikz}}
\tikzset{x=1em, y=1.5ex}
\qquad \quad
\tikzset{x=1em, y=2.1ex}
\InputIfFileExists{ka-0-copy.tikz}{}{\input{./tikz/ka-0-copy.tikz}}
\tikzset{x=1em, y=1.5ex}
\; \myeq{E15}\; 
\tikzset{x=1em, y=2.1ex}
\InputIfFileExists{ka-0-copy-1.tikz}{}{\input{./tikz/ka-0-copy-1.tikz}}
\tikzset{x=1em, y=1.5ex}
\qquad\quad 
\tikzset{x=1em, y=2.1ex}
\InputIfFileExists{ka-0-delete.tikz}{}{\input{./tikz/ka-0-delete.tikz}}
\tikzset{x=1em, y=1.5ex}
\; \myeq{E14}
\end{equation*}
\caption{Equational theory $\eqKa$ of Kleene action algebra.}\label{fig:axioms}
\vspace{-0.5cm}
\end{figure}

In Figure~\ref{fig:axioms} we introduce $\eqKa$, the (finite) equational theory of \emph{Kleene Action Algebra}, on $\Aut$. It will be later shown to be \emph{complete} for the given semantics. We explain some salient features of $\eqKa$ below.

\begin{itemize}
\item (A1)-(A2) relate $
\tikzset{x=1em, y=2.1ex}
\InputIfFileExists{cap-down.tikz}{}{\input{./tikz/cap-down.tikz}}
\tikzset{x=1em, y=1.5ex}
$ and $
\tikzset{x=1em, y=2.1ex}
\InputIfFileExists{cup-down.tikz}{}{\input{./tikz/cup-down.tikz}}
\tikzset{x=1em, y=1.5ex}
$, allowing us to bend and straighten wires at will. This makes the full subcategory of $\Aut$ on $\objr$ and $\objl$, modulo (A1)-(A2), a \emph{compact closed} category~\cite{kelly1980compactclosed}. (A3) allows us to eliminate isolated loops. Note that the whole category is not compact closed because $\objred$ does not have a dual.
\item The B block states that $
\tikzset{x=1em, y=2.1ex}
\InputIfFileExists{lr-copy.tikz}{}{\input{./tikz/lr-copy.tikz}}
\tikzset{x=1em, y=1.5ex}
,
\tikzset{x=1em, y=2.1ex}
}
\tikzset{x=1em, y=1.5ex}
$ forms a cocommutative comonoid (B1)-(B3), while $
\tikzset{x=1em, y=2.1ex}
\InputIfFileExists{lr-merge.tikz}{}{\input{./tikz/lr-merge.tikz}}
\tikzset{x=1em, y=1.5ex}
,
\tikzset{x=1em, y=2.1ex}
}
\tikzset{x=1em, y=1.5ex}
$ form a commutative monoid (B4)-(B6). Moreover, $
\tikzset{x=1em, y=2.1ex}
\InputIfFileExists{lr-copy.tikz}{}{\input{./tikz/lr-copy.tikz}}
\tikzset{x=1em, y=1.5ex}
,
\tikzset{x=1em, y=2.1ex}
}
\tikzset{x=1em, y=1.5ex}
,
\tikzset{x=1em, y=2.1ex}
\InputIfFileExists{lr-merge.tikz}{}{\input{./tikz/lr-merge.tikz}}
\tikzset{x=1em, y=1.5ex}
,
\tikzset{x=1em, y=2.1ex}
}
\tikzset{x=1em, y=1.5ex}
$ form an idempotent bimonoid (B7)-(B11). (B12)  allows us to eliminate trivial feedback loops.

\item The C block axiomatises the behaviour of the action of regular expressions on languages. These laws mimic the usual definition of the action of a semiring on a set, except for (C5) which is novel and captures the interaction with the Kleene star. Here lies a distinctive feature of our theory: the behaviour of the Kleene star is derived from its decomposition as the feedback loop on the right of (C5). 
\item The D block forces the action to be a comonoid ((D1)-(D2)) and monoid ((D1)-(D2)) homomorphism.
\item The E block axiomatises the purely red fragment. Remarkably, these axioms do not describe any of the actual Kleene algebra structure: they just state that $
\tikzset{x=1em, y=2.1ex}
\InputIfFileExists{ka-copy.tikz}{}{\input{./tikz/ka-copy.tikz}}
\tikzset{x=1em, y=1.5ex}
$ and $
\tikzset{x=1em, y=2.1ex}
}
\tikzset{x=1em, y=1.5ex}
$ form a commutative comonoid ((E1)-(E3)) and that all other red generators are comonoid homomorphisms ((E4)-(E15)). This means that the red fragment is actually the \emph{free} (cartesian) algebraic theory (\emph{cf.}~\cite{ZanasiThesis,bonchi2018deconstructing}) on generators $
\tikzset{x=1em, y=2.1ex}
}
\tikzset{x=1em, y=1.5ex}
,

\tikzset{x=1em, y=2.1ex}
\InputIfFileExists{ka-product.tikz}{}{\input{./tikz/ka-product.tikz}}
\tikzset{x=1em, y=1.5ex}
, 
\tikzset{x=1em, y=2.1ex}
}
\tikzset{x=1em, y=1.5ex}
, 
\tikzset{x=1em, y=2.1ex}
\InputIfFileExists{ka-sum.tikz}{}{\input{./tikz/ka-sum.tikz}}
\tikzset{x=1em, y=1.5ex}
, 
\tikzset{x=1em, y=2.1ex}
}
\tikzset{x=1em, y=1.5ex}
, 
\atom{a} (a\in\Sigma)$, where the remaining generators $
\tikzset{x=1em, y=2.1ex}
\InputIfFileExists{ka-copy.tikz}{}{\input{./tikz/ka-copy.tikz}}
\tikzset{x=1em, y=1.5ex}
$ and $
\tikzset{x=1em, y=2.1ex}
}
\tikzset{x=1em, y=1.5ex}
$ act as copy and discard of variables. 
\end{itemize}

Let $\eqKa$ be the smallest equational theory containing all equations in Fig.~\ref{fig:axioms}. Their \emph{soundness} for the chosen semantics is not difficult to show and, for space reasons, we omit the proof. We now state our \emph{completeness} result, whose proof will be discussed in Section~\ref{sec:completeness}.

\begin{theorem}[Completeness]\label{thm:completeness}
For morphisms $d$, $e$ in $\Aut$
, $d \eqKa e$ if and only if $\sem{d} = \sem{e}$. 
\end{theorem}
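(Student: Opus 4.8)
The plan is to dispatch the ``only if'' direction by soundness---a routine check that each axiom of Figure~\ref{fig:axioms} is validated by the profunctor interpretation---and to concentrate on the ``if'' direction, completeness, which is the substantial one. For completeness I would argue via a canonical normal form: isolate a class of diagrams in \emph{deterministic automaton form}, prove that every morphism of $\Aut$ is $\eqKa$-provably equal to one such diagram, and prove that two diagrams in this form with equal semantics are already identical up to the symmetry bookkeeping that $\eqKa$ absorbs through the symmetric monoidal laws. Since $\sem{d} = \sem{e}$ then forces $d$ and $e$ to share a normal form, $d \eqKa e$ would follow.

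\textbf{Reduction to automaton form.} First I would use the compact-closed equations (A1)--(A2) together with the (co)monoid and bimonoid laws of the B block to rewrite an arbitrary morphism into a layered shape in which the black structure marshals the wires into a ``carrier'' of languages while the red and action structure records transition data. Concretely, the action-homomorphism laws (D1)--(D4) let one slide each action node through the surrounding copies and merges, so that all transitions can be gathered into a single block of action nodes flanked by copy and merge trees, with the feedback introduced by (C5) encoding iteration. The output is a diagram that is recognisably a nondeterministic automaton: wires are states, action nodes are labelled transitions, and the loop realises the Kleene star.

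\textbf{Determinisation.} The technical core is a lemma stating that any diagram in automaton form is $\eqKa$-equal to one in \emph{deterministic} form, obtained by carrying out the subset construction equationally. Diagrammatically, each step of the subset construction is a local rewrite that copies a state wire, fires the relevant transitions, and merges the results; the decisive ingredients are idempotence (B10), which lets duplicated branches be reabsorbed, and the distribution of the action over copy and merge (D1)--(D3), which lets transitions be pushed across the branching structure. I expect this determinisation lemma to be the \emph{main obstacle}: one must organise these local moves into a procedure that provably terminates and genuinely yields a deterministic diagram, checking that every rewrite is underwritten by the finite axiom set rather than by a semantic shortcut.

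\textbf{Co-determinisation and Brzozowski.} Here the symmetry of $\eqKa$ pays off. Every generator and every axiom used in determinisation has a mirror dual in the theory---the reversal swapping $\objr \leftrightarrow \objl$, copy with merge, and cap with cup---so a \emph{co-determinisation} lemma comes for free by reversing the determinisation argument verbatim. Composing the two yields a diagrammatic form of Brzozowski's algorithm: reverse, determinise, reverse, determinise. By the classical analysis of Brzozowski~\cite{brzozowski1962canonical}, the result is the reachable part of the minimal deterministic automaton recognising the relevant language, hence a canonical representative of its language class. Finally, two diagrams with $\sem{d} = \sem{e}$ reduce to minimal automata for the same language; minimal deterministic automata are unique up to isomorphism, and any such state relabelling is realised by the symmetry wires already governed by the laws of symmetric monoidal categories, so $d \eqKa e$, completing the proof.
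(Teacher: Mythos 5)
Your proposal is correct and follows essentially the same route as the paper: reduce to a trace-of-matrix ``automaton form'' (the paper's notion of representation, Proposition~\ref{thm:traceform}), take equational determinisation as the technical core (Proposition~\ref{thm:deterministic-rep}, driven by (D1) and the global copy/merge laws of Theorem~\ref{thm:copy-merge}), obtain co-determinisation for free from the self-duality of $\eqKa$, and conclude via Brzozowski's reverse--determinise--reverse--determinise with uniqueness of the minimal DFA. You have correctly identified the determinisation lemma as the main obstacle, which is exactly where the paper invests its effort in the appendix.
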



\begin{remark}\label{rmk:order}
Some axiomatisations of Kleene algebra make use of a partial order between terms, which can be defined from the idempotent monoid structure: $f\leq e$ iff $e+f = e$. At the semantic level, it corresponds to inclusion of languages. Similarly, using the idempotent bimonoid structure of our equational theory, we can define a partial order on $\objr\to\objr$ diagrams: $f\leq e$ iff $
\tikzset{x=1em, y=2.1ex}
\InputIfFileExists{sum.tikz}{}{\input{./tikz/sum.tikz}}
\tikzset{x=1em, y=1.5ex}
= \scalar{e}$.
This partial order structure can also be extended to all morphisms $\objr^n\to\objr^m$ by using the vertical composition of $n$ copies of $
\tikzset{x=1em, y=2.1ex}
\InputIfFileExists{lr-copy.tikz}{}{\input{./tikz/lr-copy.tikz}}
\tikzset{x=1em, y=1.5ex}
$ and $m$ copies of $
\tikzset{x=1em, y=2.1ex}
\InputIfFileExists{lr-merge.tikz}{}{\input{./tikz/lr-merge.tikz}}
\tikzset{x=1em, y=1.5ex}
$ instead. 
\end{remark}
\begin{remark}\label{rmk:different-monoid}
There are no specific equations relating the atomic actions $\atom{a} \,(a\in\Sigma)$. This is because, as we study finite-state automata, we are interested in the \emph{free} monoid $\Sigma^*$ over $\Sigma$. However, nothing would prevent us from modelling other structures. For instance, free commutative monoids (powers of $\N$), whose rational subsets correspond to semilinear sets~\cite[Chapter 11]{conway2012regular}, would be of particular interest.
\end{remark}


\section{Encoding regular expressions and automata}\label{sec:encoding}

A major appeal of our approach is that both regular expressions and automata can be uniformly represented in the graphical language of string diagrams, and the translation of one into the other becomes an equational derivation in $\eqKa$. In fact, we will see there is a close resemblance between automata and the shape of the string diagrams interpreting them --- the main difference being that string diagrams are \emph{composable} structures. 

In this section we describe how regular expressions (resp. automata) can be encoded as string diagrams, such that their semantics corresponds in a precise way to the languages that they describe (resp. recognise). 
 
 To define these encodings, it is convenient to introduce the following syntactic sugar. For any regular expression $e$, one may always construct a `red' string diagram $\diagregexp{e} \colon 0 \to \objred$ such that  $\sem{\,\diagregexp{e}} = \{(\bullet,e)\}$. We will write $\scalar{e}$ for its composite with the action, as defined below left, with the particular case of a letter $a\in\Sigma$ on the right:
\begin{equation}\label{def:scalar}
\scalar{e} := 
\tikzset{x=1em, y=2.1ex}
\InputIfFileExists{regexp-action.tikz}{}{\input{./tikz/regexp-action.tikz}}
\tikzset{x=1em, y=1.5ex}
 \qquad\qquad\qquad \scalar{a} := 
\tikzset{x=1em, y=2.1ex}
\InputIfFileExists{a-action.tikz}{}{\input{./tikz/a-action.tikz}}
\tikzset{x=1em, y=1.5ex}

\end{equation} 

\subsection{From regular expressions to string diagrams}

In a sense, regular expressions are already part of the graphical syntax, as the red generators. However, these alone are meaningless, since their image under the semantics is simply the free term algebra $\RegExp$ (see~\eqref{eq:regexpstar}) . They acquire meaning as they \emph{act} on the black wire, whose semantics is the set of languages over $\Sigma$. Using this action, we can inductively define an encoding $\transreg{-}$ of regular expressions into string diagrams of $\Aut$, as follows:  
\begin{align*}
\transreg{e+f}&\; =\;  
\tikzset{x=1em, y=2.1ex}
\InputIfFileExists{sum-action.tikz}{}{\input{./tikz/sum-action.tikz}}
\tikzset{x=1em, y=1.5ex}
  \myeq{C4}_{\scriptscriptstyle KAA}  
\tikzset{x=1em, y=2.1ex}
\InputIfFileExists{sum.tikz}{}{\input{./tikz/sum.tikz}}
\tikzset{x=1em, y=1.5ex}
\quad &\transreg{0} \;=\; 
\tikzset{x=1em, y=2.1ex}
\InputIfFileExists{action-0.tikz}{}{\input{./tikz/action-0.tikz}}
\tikzset{x=1em, y=1.5ex}
   \myeq{C3}_{\scriptscriptstyle KAA}  
\tikzset{x=1em, y=2.1ex}
\begin{tikzpicture}
	\begin{pgfonlayer}{nodelayer}
		\node [style=black] (0) at (0.5, 0) {};
		\node [style=black] (1) at (-0.5, 0) {};
		\node [style=none] (2) at (-1.75, 0) {};
		\node [style=none] (3) at (1.75, 0) {};
	\end{pgfonlayer}
	\begin{pgfonlayer}{edgelayer}
		\draw (1) to (2.center);
		\draw (3.center) to (0);
	\end{pgfonlayer}
\end{tikzpicture}
}
\tikzset{x=1em, y=1.5ex}
  \nonumber\\
\transreg{ef}&\;= \;
\tikzset{x=1em, y=2.1ex}
\InputIfFileExists{product-action.tikz}{}{\input{./tikz/product-action.tikz}}
\tikzset{x=1em, y=1.5ex}
    \myeq{C1}_{\scriptscriptstyle KAA}   
\tikzset{x=1em, y=2.1ex}
\InputIfFileExists{product.tikz}{}{\input{./tikz/product.tikz}}
\tikzset{x=1em, y=1.5ex}
  &\transreg{1} \;=\; 
\tikzset{x=1em, y=2.1ex}
\InputIfFileExists{action-1.tikz}{}{\input{./tikz/action-1.tikz}}
\tikzset{x=1em, y=1.5ex}
   \myeq{C2}_{\scriptscriptstyle KAA}   
\tikzset{x=1em, y=2.1ex}
}
\tikzset{x=1em, y=1.5ex}
 \nonumber
\end{align*}
\begin{equation}\label{eq:regexpstar}
\transreg{e^*}\; =\;
\tikzset{x=1em, y=2.1ex}
\InputIfFileExists{star-action.tikz}{}{\input{./tikz/star-action.tikz}}
\tikzset{x=1em, y=1.5ex}
   \myeq{C5}_{\scriptscriptstyle KAA}  
\tikzset{x=1em, y=2.1ex}
\InputIfFileExists{star.tikz}{}{\input{./tikz/star.tikz}}
\tikzset{x=1em, y=1.5ex}
  \quad\; \transreg{a} \;=\; 
\tikzset{x=1em, y=2.1ex}
\InputIfFileExists{a-action.tikz}{}{\input{./tikz/a-action.tikz}}
\tikzset{x=1em, y=1.5ex}
  =:   \scalar{a}  
\end{equation}
For example, $\transreg{ab(a+ab)^*} \; =$
\begin{equation}\label{ex:regexp-diagram}

\tikzset{x=1em, y=2.1ex}
\InputIfFileExists{ex-regexp-red.tikz}{}{\input{./tikz/ex-regexp-red.tikz}}
\tikzset{x=1em, y=1.5ex}
 \;\eqKa\; 
\tikzset{x=1em, y=2.1ex}
\InputIfFileExists{ex-regexp.tikz}{}{\input{./tikz/ex-regexp.tikz}}
\tikzset{x=1em, y=1.5ex}

\end{equation}

\noindent As expected, the translation preserves the language semantics of regular expressions in a sense that the following proposition makes precise. 
\begin{proposition}\label{thm:diagram-regexp}
For any regular expression $e$, $\sem{\transreg{e}} = \left\{(L,K)\mid \semreg{e} L \,  \subseteq K\right\}$.
\end{proposition}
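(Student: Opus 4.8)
The proof of Proposition~\ref{thm:diagram-regexp} proceeds by induction on the structure of the regular expression $e$, following the grammar in~\eqref{regexp}. The plan is to show that for each syntactic constructor, the semantics $\sem{\transreg{e}}$ computes to the set $\{(L,K) \mid \semreg{e}L \subseteq K\}$, using the inductive definition of $\transreg{-}$ in~\eqref{eq:regexpstar} together with the semantics of the individual generators given in Section~\ref{sec:syntax-semantics}. The base cases are $\transreg{a} = \scalar{a}$, $\transreg{0}$, and $\transreg{1}$: for these I would directly unfold the definitions. For instance $\sem{\scalar{a}}$ composes $\sem{\atom{a}} = \{(\bullet, a)\}$ with $\sem{\tikzfig{action}} = \{((e,L),K) \mid \semreg{e}L \subseteq K\}$, yielding $\{(L,K) \mid \semreg{a}L = aL \subseteq K\}$, which matches since $\semreg{a} = \{a\}$. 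The cases $0$ and $1$ are checked similarly using $\semreg{0} = \varnothing$ and $\semreg{1} = \{\varepsilon\}$, so that $\semreg{0}L = \varnothing \subseteq K$ always holds and $\semreg{1}L = L \subseteq K$.

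For the inductive steps I would treat sum, product, and star. In each case the key observation is that the red diagram $\diagregexp{e}$ satisfies $\sem{\diagregexp{e}} = \{(\bullet, e)\}$, so the composite $\scalar{e} = \transreg{e}$ always has the form $\{(L,K) \mid \semreg{e}L \subseteq K\}$ once the red component is interpreted and the action node is applied. Concretely, for $e+f$, the diagram $\transreg{e+f}$ feeds the regular expressions $e$ and $f$ into the red sum generator $\tikzfig{ka-sum}$ before acting; composing semantics gives $\{(L,K) \mid \semreg{e+f}L \subseteq K\}$, and since $\semreg{e+f} = \semreg{e} \cup \semreg{f}$ we have $(\semreg{e}\cup\semreg{f})L = \semreg{e}L \cup \semreg{f}L \subseteq K$ as required. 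The product case uses $\semreg{ef}L = \semreg{e}(\semreg{f}L)$ together with the semantics of $\tikzfig{ka-product}$. Throughout, the inductive hypothesis is available for the red subexpressions, but note that because the red fragment's semantics is just the free term algebra $\RegExp$ (the image of $\diagregexp{e}$ is literally the pair $(\bullet, e)$), the induction is quite direct: the semantic content lives entirely in the final action node.

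The most delicate case is the Kleene star $\transreg{e^*}$, which is defined via the feedback-loop decomposition on the right of~(C5). Here I cannot simply read off the semantics of a single action node; instead I must compute the relational composition through the feedback loop formed by $\tikzfig{cap-down}$, $\tikzfig{cup-down}$, $\tikzfig{lr-copy}$, and $\tikzfig{lr-merge}$. This is exactly the computation carried out in the worked Example for $a^*$, where the penultimate step invokes Arden's lemma~\cite{arden1961delayed} to identify $a^*L$ as the least solution of $L \cup aX \subseteq X$. I would generalise that calculation: the loop imposes the existential constraint $\exists M$ such that $L \cup \semreg{e}M \subseteq M$ and $L, M \subseteq K$, and Arden's lemma (with $\semreg{e}$ in place of $\{a\}$) yields that the least such $M$ is $\semreg{e}^* L = \semreg{e^*}L$, collapsing the condition to $\semreg{e^*}L \subseteq K$. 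The main obstacle is therefore this star case: one must justify that Arden's lemma applies with an arbitrary language $\semreg{e}$ rather than a single letter, and that the least-solution characterisation correctly eliminates the existential quantifier over the loop variable $M$. Since $\semreg{e^*} = \bigcup_n \semreg{e}^n$ is precisely the Kleene closure, and language inequalities of the form $L \cup \semreg{e}X \subseteq X$ have least solution $\semreg{e}^*L$ by the same fixed-point argument used for the single-letter case, this generalisation is routine once the diagrammatic semantics of the feedback loop is unfolded as in the Example.
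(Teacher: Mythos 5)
Your proof has the right skeleton, and your star case coincides with the paper's: unfold the feedback loop into the constraint $\exists N.\ L\cup\semreg{e}N\subseteq N,\ L\subseteq N\subseteq K$ and apply Arden's lemma with an arbitrary language $\semreg{e}$ in place of a single letter. Where you diverge is in the sum and product cases: you compute the semantics of the representative in which the whole regular expression is assembled in the red fragment and fed into a \emph{single} action node, so the result drops out of $\sem{\,\diagregexp{e}\,}=\{(\bullet,e)\}$ plus the semantics of the action generator, with no genuine use of the induction hypothesis. The paper instead computes the semantics of the \emph{decomposed} diagrams on the right of~\eqref{eq:regexpstar} --- the relational composite $\sem{\transreg{e}}\poi\sem{\transreg{f}}$ for the product, and the copy/two-actions/merge composite for the sum --- using the inductive hypothesis on $\sem{\transreg{e}}$ and $\sem{\transreg{f}}$. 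Two consequences of your shortcut are worth flagging. First, the two representatives are distinct morphisms of the free category $\Aut$, identified only by (C1)--(C5); since $\sem{-}$ is defined on $\Aut$ before any quotient, transferring your computation to the decomposed form silently presupposes soundness of the C-axioms (asserted but not proved in the paper). Second, if you are permitted to reason on the single-action-node representative, you should do so uniformly: then $\transreg{e^*}$ is also a single action node fed by $\diagregexp{e^*}$, its semantics is equally immediate, and Arden's lemma is never needed --- invoking it for the star but not for sum and product shows you are switching representatives mid-proof. The statement the paper actually proves (and the one needed for the automaton correspondence of Section~\ref{sec:encoding}) concerns the decomposed, Thompson-style diagrams, so the cleaner route is the paper's: compute all three composites directly, with Arden's lemma doing the work only in the star case.
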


\subsection{From automata to string diagrams...}\label{sec:automata-diag}

Example~\eqref{ex:regexp-diagram} suggests that the string diagram $\transreg{e}$ corresponding to a regular expression $e$ looks a lot like a nondeterministic finite-state automaton (NFA) for $e$. In fact, the translation $\transreg{-}$ can be seen as the diagrammatic counterpart of Thompson's construction~\cite{thompson1968programming} that builds an NFA from a given regular expression. 

Instead of starting from regular expressions, one may translate NFA into string diagrams directly. There are at least two ways to do that. The first is to encode an NFA as the diagrammatic counterpart of its transition relation. The second is to translate directly its combinatorial representation as a graph into the diagrammatic syntax. 

\paragraph{Encoding the transition relation.} This is a simple variant of the  translation of matrices over semirings that has appeared in several places in the literature~\cite{Lack2004a,ZanasiThesis}. 

Let $A$ be an NFA with set of states $Q$, initial state $q_0\in Q$, accepting states $F\subseteq Q$ and transition relation $\delta\subseteq Q\times \Sigma\times Q$.  
We can represent $\delta$
as a string diagram $d$ with $|Q|$ incoming wires on the left and $|Q|$ outgoing wires on the right.
 The left $j$th port of $d$ is connected to the $i$th port on the right through an $\scalar{a}$ whenever $(q_i,a,q_j)\in\delta$. To accommodate nondeterminism, when the same two ports are connected by several different letters of $\Sigma$, we join these using 
\tikzset{x=1em, y=2.1ex}
\InputIfFileExists{lr-copy.tikz}{}{\input{./tikz/lr-copy.tikz}}
\tikzset{x=1em, y=1.5ex}
 and 
\tikzset{x=1em, y=2.1ex}
\InputIfFileExists{lr-merge.tikz}{}{\input{./tikz/lr-merge.tikz}}
\tikzset{x=1em, y=1.5ex}
. When $(q_i, \epsilon, q_j)\in\delta$, the two ports are simply connected via a plain identity wire. If there is no tuple in $\delta$ such that $(q_i, a, q_j)\in\delta$ for any $a$, the two corresponding ports are disconnected.

\noindent\begin{minipage}{0.7\textwidth}
	  For example, the transition relation of an NFA with three states and $\delta = \{((q_0, a, q_1), (q_1, b, q_2), (q_2, a, q_1), (q_2, a, q_2))\} $ (disregarding the initial and accepting states for the moment) is depicted on the right. Conversely, given such a diagram, we can recover $\delta$ by collecting $\Sigma$-weighted paths from left to right ports.
\end{minipage}
\begin{minipage}{0.3\textwidth}
	\[d = 
\tikzset{x=1em, y=2.1ex}
\InputIfFileExists{ex-matrix-nfa.tikz}{}{\input{./tikz/ex-matrix-nfa.tikz}}
\tikzset{x=1em, y=1.5ex}
\]
\end{minipage}

To deal with the initial state, we add an additional incoming wire connected to the right port corresponding to the initial state of the automaton. Similarly, for accepting states we

\noindent\begin{minipage}{0.7\textwidth} add an additional outgoing wire, connected to the left ports corresponding to each accepting state, via 
\tikzset{x=1em, y=2.1ex}
\InputIfFileExists{lr-merge.tikz}{}{\input{./tikz/lr-merge.tikz}}
\tikzset{x=1em, y=1.5ex}
 if there is more than one. Finally, we trace out the $|Q|$ wires of the diagrammatic transition relation to obtain the associated string diagram. In other words, for a NFA with initial state  $q_0$, set of accepting states $F$, transition relation $\delta$, we obtain the string diagram on the right, where  $d$ is the diagrammatic counterpart
\end{minipage}
\begin{minipage}{0.3\textwidth}
\[
\tikzset{x=1em, y=2.1ex}
\InputIfFileExists{nfa-rep.tikz}{}{\input{./tikz/nfa-rep.tikz}}
\tikzset{x=1em, y=1.5ex}
\]
\end{minipage}	
of $\delta$ as defined above, $e_0$ is the injection of a single wire as the first amongst $|Q|$ wires,
and $f$ deletes all wires that are not associated to states in $F$ with $
\tikzset{x=1em, y=2.1ex}
}
\tikzset{x=1em, y=1.5ex}
$, and applies $
\tikzset{x=1em, y=2.1ex}
\InputIfFileExists{lr-merge.tikz}{}{\input{./tikz/lr-merge.tikz}}
\tikzset{x=1em, y=1.5ex}
$ to merge them into a single outgoing wire.

For example, if $A$ with $\delta$ as above has initial state $q_0$ and accepting state $\{q_2\}$, we get the diagram below left; instead, if all states are accepting, we obtain the diagram below right:
\begin{equation*}\label{eq:nfa-reps}

\tikzset{x=1em, y=2.1ex}
\InputIfFileExists{ex-nfa-rep-single-accepting.tikz}{}{\input{./tikz/ex-nfa-rep-single-accepting.tikz}}
\tikzset{x=1em, y=1.5ex}
 \qquad\quad  
\tikzset{x=1em, y=2.1ex}
\InputIfFileExists{ex-nfa-rep-multiple-accepting.tikz}{}{\input{./tikz/ex-nfa-rep-multiple-accepting.tikz}}
\tikzset{x=1em, y=1.5ex}

\end{equation*}
The correctness of this simple translation is justified by a semantic correspondence between the language recognised by a given NFA $A$ and the denotation of the corresponding string diagram. 
\begin{proposition}\label{thm:nfa-to-diag}
Given an NFA $A$ which recognises the language $L$, let $d_A$ be its associated string diagram, constructed as above. Then $\sem{d_A} = \left\{(K,K')\mid LK \subseteq K'\right\}$. 
\end{proposition}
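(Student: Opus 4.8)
\emph{Approach.} The plan is to compute $\sem{d_A}$ directly from the compositional rules for $\sem{\cdot}$, reducing the trace to an existentially quantified system of language inclusions and then solving that system by a least-fixpoint argument, exactly as in the worked example for $\sem{\tikzfig{action-star-a}}$. Throughout write $n=|Q|$ and let $M$ be the $n\times n$ matrix with entries $M_{ij}=\{a\in\Sigma\mid (q_i,a,q_j)\in\delta\}$ (an $\varepsilon$-transition contributing a bare identity wire). The first step is to record the semantics of the transition diagram $d\colon\objr^n\to\objr^n$. Unfolding the interpretations of $\scalar a$, $\tikzfig{lr-copy}$ and $\tikzfig{lr-merge}$, and using $\sem{\scalar a}=\{(L,K)\mid aL\subseteq K\}$ from Proposition~\ref{thm:diagram-regexp}, one gets that $(\vec K,\vec K')\in\sem{d}$ iff $\bigcup_j M_{ij}K_j\subseteq K'_i$ for every $i$; that is, $d$ realises the action $\vec K\mapsto M\vec K$ of the transition matrix in the Boolean-profunctor sense. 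This is the standard string-diagrammatic encoding of matrices over a semiring, so the step is routine.

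Next I would attach the boundary maps $e_0$ and $f$ and take the trace. Tracing the $n$ state wires introduces, via the cups and caps, an existentially quantified vector of languages $\vec X=(X_1,\dots,X_n)$ carried by the feedback wires; the $\subseteq$-orientation of $\tikzfig{cup-down}$ and $\tikzfig{cap-down}$ forces each $X_i$ to be a \emph{pre-fixpoint}, i.e.\ to contain both the transition contributions $\bigcup_j M_{ij}X_j$ and the boundary contribution routed to state $i$ by $e_0$. Writing $\vec b(K)$ for the vector supplied by $e_0$ from the external input $K$, and letting $f$ collect the designated components into the external output, the composite unfolds to
\[
\sem{d_A}=\bigl\{(K,K')\mid \exists\,\vec X.\ \vec b(K)\cup M\vec X\subseteq\vec X\ \text{ and }\ f(\vec X)\subseteq K'\bigr\}.
\]
Since $f(\vec X)$ is monotone in $\vec X$, the existential is satisfiable precisely when the output constraint already holds for the \emph{least} pre-fixpoint; this is the diagrammatic analogue of the penultimate (Arden) step in the worked example.

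The crux is then to identify this least solution. By a vectorial form of Arden's lemma (equivalently, Knaster--Tarski together with Kleene iteration, both available because every operation involved is monotone on the lattice of language-vectors), the least $\vec X$ with $\vec b(K)\cup M\vec X\subseteq\vec X$ is $M^*\vec b(K)$, where the matrix star $M^*$ has $(i,j)$-entry equal to the language $\{w\mid q_i\xrightarrow{w}q_j\}$ of all words labelling paths in $A$ from $q_i$ to $q_j$ (the empty path accounting for the reflexive part and for $\varepsilon$-transitions). Because $e_0$ and $f$ supply exactly the initial and final vectors of $A$, reading $M^*\vec b(K)$ through $f$ yields $\left(\bigcup_{q_j\in F}\{w\mid q_0\xrightarrow{w}q_j\}\right)K=LK$. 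Substituting back gives $\sem{d_A}=\{(K,K')\mid LK\subseteq K'\}$, which is the claim; the two inclusions amount to ``$LK\subseteq K'$ makes the least $\vec X$ a valid witness'' and ``any witness dominates $M^*\vec b(K)$''.

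The main obstacle is this fixpoint step: one must justify that the existential over pre-fixpoints collapses to the least fixpoint (monotonicity of $f(\vec X)$ and existence of least solutions via Knaster--Tarski), and that the entries of $M^*$ are exactly the path languages of $A$, which is the matrix generalisation of the Arden lemma already used in the text. A secondary, more bureaucratic difficulty is the wire bookkeeping through the compact-closed trace: one has to track precisely which feedback wire carries which $X_i$ and with which $\subseteq$-orientation, so that $e_0$ and $f$ are matched to the correct (initial/final) states and the recognised language $L$---rather than, say, that of the reversed automaton---is extracted. Care with this bookkeeping, rather than any deep new idea, is what makes the argument go through.
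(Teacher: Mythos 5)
Your proposal is correct and follows essentially the same route as the paper's proof: both compute the semantics of the traced transition-matrix diagram as an existentially quantified system of inclusions, collapse the existential to the least solution $\mathbf{D}^*\mathbf{K}$ (resp.\ $M^*\vec b(K)$) via the matrix form of Arden's lemma, and then read off $LK$ through the initial/final boundary maps. The only cosmetic difference is that the paper first evaluates the trace of the bare transition diagram and composes with $e_0$ and $f$ afterwards, whereas you fold $e_0$ into the fixpoint equation from the start.
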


\paragraph{From graphs to string diagrams.} The second way of translating automata into string diagrams mimics more directly the combinatorial representation of automata. The idea (which should be sufficiently intuitive to not need to be made formal here) is, for each state, to use 
\tikzset{x=1em, y=2.1ex}
\InputIfFileExists{lr-merge.tikz}{}{\input{./tikz/lr-merge.tikz}}
\tikzset{x=1em, y=1.5ex}
 to represent incoming edges, and  
\tikzset{x=1em, y=2.1ex}
\InputIfFileExists{lr-copy.tikz}{}{\input{./tikz/lr-copy.tikz}}
\tikzset{x=1em, y=1.5ex}
 to represent outgoing edges. As above, labels $a \in A$ will be modelled using $\scalar{a}$. For example, the graph and the associated string diagram corresponding with the NFA above are
\begin{equation}\label{eq:ex-automaton-diagram}

\tikzset{x=1em, y=2.1ex}
\InputIfFileExists{ex-automaton-graph.tikz}{}{\input{./tikz/ex-automaton-graph.tikz}}
\tikzset{x=1em, y=1.5ex}
 \quad \mapsto \quad 
\tikzset{x=1em, y=2.1ex}
\InputIfFileExists{ex-automaton-diagram.tikz}{}{\input{./tikz/ex-automaton-diagram.tikz}}
\tikzset{x=1em, y=1.5ex}

\end{equation}
Note the initial state of the automaton corresponds to the left interface of the string diagram, and the accepting state to the right interface. As before, when there are multiple accepting states, they all connect to a single right interface, via $
\tikzset{x=1em, y=2.1ex}
\InputIfFileExists{lr-merge.tikz}{}{\input{./tikz/lr-merge.tikz}}
\tikzset{x=1em, y=1.5ex}
$. For example, if we make all states accepting in the automaton above, we get the following diagrammatic representation:
\begin{align*}

\tikzset{x=1em, y=2.1ex}
\InputIfFileExists{ex-automaton-graph-multiple-accept.tikz}{}{\input{./tikz/ex-automaton-graph-multiple-accept.tikz}}
\tikzset{x=1em, y=1.5ex}
\quad & \mapsto \quad 
\tikzset{x=1em, y=2.1ex}
\InputIfFileExists{ex-automaton-diagram-multiple-accept.tikz}{}{\input{./tikz/ex-automaton-diagram-multiple-accept.tikz}}
\tikzset{x=1em, y=1.5ex}

\end{align*}


\subsection{...and back}

The previous discussion shows how NFAs can be seen as string diagrams of type $\objr\to \objr$. The converse is also true: we now show how to extract an automaton from any string diagram $d \colon \objr\to \objr$, such that the language the automaton recognises matches the denotation of $d$.

In order to phrase this correspondence formally, we need to introduce some terminology. We call \emph{left-to-right} those string diagrams whose domain and codomain contain only $\objr$, i.e. their type is of the form $\objr^{n} \to \objr^{m}$. The idea is that, in any such string diagram, the $n$ left interfaces act as \emph{inputs} of the computation, and the $m$ right interfaces act as \emph{outputs}. For instance, \eqref{eq:ex-automaton-diagram} is a left-to-right diagram $\objr \to \objr$.

A string diagram $d$ is \emph{atomic} if the only red generators occurring in $d$ are of the form $\atom{a}$. By \emph{unfolding} all red components $\diagregexp{e}$ in any left-to-right diagram, using axioms (C1)-(C5), we can prove the following statement.
\begin{proposition}\label{thm:atomic}
Any left-to-right diagram is $\eqKa$-equivalent to an atomic one.
\end{proposition}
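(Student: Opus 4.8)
The plan is to eliminate every non-atomic red generator from $d$, turning it into an atomic diagram, by exploiting that a \emph{left-to-right} diagram $d \colon \objr^n \to \objr^m$ has no red wires on its boundary. Consequently every occurrence of a red generator sits inside a \emph{closed} red subcomputation whose only interface with the rest of the diagram is through action nodes $\tikzfig{action}$ (and their mirrors), across which a regular expression acts on a language. In other words, every compound regular expression assembled from the red generators $\tikzfig{ka-star}$, $\tikzfig{ka-product}$, $\tikzfig{ka-1}$, $\tikzfig{ka-sum}$, $\tikzfig{ka-0}$, together with $\tikzfig{ka-copy}$ and $\tikzfig{ka-delete}$, is eventually consumed by an action. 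I would proceed in two phases: first use the E-block to resolve red copying and discarding, then use the C-block to expand each remaining action.

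\emph{Phase 1 (red normalisation).} As remarked in the discussion of the E-block, axioms (E1)--(E15) make the red fragment the \emph{free cartesian} theory on the constant generators $\atom{a}$, $\tikzfig{ka-1}$, $\tikzfig{ka-0}$ and the operations $\tikzfig{ka-product}$, $\tikzfig{ka-sum}$, $\tikzfig{ka-star}$, with $\tikzfig{ka-copy}$ and $\tikzfig{ka-delete}$ behaving as duplication and discarding of variables. I would invoke this to push every red copy and delete toward the sources: a red wire that is only discarded can be erased together with the subdiagram generating it (E5, E7, E9, E11, E13, E14), while a red wire that is copied can have the copy propagated through any upstream operation (E4, E6, E8, E10, E14, E15), duplicating the relevant subexpression. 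Freeness of the cartesian structure guarantees that this rewriting is well-defined and that each surviving red wire carries a uniquely determined regular expression $e \in \RegExp$. The upshot is that $d$ is $\eqKa$-equal to a diagram in which all red material is packaged into single-use subdiagrams $\diagregexp{e} \colon 0 \to \objred$ feeding directly into action nodes, i.e.\ into occurrences of $\scalar{e}$, with no stray $\tikzfig{ka-copy}$, $\tikzfig{ka-delete}$, $\tikzfig{ka-1}$ or $\tikzfig{ka-0}$ remaining outside such a package.

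\emph{Phase 2 (action expansion).} Now I expand each $\scalar{e}$ by induction on the structure of $e$, exactly as in the translation $\transreg{-}$ displayed in \eqref{eq:regexpstar} and the equations preceding it. The outermost constructor of $e$ is removed by the matching axiom: (C2) deletes an action by $1$, (C3) rewrites an action by $0$ into black structure, (C1) splits $\scalar{ef}$ into consecutive actions by $e$ and $f$, (C4) rewrites $\scalar{e+f}$ into a copy, two parallel actions, and a merge, and (C5) replaces $\scalar{e^*}$ by a feedback loop wrapped around $\scalar{e}$. Each such step strictly decreases the total number of product, sum, star, $0$ and $1$ symbols occurring in the expressions in front of actions, so the procedure is well-founded and terminates once only atomic actions $\scalar{a}$ remain. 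Since the black structure introduced by these rules (copies, merges, and the cups and caps forming the feedback loops) contains no red generators, and $\scalar{a}$ contains only the atom $\atom{a}$, the resulting diagram is atomic.

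The main obstacle is the bookkeeping of Phase 1. The C-axioms apply only when a compound regular expression feeds \emph{directly} into an action, so one must justify that copies and deletes of red wires can always be moved out of the way and that shared or discarded subexpressions are handled coherently. This is precisely what the free cartesian structure of the red fragment delivers, so the cleanest route is to cite that structure (as established by the E-block, \emph{cf.}~\cite{ZanasiThesis,bonchi2018deconstructing}) rather than argue termination of the copy/delete pushing by hand: a naive generator count need not decrease, since pushing a copy through a binary red generator duplicates a subdiagram. Once Phase 1 is in place, Phase 2 is a routine structural induction with the evident well-founded measure, and any actions appearing in mirrored form are dealt with by the duals of (C1)--(C5).
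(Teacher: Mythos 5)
Your proposal is correct and follows essentially the same route as the paper, which proves this proposition simply by unfolding all red components $\diagregexp{e}$ via (C1)--(C5); your Phase~2 is exactly that unfolding, with the same well-founded measure implicit in the translation $\transreg{-}$. Your Phase~1 merely makes explicit a step the paper leaves implicit (and delegates to the freeness of the cartesian red fragment and to Lemma~\ref{thm:copy-regexp}), namely that the E-block lets every closed red subdiagram be reorganised into single-use packages $\scalar{e}$ before the C-axioms are applied.
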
 
For instance, the string diagram on the left of \eqref{ex:regexp-diagram} is $\eqKa$-equivalent to the atomic one on the right.

We call \emph{block} of a certain subset of generators a vertical composite of these generators followed by some permutations of the wires. 
\begin{definition}\label{def:matrix-diagram}
A \emph{matrix-diagram} is a left-to-right diagram that factors as a block of $
\tikzset{x=1em, y=2.1ex}
\InputIfFileExists{lr-copy.tikz}{}{\input{./tikz/lr-copy.tikz}}
\tikzset{x=1em, y=1.5ex}
, 
\tikzset{x=1em, y=2.1ex}
}
\tikzset{x=1em, y=1.5ex}
$, followed by a block of $\scalar{a} (a\in \Sigma)$ and finally, a block of $
\tikzset{x=1em, y=2.1ex}
\InputIfFileExists{lr-merge.tikz}{}{\input{./tikz/lr-merge.tikz}}
\tikzset{x=1em, y=1.5ex}
, 
\tikzset{x=1em, y=2.1ex}
}
\tikzset{x=1em, y=1.5ex}
$. 
\end{definition}
To each matrix-diagram $d$ we can associate a unique transition relation $\delta$ by gathering paths from each input to each output: $(q_i, a, q_j)\in\delta$ if there is $\scalar{a}$ joining the $i$th input to the $j$th output.

\noindent \begin{minipage}{0.7\textwidth} 
A transition relation is \emph{$\epsilon$-free} if it does not contain the empty word. It is \emph{deterministic} if it is $\epsilon$-free and, for each $i$ and each $a\in\Sigma$ there is at most one $j$ such that $(q_i, a, q_j)\in\delta$. We will apply these terms to matrix-diagrams and the associated transition relation inter-
\end{minipage}
\begin{minipage}{0.3\textwidth}
\begin{center}

\tikzset{x=1em, y=2.1ex}
\InputIfFileExists{ex-matrix-nfa-blocks.tikz}{}{\input{./tikz/ex-matrix-nfa-blocks.tikz}}
\tikzset{x=1em, y=1.5ex}

\end{center}
\end{minipage}
changeably. The example of Section~\ref{sec:automata-diag} above, with the three blocks highlighted, is a matrix-diagram. It is $\epsilon$-free but not deterministic since there are two $a$-labelled transitions starting from the third input.

Given a matrix-diagram $d\from \objr^{l+n}\to \objr^{p+m}$, we will write $d_{ij}$, with $i=l,n$ and $j=p,m$, for the subdiagrams corresponding to the appropriate submatrices.
\begin{definition}\label{def:representation}
For any left-to-right diagram $d\from \objr^{n}\to \objr^{m}$, a \emph{representation} is a matrix-diagram $\hat{d}\from \objr^{l+n}\to \objr^{l+m}$, such that
$\diagbox{d}{n}{m}\; = \; \traceform{\hat{d}}{n}{m}{l}$ and
$\hat{d}_{ll}$, $\hat{d}_{nl}$ are $\epsilon$-free. It is a \emph{deterministic representation} if moreover $\hat{d}_{ll}$ is deterministic.
\end{definition}
For example, given the string diagram below on the left, the one on the right is a representation for it, whose highlighted matrix-diagram is the same as above. 
\begin{equation}

\tikzset{x=1em, y=2.1ex}
\InputIfFileExists{ex-automaton-diagram.tikz}{}{\input{./tikz/ex-automaton-diagram.tikz}}
\tikzset{x=1em, y=1.5ex}
\quad\eqKa
\tikzset{x=1em, y=2.1ex}
\InputIfFileExists{ex-rep.tikz}{}{\input{./tikz/ex-rep.tikz}}
\tikzset{x=1em, y=1.5ex}

\end{equation}
We will refer to the associated matrix-diagram $\hat{d}$ as the \emph{transition matrix} of a given representation.
From a $\objr\to \objr$ diagram with a representation $\hat{d}\from \objr^{l+1}\to \objr^{l+1}$, we can construct an NFA from its transition matrix $\hat{d}$ as follows: 
\begin{itemize}
\item its state set is $Q = \{q_1, \dots, q_l\}$, i.e., there is one state for each wire of $\hat{d}_{ll}$; 
\item its transition relation built from $\hat{d}_{ll}$ as described above;
\item its initial states $Q_0$ are those $q_i$ for which there exists an index $j$ such that the $ij$th coefficient of $\hat{d}_{1l}$ is non-zero (and therefore $\epsilon$);
\item its final states $F$ are those $q_j$ for which there exists an index $i$ such that the $ij$th coefficient of $\hat{d}_{l1}$ is non-zero (and therefore $\epsilon$);   
\end{itemize}
The construction above is the inverse of that of Section~\ref{sec:automata-diag}.
Moreover the connection between the constructed automaton and the original string diagram is summarised in the following statement, which is a straightforward corollary of Proposition~\ref{thm:nfa-to-diag}.
\begin{proposition}\label{thm:diagram-nfa}
For a diagram $d\from \objr\to \objr$ with a representation $\hat{d}$, let $A_{\hat{d}}$ be the associated automaton, constructed as above. Then $\hat{L}$ is the language recognised by $A_{\hat{d}}$ iff $\sem{d} = \left\{(K,K')\mid \hat{L}K \subseteq K'\right\}$. 
\end{proposition}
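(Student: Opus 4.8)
The plan is to reduce the statement to Proposition~\ref{thm:nfa-to-diag}, exploiting that the passage from a representation to its automaton and the encoding of Section~\ref{sec:automata-diag} are mutually inverse. First I would spell out the observation already made above, namely that constructing $A_{\hat{d}}$ from the transition matrix $\hat{d}$ is exactly the inverse of the ``encoding the transition relation'' recipe: the $l$ wires of $\hat{d}_{ll}$ are the states, the $\scalar{a}$-labelled paths of $\hat{d}_{ll}$ are the transitions, and the blocks $\hat{d}_{1l}$, $\hat{d}_{l1}$ --- which are $\epsilon$-free by Definition~\ref{def:representation}, hence made of plain wires --- encode respectively the initial-wire injection $e_0$ and the accepting-wire merge $f$ appearing in the construction of Proposition~\ref{thm:nfa-to-diag}. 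Consequently, feeding $A_{\hat{d}}$ back through that construction returns precisely $\hat{d}$ traced over its $l$ state-wires.

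Second, since $\hat{d}$ is a representation of $d$, Definition~\ref{def:representation} gives that $d$ equals this traced matrix-diagram as a morphism of $\Aut$, so the diagram $d_{A_{\hat{d}}}$ associated to $A_{\hat{d}}$ satisfies $\sem{d_{A_{\hat{d}}}} = \sem{d}$. Writing $L$ for the language actually recognised by $A_{\hat{d}}$, Proposition~\ref{thm:nfa-to-diag} then yields $\sem{d} = \left\{(K,K')\mid LK \subseteq K'\right\}$.

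It remains to turn this identity into the stated biconditional. The forward implication is immediate: if $\hat{L} = L$ then $\sem{d} = \left\{(K,K')\mid \hat{L}K \subseteq K'\right\}$. For the converse I would use that the assignment $M \mapsto \left\{(K,K')\mid MK \subseteq K'\right\}$ is injective. Indeed, instantiating $K = \{\varepsilon\}$ and $K' = \hat{L}$ shows $(\{\varepsilon\}, \hat{L})$ lies in $\left\{(K,K')\mid \hat{L}K \subseteq K'\right\} = \sem{d} = \left\{(K,K')\mid LK \subseteq K'\right\}$, whence $L = L\{\varepsilon\} \subseteq \hat{L}$; the symmetric argument gives $\hat{L} \subseteq L$, so $\hat{L} = L$ and $\hat{L}$ is the language recognised by $A_{\hat{d}}$.

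The only genuine work lies in the first paragraph: checking that tracing out the $l$ state-wires of $\hat{d}$, together with the $\epsilon$-free initial and accepting blocks, reassembles exactly the feedback-and-interface shape produced by the construction of Proposition~\ref{thm:nfa-to-diag}. Once that bookkeeping is in place, the result follows from that proposition together with the elementary injectivity observation above, which is why the statement is indeed a straightforward corollary.
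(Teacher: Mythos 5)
Your overall route is exactly the one the paper intends: Proposition~\ref{thm:diagram-nfa} is presented as a straightforward corollary of Proposition~\ref{thm:nfa-to-diag}, obtained by observing that the passage from a representation to its automaton inverts the encoding of Section~\ref{sec:automata-diag}, and your injectivity argument (instantiating at $K=\{\varepsilon\}$, $K'=\hat{L}$ and symmetrically) is the right way to upgrade the resulting identity of semantics to the stated biconditional. The second and third paragraphs of your proposal are correct as written.

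There is, however, a concrete slip in the first paragraph, which is precisely the part you identify as carrying the real content. You write that $\hat{d}_{1l}$ and $\hat{d}_{l1}$ are ``$\epsilon$-free by Definition~\ref{def:representation}, hence made of plain wires''. This inference is backwards: $\epsilon$-free means the associated transition relation contains \emph{no} empty-word transitions, i.e.\ no coefficient equal to $\epsilon$ --- the opposite of ``made of plain wires''. (Also, Definition~\ref{def:representation} constrains $\hat{d}_{ll}$ and $\hat{d}_{nl}$ but says nothing about $\hat{d}_{l1}$; and its condition on $\hat{d}_{1l}$ sits in visible tension with the paper's own parenthetical ``non-zero (and therefore $\epsilon$)'', so the confusion is partly inherited.) The property your round-trip argument actually needs is that the boundary blocks $\hat{d}_{1l}$ and $\hat{d}_{l1}$ take values in $\{0,\epsilon\}$ only: the construction of $A_{\hat{d}}$ records merely which of these coefficients are non-zero, so if either block contained a letter coefficient $\scalar{a}$, the automaton would forget it and re-encoding $A_{\hat{d}}$ would not return $\hat{d}$, breaking the claim $\sem{d_{A_{\hat{d}}}}=\sem{d}$. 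Your proof goes through once this hypothesis on the boundary blocks is stated and justified correctly, but the justification you actually give for it is not valid.
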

The following proposition states that we can extract a representation from any string diagram. Combined with Proposition~\ref{thm:diagram-nfa} it can also be read as a \emph{Kleene theorem} for our syntax.
\begin{proposition}[Kleene's Theorem for $\Aut$]\label{thm:traceform}
Any left-to-right diagram has a representation.
\end{proposition}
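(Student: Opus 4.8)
The plan is to place an arbitrary left-to-right diagram into the trace-of-matrix-diagram form of Definition~\ref{def:representation} in four stages: reduce to the atomic fragment, collect all feedback into a single trace, normalise the body of that trace into a matrix-diagram, and finally enforce the $\epsilon$-free side conditions. First I would invoke Proposition~\ref{thm:atomic} to assume without loss of generality that $d\from\objr^n\to\objr^m$ is atomic, so that the only red material it contains is the scalars $\scalar{a}$ with $a\in\Sigma$. This discharges the red fragment once and for all: every star, product, sum and unit has been unfolded via (C1)--(C5), and in particular each Kleene star has already been replaced by a feedback loop built from $\tikzfig{cup-down}$ and $\tikzfig{cap-down}$.

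Next I would use the compact-closed structure (A1)--(A3) to expose the feedback. Since $\objr$ and $\objl$ are dual and the boundary of $d$ carries only $\objr$ wires, every internal $\objl$ segment arises from a cup/cap pair. By repeatedly straightening wires with the yanking equations and deleting isolated loops with (A3), I would push all cups and caps to the outermost layer, rewriting $d$ as $\Tr^{l}(c)$ for some \emph{cup/cap-free} left-to-right diagram $c\from\objr^{l+n}\to\objr^{l+m}$ whose only generators are identities, symmetries, the comonoid pair (copy and delete), the monoid pair (merge and generate), and the atomic scalars. It is convenient to organise this step as a structural induction, proving simultaneously that representations are closed under $\oplus$ (place the two bodies side by side, adding their loop counts) and under sequential composition (compose the two bodies and re-thread the traces); the base cases are the individual generators, each of which is already, or is trivially, such a trace.

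For the third stage I would establish a matrix normal form for the cup/cap-free body $c$. This is the string-diagrammatic matrix calculus of~\cite{Lack2004a,ZanasiThesis}: the B block makes copy/delete a cocommutative comonoid, merge/generate a commutative monoid, and the two an idempotent bimonoid (B7)--(B11), which lets me sort every copy before every merge and thereby factor any such diagram as a comonoid block, a scalar block, and a monoid block. The D block is exactly what is needed to drag the scalars $\scalar{a}$ through this rearrangement, since (D1)--(D4) assert that the action is simultaneously a comonoid and a monoid homomorphism. The upshot is that $c$ is $\eqKa$-equivalent to a matrix-diagram whose $ij$ entry is the finite set of letters (and possibly $\epsilon$) labelling paths from input $i$ to output $j$; concatenation and parallel composition of these normal forms correspond to the evident matrix operations over the idempotent semiring of finite subsets of $\Sigma\cup\{\epsilon\}$, which is what keeps the inductive glue of the previous paragraph coherent.

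The remaining, and I expect hardest, stage is to secure the $\epsilon$-free conditions on $\hat{d}_{ll}$ and $\hat{d}_{nl}$, i.e.\ to remove all plain-wire ($\epsilon$) transitions that enter the fed-back wires. This is $\epsilon$-elimination performed inside the equational theory: an $\epsilon$-path into a loop wire followed by the transitions leaving that wire should be collapsed into direct transitions, after which the intervening loop wire has become trivial and can be discarded. The key equational tools are (B12), which eliminates a trivial feedback loop, together with the bimonoid laws used to duplicate and re-route the transitions around the eliminated wire, and the trace/compact-closed manipulations that shrink $l$ accordingly. The subtlety is that each such step must preserve the matrix-diagram shape and terminate (an $\epsilon$-transition may feed into another $\epsilon$-transition), so I would order the elimination by a well-founded measure on the $\epsilon$-subgraph of the loop, exactly mirroring the classical $\epsilon$-closure argument but carried out as a finite sequence of $\eqKa$-derivations; verifying that every move is genuinely derivable, rather than merely semantically valid, is where the real work lies.
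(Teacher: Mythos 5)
Your overall scaffolding (reduce to atomic diagrams, collect the feedback into a single trace, normalise the body into a matrix-diagram) matches the paper's, but the two proofs part ways exactly at the point you flag as hardest, and there your proposal has a genuine gap: the $\epsilon$-freeness of $\hat{d}_{ll}$ and $\hat{d}_{nl}$ is never actually established. In your construction the letters $\scalar{a}$ stay wherever they happen to sit in the body, so after normalisation the loop wires will in general receive plain ($\epsilon$-labelled) transitions --- this already happens for the unfolding of a single Kleene star, where the input is merged onto the fed-back wire by an identity wire. You then propose a full equational $\epsilon$-elimination ordered by a measure on the $\epsilon$-subgraph, but you do not carry it out, and it is not routine: $\epsilon$-cycles among loop wires must be collapsed (this needs (B10) and (B12) in concert with the trace); re-routing an $\epsilon$-edge requires duplicating every transition leaving the target wire, so you need copyability of $\scalar{a}$ via (D1) together with (E6) rather than the global Theorem~\ref{thm:copy-merge}, which is proved downstream of this proposition and would make the argument circular; and you must check that each move preserves both the matrix shape and the measure. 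As written, the proof delegates its hardest obligation to a sketch.

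The paper avoids the problem entirely by choosing the traced wires differently. Its structural induction first hoists every action node out of the body and onto the traced wires themselves: $d$ is rewritten as a trace over $l$ wires of an action-free diagram $c$ post-composed with a vertical stack of exactly one $\scalar{a}$ per traced wire (the base case, rewriting a single $\scalar{a}$ as a trace via (A1), is what seeds this, and closure under composition and tensor does the rest). The action-free $c$ is then put into relation normal form by Lafont's result (Lemma~\ref{thm:relations}), and the scalars are pushed back through the merges by (D3). Since every loop wire now factors through exactly one letter, every transition entering a loop wire is letter-labelled and $\epsilon$-freeness holds by construction --- no $\epsilon$-closure argument is needed. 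If you want to salvage your route, the cleanest fix is to adopt this ordering: make the traced wires be precisely the outputs of the scalars, rather than tracing first and eliminating $\epsilon$-transitions afterwards.
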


We established a correspondence between $\objr\to \objr$ diagrams and automata. What about arbitrary left-to-right diagrams $\objr^{n} \to \objr^{m}$? Their semantics is fully characterised by a single regular language \emph{for each pair of input-output port} (see Corollary~\ref{thm:1-to-1-restrict} below). As a result, the semantics of a given $\objr^{n} \to \objr^{m}$ diagram is fully characterised by an $m\times n$ array of languages.

\subsection{Interlude: from regular to context-free languages}\label{sec:context-free}

It is worth pointing out how a simple modification of the diagrammatic syntax takes us one notch up the Chomsky hierarchy, leaving the realm of regular languages for that of context-free grammars and languages. 

Our diagrammatic language allows to specify systems of language equations of the form $aX\subseteq Y$. In this context, feedback loops can be interpreted as fixed-points, in systems in which a variable may appear both on the left and on the right of equations. For example, the automaton below left, and its corresponding string diagram, below right, translate to the system of equations at the center:
\begin{equation}\label{eq:ex-system}

\tikzset{x=1em, y=2.1ex}
\InputIfFileExists{ex-automaton-graph.tikz}{}{\input{./tikz/ex-automaton-graph.tikz}}
\tikzset{x=1em, y=1.5ex}
 \quad \mapsto \; \begin{cases} \epsilon \subseteq X_0 \\ aX_0\subseteq X_1  \\ bX_1\subseteq X_2 \\ aX_2\subseteq X_1 \\ aX_2\subseteq X_2 \end{cases} \mapsfrom \quad 
\tikzset{x=1em, y=2.1ex}
\InputIfFileExists{ex-automaton-diagram.tikz}{}{\input{./tikz/ex-automaton-diagram.tikz}}
\tikzset{x=1em, y=1.5ex}
 
\end{equation} 
This translation can be obtained by simply labelling each state with a variable and adding one inequality of the form $aX_i\subseteq X_j$ for each $a$-transition from state $i$ to state $j$. The system we obtain corresponds very closely to the $\sem{-}$-semantics of the associated string diagram. 

The distinction between red and black wires can be understood as a type discipline that only allows linear uses of the product of languages. It is legitimate and enlightening to ask what would happen if we forgot about red wires and interpreted the action directly as the product. We would replace the action by a new generator $
\tikzset{x=1em, y=2.1ex}
\InputIfFileExists{lr-product.tikz}{}{\input{./tikz/lr-product.tikz}}
\tikzset{x=1em, y=1.5ex}
$ with  semantics $\sem{
\tikzset{x=1em, y=2.1ex}
\InputIfFileExists{lr-product.tikz}{}{\input{./tikz/lr-product.tikz}}
\tikzset{x=1em, y=1.5ex}
} = \{ \big((M,L),K\big) \mid ML\subseteq K \}$.

This would allow us to specify systems of language equations with unrestricted uses of the product on the left of inclusions, e.g. $UVW\subseteq X$.

\noindent\begin{minipage}{0.7\textwidth}
Equations of this form are similar to the production rules (e.g. $X\rightarrow UVW$) of context-free grammars and it is well-known that the least solutions of this class of systems are precisely \emph{context-free} languages~\cite[Chapter 10]{conway2012regular}. For example we could encode the Dyck language $X \rightarrow XX \mid (X) \mid \epsilon$ of properly matched parentheses as least solution of the system $\epsilon \subseteq X, (X) \subseteq X, XX \subseteq X$ which gives the diagram on the right.
\end{minipage}
\begin{minipage}{0.3\textwidth}
\[

\tikzset{x=1em, y=2.1ex}
\InputIfFileExists{Dyck-diag.tikz}{}{\input{./tikz/Dyck-diag.tikz}}
\tikzset{x=1em, y=1.5ex}

\]
\end{minipage}

\section{Completeness and Determinisation}\label{sec:completeness}

This section is devoted to prove our completeness result, Theorem~\ref{thm:completeness}. We use a normal form argument: more specifically we mimic automata-theoretic results to rewrite every string diagram to a normal form corresponding to a minimal deterministic finite automaton (DFA). We achieve it by implementing Brzozowski's algorithm~\cite{brzozowski1962canonical} through diagrammatic equational reasoning. The proof proceeds in three distinct steps.
\begin{enumerate}
\item We first show (Section~\ref{sec:determinisation}) how to \emph{determinise} (the representation of) a diagram: this step consists in eliminating all subdiagrams that correspond to nondeterministic transitions in the associated automaton.
\item We use the previous step to implement a \emph{minimisation} procedure (Section~\ref{sec:minimisation}) from which we obtain a minimal representation for a given diagram: this is a representation whose associated automaton is minimal---with the fewest number of states---amongst DFAs that recognise the same language. To do this, we show how the four steps of Brzozowski's minimisation algorithm---reverse; determinise; reverse; determinise---translate into diagrammatic equational reasoning. Note that the first three steps taken together simply amount to applying in reverse the determinisation procedure we have already devised. That this is possible will be a consequence of the symmetry of~$\eqKa$.
\item Finally, from the uniqueness of minimal DFAs, any two diagrams that have the same denotation are both equal to the same minimal representation and we can derive completeness of~$\eqKa$. 
\end{enumerate}

We will now write equations in $\eqKa$ simply as $=$ to simplify notation and say that diagrams $c$ and $d$ are \emph{equal} when $c\eqKa d$. 

First, we use the symmetries of the equational theory to make simplifying assumptions about the diagrams we need to consider for the completeness proof.

\paragraph{A few simplifying assumptions.} Without loss of generality, the proof we give is restricted to string diagrams with no $\objred$ in their domain as well as in their codomain. This is simply a matter of convenience: the same proof would work for more general diagrams, that may contain $\objred$ in their (co)domain, at the cost of significantly cluttering diagrams. Henceforth, one can simply think of the labels for the action $\scalar{x}$ as uniquely identifying one open red wire in a diagram. With this convention, two or more occurrences of the same $x$ in a diagram can be seen as connected to the same red wire on the left, via $
\tikzset{x=1em, y=2.1ex}
\InputIfFileExists{ka-copy.tikz}{}{\input{./tikz/ka-copy.tikz}}
\tikzset{x=1em, y=1.5ex}
$. The completeness of $\eqKa$ restricted to the monochromatic red fragment  is a consequence of~\cite[Theorem 6.1]{bonchi2018deconstructing}.


Arbitrary objects in $\Aut$ are lists of the three generating objects. We have already motivated focusing on string diagrams with no open red wires so that the objects we care about are lists of $\objr$ and $\objl$. The following proposition implies that, without loss of generality, for the proof of completeness we can restrict further to left-to-right diagrams (Section~\ref{sec:automata-diag}).
\begin{proposition}\label{thm:left-to-right}
There is a natural bijection between sets of string diagrams of the form
\begin{equation*}

\tikzset{x=1em, y=2.1ex}
\InputIfFileExists{right-to-left.tikz}{}{\input{./tikz/right-to-left.tikz}}
\tikzset{x=1em, y=1.5ex}
\quad\leftrightarrow\quad 
\tikzset{x=1em, y=2.1ex}
\InputIfFileExists{wire-bend.tikz}{}{\input{./tikz/wire-bend.tikz}}
\tikzset{x=1em, y=1.5ex}
 \qquad\text{ where $A_i, B_i$ represent lists of $\objr$ and $\objl$.}
\end{equation*}
\end{proposition}
Proposition~\ref{thm:left-to-right} tell us that we can always bend the incoming wires to the left and outgoing wires to the right before applying some equations, and recover the original orientation of the wires by bending them into their original place later.

\subsection{Determinisation}\label{sec:determinisation}

In diagrammatic terms, a nondeterministic transition of the automaton associated to (a representation of) a given diagram, corresponds to a subdiagram of the form $
\tikzset{x=1em, y=2.1ex}
\InputIfFileExists{non-determinism.tikz}{}{\input{./tikz/non-determinism.tikz}}
\tikzset{x=1em, y=1.5ex}
$ for some $a\in \Sigma$.
Clearly, using the definition of $\scalar{a} := 
\tikzset{x=1em, y=2.1ex}
\InputIfFileExists{a-action.tikz}{}{\input{./tikz/a-action.tikz}}
\tikzset{x=1em, y=1.5ex}
$ in~\eqref{def:scalar} and the axiom
\begin{equation}

\tikzset{x=1em, y=2.1ex}
\InputIfFileExists{action-copy.tikz}{}{\input{./tikz/action-copy.tikz}}
\tikzset{x=1em, y=1.5ex}
\; \myeq{D1}\; 
\tikzset{x=1em, y=2.1ex}
\InputIfFileExists{action-copy-1.tikz}{}{\input{./tikz/action-copy-1.tikz}}
\tikzset{x=1em, y=1.5ex}
\text{, }\quad \text{ we have }\quad  
\tikzset{x=1em, y=2.1ex}
\InputIfFileExists{non-determinism.tikz}{}{\input{./tikz/non-determinism.tikz}}
\tikzset{x=1em, y=1.5ex}
\; =\; 
\tikzset{x=1em, y=2.1ex}
\InputIfFileExists{non-determinism-1.tikz}{}{\input{./tikz/non-determinism-1.tikz}}
\tikzset{x=1em, y=1.5ex}

\end{equation}
which will prove to be the engine of our determinisation procedure, along with the fact that any red expression can be copied and deleted. The next two theorems generalise the ability to copy and delete to arbitrary left-to-right diagrams. 
\begin{theorem}\label{thm:copy-merge}
For any left-to-right diagram $d\from \objr^m\to \objr^n$, we have
\begin{align*}

\tikzset{x=1em, y=2.1ex}
\InputIfFileExists{global-copy.tikz}{}{\input{./tikz/global-copy.tikz}}
\tikzset{x=1em, y=1.5ex}
\quad \myeq{cpy} \quad 
\tikzset{x=1em, y=2.1ex}
\InputIfFileExists{global-copy-1.tikz}{}{\input{./tikz/global-copy-1.tikz}}
\tikzset{x=1em, y=1.5ex}
 \qquad\qquad 
\tikzset{x=1em, y=2.1ex}
\InputIfFileExists{global-delete.tikz}{}{\input{./tikz/global-delete.tikz}}
\tikzset{x=1em, y=1.5ex}
\quad \myeq{del} \quad 
\tikzset{x=1em, y=2.1ex}
\InputIfFileExists{global-delete-1.tikz}{}{\input{./tikz/global-delete-1.tikz}}
\tikzset{x=1em, y=1.5ex}
\\

\tikzset{x=1em, y=2.1ex}
\InputIfFileExists{global-merge.tikz}{}{\input{./tikz/global-merge.tikz}}
\tikzset{x=1em, y=1.5ex}
\quad \myeq{co-cpy} \quad 
\tikzset{x=1em, y=2.1ex}
\InputIfFileExists{global-merge-1.tikz}{}{\input{./tikz/global-merge-1.tikz}}
\tikzset{x=1em, y=1.5ex}
\qquad\qquad  
\tikzset{x=1em, y=2.1ex}
\begin{tikzpicture}
	\begin{pgfonlayer}{nodelayer}
		\node [style=none] (16) at (-0.75, 0.5) {\scriptsize $n$};
		\node [style=none] (17) at (-1.25, 0) {};
		\node [style=black] (18) at (-2.25, 0) {};
		\node [style=none] (19) at (-0.5, 0) {};
	\end{pgfonlayer}
	\begin{pgfonlayer}{edgelayer}
		\draw (17.center) to (19.center);
		\draw [->] (18) to (17.center);
	\end{pgfonlayer}
\end{tikzpicture}
}
\tikzset{x=1em, y=1.5ex}
\quad \myeq{co-del} \quad 
\tikzset{x=1em, y=2.1ex}
\InputIfFileExists{global-co-delete-1.tikz}{}{\input{./tikz/global-co-delete-1.tikz}}
\tikzset{x=1em, y=1.5ex}

\end{align*}
\end{theorem}
For $d\from \objr^m\to \objr^n$, let $d_{ij}$ be the string diagram of type $\objr\to \objr$ obtained by composing every input with $
\tikzset{x=1em, y=2.1ex}
}
\tikzset{x=1em, y=1.5ex}
$ except the $i$th one, and every output with $
\tikzset{x=1em, y=2.1ex}
}
\tikzset{x=1em, y=1.5ex}
$ except the $j$th one. Theorem~\ref{thm:copy-merge} implies that string diagrams are fully characterised by their $\objr\to\objr$ subdiagrams.
\begin{corollary}\label{thm:1-to-1-restrict}
Given $d,e\from \objr^m\to \objr^n$, $d\eqKa e$ iff $d_{ij} \eqKa e_{ij}$, for all $1\leq i\leq m$ and $1\leq j\leq n$.
\end{corollary}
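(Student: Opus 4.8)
The plan is to prove the nontrivial right-to-left implication by a matrix-reconstruction argument: I recover $d$ from the family $(d_{ij})$ using only the (co)monoid and bimonoid structure (B1)--(B11) together with Theorem~\ref{thm:copy-merge}. The left-to-right implication is immediate, since $d_{ij}$ is obtained from $d$ by pre- and post-composing with the fixed diagrams that generate every input except the $i$th and delete every output except the $j$th; as $\eqKa$ is a congruence for sequential composition and monoidal product, $d \eqKa e$ forces $d_{ij} \eqKa e_{ij}$ for all $i,j$.

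For the converse, the key step is a reconstruction lemma: every $d\from \objr^m\to \objr^n$ is $\eqKa$-equal to a diagram whose only ingredients are $\tikzfig{lr-copy}$, $\tikzfig{lr-merge}$, permutations of wires, and the entries $d_{ij}$, assembled with a wiring skeleton that depends only on $m$ and $n$. I would obtain this by splitting $d$ first along its outputs, then along its inputs. To split along outputs, copy the whole input $m$-tuple $n$ times (an $n$-fold fan-out built from $\tikzfig{lr-copy}$), run a separate copy of $d$ on each of the $n$ bundles, and on the $j$th bundle delete every output except the $j$th. Using the comonoid-homomorphism equations (cpy) and (del) of Theorem~\ref{thm:copy-merge} to push the fan-out through $d$, this rewrites as $d$ followed by an $n$-fold copy of each of its outputs composed with the projections; by coassociativity and counitality (B1)--(B3) the latter collapses to the identity, yielding $d \eqKa (\text{fan-out of inputs})\poi \bigoplus_{j}(d\poi\pi_j)$, where $\pi_j$ deletes all outputs but the $j$th.

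Dually, I would split each $d\poi\pi_j$ along its inputs using the monoid-homomorphism equations (co-cpy) and (co-del) together with (B4)--(B6), writing $d\poi\pi_j \eqKa \bigl(\bigoplus_i (\iota_i\poi d\poi\pi_j)\bigr)\poi(\text{fan-in})$, where $\iota_i$ generates every input except the $i$th. Since $\iota_i\poi d\poi\pi_j$ is by definition exactly $d_{ij}$, composing the two splittings gives
\[
d \;\eqKa\; S^{m,n}_{\mathrm{in}} \,\poi\, \Bigl(\bigoplus_{i,j} d_{ij}\Bigr) \,\poi\, S^{m,n}_{\mathrm{out}},
\]
where $S^{m,n}_{\mathrm{in}}$ (fan-outs and permutations) and $S^{m,n}_{\mathrm{out}}$ (fan-ins and permutations) are fixed once $m$ and $n$ are fixed. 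Applying this identity to both $d$ and $e$, and substituting $d_{ij}\eqKa e_{ij}$ under the common skeleton $S^{m,n}_{\mathrm{in}}, S^{m,n}_{\mathrm{out}}$, yields $d\eqKa e$.

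The main obstacle is the reconstruction lemma itself. The conceptual content is clear, but two points need care: applying Theorem~\ref{thm:copy-merge} in the precise form that lets the global fan-outs and fan-ins commute past $d$, and the combinatorial bookkeeping of the interposed permutation wires needed to route the $(i,j)$ copy through $d_{ij}$ into the $(i,j)$ merge, together with verifying that the residual ``copy-then-project'' and ``inject-then-merge'' fragments reduce to identities purely by the (co)unit and (co)associativity laws. Once that matrix decomposition is in place, the corollary follows by pure substitution.
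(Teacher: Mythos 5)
Your proposal is correct and matches the paper's intent exactly: the corollary is stated as an immediate consequence of Theorem~\ref{thm:copy-merge}, precisely because (cpy)/(del) and (co-cpy)/(co-del), together with the (co)monoid laws (B1)--(B6), let one reassemble $d$ from the family $(d_{ij})$ via a fixed fan-out/fan-in skeleton, which is the matrix-reconstruction you describe. The bookkeeping you flag (collapsing copy-then-project and inject-then-merge to identities via (co)unitality, and naturality of the permutations) goes through without difficulty, so no gap remains.
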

Thus, we can restrict our focus further to left-to-right $\objr\to\objr$ diagrams, without loss of generality. We are now able to devise a determinisation procedure for representation of diagrams, which we illustrate below on a simple example. 
\begin{proposition}[Determinisation]\label{thm:deterministic-rep}
Any diagram $\objr\to\objr$ has a deterministic representation.
\end{proposition}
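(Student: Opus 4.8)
The plan is to realise the classical subset (powerset) construction as a sequence of $\eqKa$-rewrites on a representation of $d$. By Proposition~\ref{thm:traceform}, $d$ has a representation $\hat{d}\from \objr^{l+1}\to \objr^{l+1}$; by definition its transition matrix $\hat{d}_{ll}$ is already $\epsilon$-free, so the only obstruction to determinism is that a single state may carry two (or more) transitions labelled by the same letter $a\in\Sigma$. Diagrammatically such a nondeterministic branching is exactly a subdiagram $\tikzfig{non-determinism}$: a $\tikzfig{lr-copy}$ feeding two copies of $\scalar{a}$.

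The engine of the procedure is the equation $\tikzfig{non-determinism}\; =\; \tikzfig{non-determinism-1}$ derived from (D1), which pushes the branching past the letter: two $a$-transitions out of a state become a single $\scalar{a}$ whose target is a fresh wire that then copies into the two original successors. This fresh wire is precisely the bundled successor set. The fact that $\tikzfig{lr-copy},\tikzfig{lr-delete},\tikzfig{lr-merge},\tikzfig{lr-generate}$ form an \emph{idempotent} bimonoid (B7)--(B11) is essential here: idempotence (B10) guarantees that a state contributing twice to the same bundle is identified with contributing once, so the bundled wires behave as genuine \emph{sets} of states rather than multisets.

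I would then carry out the construction set by set. Each reachable subset $S\subseteq\{q_1,\dots,q_l\}$ is encoded as one wire that distributes its input to the members of $S$ via a block of $\tikzfig{lr-copy}$ and collects from them via a block of $\tikzfig{lr-merge}$; the global copy/merge laws of Theorem~\ref{thm:copy-merge} let me manipulate these blocks uniformly. Starting from the singleton carrying the initial state, I build, one subset at a time, a new transition matrix: for every already-produced subset $S$ and every letter $a$, applying the engine equation together with the bimonoid laws combines the $a$-transitions out of the states of $S$ into a \emph{single} $a$-transition whose target is the wire for $S'=\{q_j \mid \exists q_i\in S,\, (q_i,a,q_j)\in\delta\}$, with idempotence collapsing repeated targets. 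Since there are at most $2^l$ subsets, only finitely many are reachable and the process terminates, yielding a matrix-diagram $\hat{d}'$ whose associated transition relation is deterministic by construction.

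Finally, I would check that $\hat{d}'$ is again a representation of $d$: the feedback (trace) structure is preserved throughout, and every rewrite keeps the relevant blocks $\epsilon$-free, so $\hat{d}'$ is a deterministic representation. The main obstacle is the bookkeeping in the third paragraph --- verifying that the equational manipulations genuinely realise the union of $a$-successors as a single bundled wire, that idempotence removes duplicate targets without introducing new $\epsilon$-transitions or fresh nondeterminism, and that the rerouting of wires through the trace is carried out coherently, so that the construction is well-founded and terminates with exactly the reachable subsets.
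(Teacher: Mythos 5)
Your overall strategy---realise determinisation equationally, with the (D1)-derived ``engine'' equation doing the work and the idempotent bimonoid laws handling the set-theoretic bookkeeping---is the same as the paper's, and the reduction to a representation via Proposition~\ref{thm:traceform} together with the appeal to Theorem~\ref{thm:copy-merge} is right. However, the step you defer as ``bookkeeping'' is in fact the entire technical content of the proof, and your argument has a genuine gap exactly there. After one application of the engine equation, the fresh wire for the subset $\{q_i,q_j\}$ feeds a copy node $\Bcomult$ into the two original successors, so its outgoing behaviour is the juxtaposition of the two subdiagrams sitting behind that copy node. To continue the construction you must combine their $a$-transitions and, crucially, whenever the resulting target subset coincides with one already constructed, you must \emph{prove in $\eqKa$} that the two corresponding subdiagrams are equal in order to merge their wires. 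Without that identification step the rewriting need not terminate---any cycle in the automaton keeps spawning fresh wires---so ``at most $2^l$ subsets, hence termination'' is a fact about the intended semantics, not yet about your rewrite process. Idempotence (B10) only collapses a state contributing twice to the \emph{same} bundle; it does not identify two syntactically distinct subdiagrams that happen to denote the same subset, and it says nothing about how the new wires are rerouted through the trace.

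The paper closes this gap by organising the proof differently: it inducts on the number of nondeterministic transitions, applies (D1) to a single one, traces out, determinises the remainder by the induction hypothesis, absorbs the resulting merges using Lemma~\ref{lem:absorb-merge}, and then invokes Lemma~\ref{thm:elim-nondeterminism}, whose proof extracts the \emph{largest} common submatrix-diagram behind the offending copy node, pulls it through using Theorem~\ref{thm:copy-merge} and the compact-closed axioms (A1)--(A2), and absorbs what remains. That maximality argument is precisely what guarantees that no new nondeterminism survives, and it plays the role of your missing ``identify duplicate subset wires'' step. To repair your proof you would either need to import Lemma~\ref{thm:elim-nondeterminism} (at which point you have essentially the paper's proof) or supply an explicit $\eqKa$-derivation showing that wires denoting the same reachable subset can be identified inside the feedback loop, together with a termination measure for the resulting rewrite system.
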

\begin{example}\label{ex:determinisation}
\begin{align*}

\tikzset{x=1em, y=2.1ex}
\InputIfFileExists{ex-graph-determinise.tikz}{}{\input{./tikz/ex-graph-determinise.tikz}}
\tikzset{x=1em, y=1.5ex}
\quad \mapsto\quad 
\tikzset{x=1em, y=2.1ex}
\InputIfFileExists{ex-diag-determinise.tikz}{}{\input{./tikz/ex-diag-determinise.tikz}}
\tikzset{x=1em, y=1.5ex}
 
 = \;
\tikzset{x=1em, y=2.1ex}
\InputIfFileExists{ex-diag-determinise-1.tikz}{}{\input{./tikz/ex-diag-determinise-1.tikz}}
\tikzset{x=1em, y=1.5ex}
\\
 \myeq{D1} \; 
\tikzset{x=1em, y=2.1ex}
\InputIfFileExists{ex-diag-determinise-2.tikz}{}{\input{./tikz/ex-diag-determinise-2.tikz}}
\tikzset{x=1em, y=1.5ex}

 =: \;
\tikzset{x=1em, y=2.1ex}
\InputIfFileExists{ex-diag-determinise-3.tikz}{}{\input{./tikz/ex-diag-determinise-3.tikz}}
\tikzset{x=1em, y=1.5ex}

 \;\myeq{cpy}\;
\tikzset{x=1em, y=2.1ex}
\InputIfFileExists{ex-diag-determinise-4.tikz}{}{\input{./tikz/ex-diag-determinise-4.tikz}}
\tikzset{x=1em, y=1.5ex}
\\
 := \quad
\tikzset{x=1em, y=2.1ex}
\InputIfFileExists{ex-diag-determinise-5.tikz}{}{\input{./tikz/ex-diag-determinise-5.tikz}}
\tikzset{x=1em, y=1.5ex}

\mapsfrom\quad 
\tikzset{x=1em, y=2.1ex}
\InputIfFileExists{ex-graph-deterministic.tikz}{}{\input{./tikz/ex-graph-deterministic.tikz}}
\tikzset{x=1em, y=1.5ex}
\quad 
\end{align*}
\end{example}

\paragraph{Dealing with useless states.}\label{sec:unreachable} Notice that our deterministic form is \emph{partial} and that the determinisation procedure disregards \emph{useless states}, i.e., parts of a string diagram that are not on a path from the input to the output wire. None of these contribute to the semantics of the diagram and can be safely eliminated using Theorem~\ref{thm:copy-merge}. If one prefers a \emph{total} deterministic form---one in which the transition relation not only contains each letter of $\Sigma$ at most once out of each state, but \emph{precisely} once. Conversely, one can use Theorem~\ref{thm:copy-merge} (del)-(co-del) to introduce an additional garbage state (corresponding to the empty set in the classical determinisation by subset construction), disconnected from the output, as a target for all undefined transitions. Rather than providing a tedious formal construction, let us illustrate this point on the preceding example: there is only one transition out of the initial state but we can add $b$ and $c$-transitions to a new state that does not lead anywhere, giving a total deterministic automaton, as follows.
\begin{align*}

\tikzset{x=1em, y=2.1ex}
\InputIfFileExists{ex-graph-deterministic.tikz}{}{\input{./tikz/ex-graph-deterministic.tikz}}
\tikzset{x=1em, y=1.5ex}
 \quad &\;\mapsto\quad 
\tikzset{x=1em, y=2.1ex}
\InputIfFileExists{ex-diag-determinise-5.tikz}{}{\input{./tikz/ex-diag-determinise-5.tikz}}
\tikzset{x=1em, y=1.5ex}
 \quad 
\myeq{B2; del}\quad 
\tikzset{x=1em, y=2.1ex}
\InputIfFileExists{ex-diag-total-deterministic.tikz}{}{\input{./tikz/ex-diag-total-deterministic.tikz}}
\tikzset{x=1em, y=1.5ex}
\\
&\myeq{B9}\quad 
\tikzset{x=1em, y=2.1ex}
\InputIfFileExists{ex-diag-total-deterministic-1.tikz}{}{\input{./tikz/ex-diag-total-deterministic-1.tikz}}
\tikzset{x=1em, y=1.5ex}
\quad \mapsfrom \quad 
\tikzset{x=1em, y=2.1ex}
\InputIfFileExists{ex-graph-total-deterministic.tikz}{}{\input{./tikz/ex-graph-total-deterministic.tikz}}
\tikzset{x=1em, y=1.5ex}

\end{align*}

\subsection{Minimisation and completeness}
\label{sec:minimisation}

As explained above, our proof of completeness is a diagrammatic reformulation of Brzozowski's algorithm which proceeds in four steps: determinise, reverse, determinise, reverse. We already know how to determinise a given diagram. The other three steps are simply a matter of looking at string diagrams differently and showing that all the equations that we needed to determinise them, can be performed in reverse.

We say that a matrix-diagram is \emph{co-deterministic} if the converse of its associated transition relation is deterministic.

\begin{proof}[Proof of Theorem~\ref{thm:completeness} (Completeness)]
We have a procedure to show that, if $\sem{d}=\sem{e}$, then there exists a string diagram $f$ in normal form such that  $d=f=e$. This normal form is the diagrammatic counterpart of the \emph{minimal} automaton associated to $d$ and $e$. In our setting, it is the deterministic representation equal to $d$ and $e$ with the smallest number of states. This is unique because we can obtain from it the corresponding minimal automaton, which is well-known to be unique. First, given any string diagram we can obtain a representation for it by Proposition~\ref{thm:traceform}. Then we obtain a minimal representation by splitting Brzozowski's algorithm in two steps.
\begin{description}
\item[1. Reverse; determinise; reverse.] A close look at the determinisation procedure (proof of Proposition~\ref{thm:copy-merge} and Proposition~\ref{thm:deterministic-rep} in Appendix) shows that, at each step, the required laws all hold in reverse. For example, we can replace every instance of~(cpy) with~(co-cpy). 
We can thus define, in a completely analogous manner, a co-determinisation procedure which takes care of the first three steps of Brzozowski's algorithm, and obtain a co-deterministic representation for the given diagram.
\item[2. Determinise.]  Finally from a direct application of Proposition~\ref{thm:deterministic-rep}, we can obtain a deterministic representation for the co-deterministic representation of the previous step. The result is the desired minimal representation and normal form.
\end{description}
\end{proof}

\section{Discussion}\label{sec:conclusion}

In this paper, we have given a fully diagrammatic treatment of finite-state automata, with a finite equational theory that axiomatises them up to language equivalence. We have seen that this allows us to decompose the regular operations of Kleene algebra, like the star, into more primitive components, resulting in greater modularity. In this section, we compare our contributions with related work, and outline directions for future research.

Traditionally, computer scientists have used \emph{syntax or railroad diagrams} to visualise regular expressions and, more generally, context-free grammars~\cite{wirth1971programming}. These diagrams resemble very closely our syntax but have remained mostly informal and usually restricted to a single input and output. More recently, Hinze has treated the single input-output case rigorously as a pedagogical tool to teach the correspondence between finite-state automata and regular expressions~\cite{hinze2019self}. He did not, however, study their equational properties.


Bloom and {\'E}sik's \emph{iteration theories} provide a general categorical setting in which to study the equational properties of iteration for a broad range of structures that appear in the semantics of programming languages~\cite{bloom1993iteration}. They are cartesian categories equipped with a parameterised fixed-point operation which is closely related to the trace operation that we have used to represent the Kleene star. However, the monoidal category of interest in this paper is \emph{compact-closed} (only the full subcategory over $\objr$ and $\objl$ to be precise), a property that is incompatible with the existence of categorical products (any category that has both collapses to a preorder~\cite{lambek1988introduction}). Nevertheless, the subcategory of left-to-right diagrams (Section~\ref{sec:automata-diag}) is an iteration theory and, in particular a matrix iteration theory~\cite{bloom1993matrix}, a structure that Bloom and {\'E}sik have used to give an (infinitary) axiomatisation of regular languages~\cite{bloom1993equational}. 

Similarly, Stefanescu's work on \emph{network algebra} provides a unified algebraic treatment of various types of networks, including finite-state automata~\cite{stefanescu2000network}. In general, network algebras are traced monoidal categories where the product is not necessarily cartesian, and therefore more general than iteration theories. In both settings however, the trace is a global operation, that cannot be decomposed further into simpler components. In our work, on the other hand, the trace can be defined from the compact-closed structure, as was depicted in~\eqref{eq:star-decomposed}. 

Note that the compact closed subcategory in this paper can be recovered from the traced monoidal category of left-to-right diagrams, via the \emph{Int construction}~\cite{Joyal_tracedcategories}. Therefore, as far as mathematical expressiveness is concerned, the two approaches are equivalent. However, from a methodological point of view, taking the compact closed structure as primitive allows for improved compositionality, as example~\eqref{ex:decompose-automaton} in the introduction illustrates. Furthermore, the compact closed structure can be finitely presented relative to the theory of symmetric monoidal categories, whereas the trace operation cannot. This matters greatly in this paper, where finding a finite axiomatisation is our main concern.  

Finally, the idea of treating regular expressions as a free structure acting on a second algebraic structure also appeared in Pratt's treatment of \emph{dynamic algebra}, which axiomatises the propositional fragment of dynamic modal logic~\cite{pratt1988dynamic}. Like our formalism, and contrary to Kleene algebras, the variety of dynamic algebras is finitely-based. But they assume more structure: there the second algebraic structure is a Boolean algebra.

In all the formalisms we have mentioned, the difficulty typically lies in capturing the behaviour of iteration---whether as the star in Kleene algebra~\cite{kozen1994completeness,bloom1993equational}, or a trace operator~\cite{bloom1993iteration} in iteration theory and network algebra~\cite{stefanescu2000network}. The axioms should be coercive enough to force it to be \emph{the least fixed-point} of the language map $L\mapsto \{\epsilon\} \cup LK$. In Kozen's axiomatisation of Kleene algebra~\cite{kozen1994completeness} for example, this is through (a) the axiom $1+ ee^*\leq e^*$ (star is a fixpoint) and (b) the Horn clause $f+ex \leq x \Rightarrow  e^*f \leq x$ (star is the least fixpoint). In our work, (a) is a consequence of the unfolding of the star into a feedback loop and can be derived from the other axioms. (b) is more subtle, but can be seen as a consequence of (D1)-(D4) axioms. These allows us to (co)copy and (co)delete arbitrary diagrams (Theorem~\ref{thm:copy-merge}) and we conjecture that this is what forces the star to be a single definite value, not just any fixed-point, but the least one. Making this statement precise is the subject of future work.  

The difficulty in capturing the behaviour of fixed-points is also the reason why we decided to work with an additional red wire, to encode the action of regular expressions on the set of languages---without it, global (co)copying and (co)deleting (Theorem~\ref{thm:copy-merge}) cannot be reduced to the local (D1)-(D4) axioms. There is another route, that leads to an infinitary axiomatisation: we could dispense with the red generators altogether and take $\scalar{a}$ (for $a\in\Sigma$) as primitive instead, with global axioms to (co)copy and (co)delete arbitrary diagrams. This would pave the way for a reformulation of our work in the context of iteration (matrix) theories, where the ability to (co)copy and (co)delete arbitrary expressions is already built-in. We leave this for future work.

There is an intriguing parallelism between our case study and the positive fragment of relation algebra (also known as allegories~\cite{freyd1990categories}). Indeed, allegories, like Kleene algebra, do not admit a finite axiomatisation~\cite{freyd1990categories}. However, this result holds for the standard algebraic theories
. It has been shown recently that a structure equivalent to allegories can be given a finite axiomatisation when formulated in terms of string diagrams in monoidal categories~\cite{bonchi2018graphical}. It seems like the greater generality of the monoidal setting---algebraic theories correspond precisely to the particular case of cartesian monoidal categories~\cite{bonchi2018deconstructing}---allows for simpler axiomatisations in some specific cases. In the future we would like to understand whether this phenomenon, of which now we have two instances, can be understood in a general context. In the future we would like to understand whether this phenomenon, of which now we have two instances, can be understood in a general context.

Lastly, extensions of Kleene Algebra, such as Concurrent Kleene Algebra (CKA)~\cite{hoare2009concurrent,KappeB0Z18} and NetKAT~\cite{anderson2014netkat}, are increasingly relevant in current research. Enhancing our theory $\eqKa$ to encompass these extensions seems a promising research direction, for two main reasons. First, the two-dimensional nature of string diagrams  has been proven particularly suitable to reason about  concurrency (see e.g.~\cite{BonchiHPSZ19,Bruni2013}), and more generally about resource exchange between processes  (see e.g.~\cite{BonchiSZ17,pqp,jacobs2019causal,baez2015compositional,BPSZ-lics19}). Second, when trying to transfer the good meta-theoretical properties of Kleene Algebra (like completeness and decidability) to extensions such as CKA and NetKAT, the cleanest way to proceed is usually in a modular fashion. The interaction between the new operators of the extension and the Kleene star usually represents the greatest challenge to this methodology. Now, in $\eqKa$, the Kleene star is decomposable into simpler components (see~\eqref{eq:star-decomposed}) and there is only one specific axiom (C5) governing its behaviour. We believe this is a particularly favourable starting point to modularise a meta-theoretic study of CKA and NetKAT with string diagrams, taking advantage of the results we presented in this paper for finite-state automata.

\bibliographystyle{splncs04}
\bibliography{refs}

\newpage
\appendix
\section{Proofs}\label{app:completeness}

\subsection{Encoding regular expressions and automata}

We write ``$\poi$'' for relational composition, from left to right: $R\poi S = \{(x,z)\mid \exists y, (x,y)\in R, (y,z)\in S\}$.

\begin{proof}[Proof of Proposition~\ref{thm:diagram-regexp}]
By induction on the structure of regular expressions. The proposition holds by definition on the generators: $\sem{\transreg{a}} = \{(L,K)\mid aL\subseteq K\}$. There are three inductive cases to consider. Assume that $e$ and $f$ satisfy the proposition. 
\begin{itemize}
\item For the $ef$ case, $\sem{\transreg{ef}} = \sem{\transreg{e}}\poi \sem{\transreg{f}} = \{(L,K)\mid \semreg{e}L\subseteq K\}\poi \{(L,K)\mid \semreg{f}L\subseteq K\}$. Hence, by monotony of the product, we have $\sem{\transreg{ef}} =\{(L,K)\mid \semreg{e}\semreg{f}L\subseteq K\} = \{(L,K)\mid \semreg{ef}L\subseteq K\}$.
\item For the case of $e+f$ we have \begin{align*}
\sem{\transreg{e+f}} &= \{(L,K)\mid \exists K_1, K_2\subseteq K, L_1, L_2, \text{ s.t. } L\subseteq L_1, L_2, \semreg{e}L_1\subseteq K_1, \semreg{f}L_2\subseteq K_2\}\\
&= \{(L,K)\mid \exists L_1, L_2, \text{ s.t. } L\subseteq L_1, L_2, \semreg{e}L_1\subseteq K, \semreg{f}L_2\subseteq K\}\\
& = \{(L,K)\mid \exists L_1, L_2, \text{ s.t. } L\subseteq L_1, L_2, \semreg{e}L_1 \cup \semreg{f}L_2 \subseteq K\}\\
&= \{(L,K)\mid \semreg{e}L \cup \semreg{f}L \subseteq K\} \\
&= \{(L,K)\mid (\semreg{e}\cup \semreg{f})L \subseteq K\} =  \{(L,K)\mid \semreg{e+f}L \subseteq K\}\text{.}
\end{align*} 
\item Finally, for $e^*$, 
\begin{align*}
\sem{\transreg{e^*}} &=  \{(L,K)\mid \exists M,N\text{ s.t. } \, M,L\subseteq N, \semreg{e}N \subseteq M, N\subseteq K\}\\
& = \{(L,K)\mid \exists N \text{ s.t. }  \semreg{e}N \subseteq N, L\subseteq N\subseteq K\}\\
& = \{(L,K)\mid \exists N \text{ s.t. }  L\cup \semreg{e}N \subseteq N, L\subseteq N\subseteq K\}\\
& = \{(L,K)\mid \exists N \text{ s.t. }   \semreg{e}^* L \subseteq N, L\subseteq N\subseteq K\}\\
& = \{(L,K)\mid \exists N \text{ s.t. }   \semreg{e^*}L \subseteq N, L\subseteq N\subseteq K\}\\
& = \{(L,K)\mid \semreg{e^*}L \subseteq K\}
\end{align*}
where the fourth equation is a consequence of Arden's lemma~\cite{arden1961delayed}: $A^*B$ is the smallest solution (for $X$) of the language equation $B\cup AX\subseteq X$. 
\end{itemize}
\end{proof}

\begin{proof}[Proof of Proposition~\ref{thm:nfa-to-diag}]
This is the diagrammatic counterpart of the representation of automata as matrices of regular expressions given in \cite[Definition 12]{kozen1994completeness}.

We write $\mathbf{K}$ for a vector of languages $(K_1,\dots, K_Q)$ and, for a square matrix of languages $\mathbf{A}$, let $\mathbf{A}\mathbf{K}$ be the language vector resulting from applying $\mathbf{A}$ to $\mathbf{K}$ in the obvious way. By~\cite[Theorem 11]{kozen1994completeness}, square language matrices form a Kleene algebra, with the composition of matrices as product, component-wise union as sum and the star defined as in~\cite[Lemma 10]{kozen1994completeness}. We also write write $\mathbf{K}\subseteq \mathbf{K}'$ if the inclusions all hold component-wise. Furthermore, Arden's lemma holds in this slightly more general setting: the least solution of the language-matrix equation $\mathbf{B}\cup \mathbf{A}\mathbf{X}\subseteq \mathbf{X}$ is $\mathbf{X} = \mathbf{A}^*\mathbf{B}$. 

Now, for a given automaton $A$ we construct the diagram below as explained in the main text:
\[
\tikzset{x=1em, y=2.1ex}
\InputIfFileExists{nfa-rep.tikz}{}{\input{./tikz/nfa-rep.tikz}}
\tikzset{x=1em, y=1.5ex}
\]
with $d$ the diagram encoding the transition relation of $A$, $e_0$ the diagram encoding its initial state, and $f$ the diagram encoding its set of final states. Let $\mathbf{D}$ be the language matrix obtained from $A$ by letting $\mathbf{D}_{ij} = \{a\}$ if $(q_i, a, q_j)$ is in the transition relation of $A$.
First, we have

\noindent \[\sem{
\tikzset{x=1em, y=2.1ex}
\InputIfFileExists{nfa-rep-star.tikz}{}{\input{./tikz/nfa-rep-star.tikz}}
\tikzset{x=1em, y=1.5ex}
}
\begin{array}{l}
=  \{(\mathbf{K},\mathbf{K}')\mid \exists \mathbf{M},\mathbf{N}, \; \mathbf{M},\mathbf{K}\subseteq \mathbf{N}, \mathbf{D}\mathbf{N}\subseteq \mathbf{M}, \mathbf{N}\subseteq \mathbf{K}'\}\\
=  \{(\mathbf{K},\mathbf{K}')\mid \exists \mathbf{N}, \; \mathbf{D}\mathbf{N}\subseteq \mathbf{N}, \mathbf{K}\subseteq \mathbf{N}\subseteq \mathbf{K}'\}\\
= \{(\mathbf{K},\mathbf{K}')\mid \exists \mathbf{N}, \; \mathbf{K}\cup \mathbf{D}\mathbf{N}\subseteq \mathbf{N}, \mathbf{K}\subseteq \mathbf{N}\subseteq \mathbf{K}'\}\\
= \{(\mathbf{K},\mathbf{K}')\mid \exists \mathbf{N}, \;  \mathbf{D}^*\mathbf{K}\subseteq \mathbf{N}, \mathbf{K}\subseteq \mathbf{N}\subseteq \mathbf{K}'\}\\
= \{(\mathbf{K},\mathbf{K}')\mid  \mathbf{D}^*\mathbf{K}\subseteq \mathbf{K}'\}
\end{array}
\]
where the penultimate step holds by the matrix Arden's lemma. Then, $\sem{e_0}$ and $\sem{f}$ pick out the component languages of $\mathbf{D}^*$ that correspond to the initial states of $A$ and some final state, and takes their union. Thus, we get 
\[\sem{
\tikzset{x=1em, y=2.1ex}
\InputIfFileExists{nfa-rep.tikz}{}{\input{./tikz/nfa-rep.tikz}}
\tikzset{x=1em, y=1.5ex}
} \begin{array}{l}
= \sem{e_0}\poi \{(\mathbf{K},\mathbf{K}')\mid  \mathbf{D}^*\mathbf{K}\subseteq \mathbf{K}'\}\poi \sem{f} \\
= \left\{(K,K')\mid LK \subseteq K'\right\}
\end{array}  \]
where $L$ is the language accepted by the original automaton. 
\end{proof}


\begin{lemma}\label{thm:relations}
Every left-to-right diagram that does not contain the $
\tikzset{x=1em, y=2.1ex}
\InputIfFileExists{action.tikz}{}{\input{./tikz/action.tikz}}
\tikzset{x=1em, y=1.5ex}
$ generator is equal to a one that factors as a block of $
\tikzset{x=1em, y=2.1ex}
\InputIfFileExists{lr-copy.tikz}{}{\input{./tikz/lr-copy.tikz}}
\tikzset{x=1em, y=1.5ex}
, 
\tikzset{x=1em, y=2.1ex}
}
\tikzset{x=1em, y=1.5ex}
$, followed by a block of $
\tikzset{x=1em, y=2.1ex}
\InputIfFileExists{lr-merge.tikz}{}{\input{./tikz/lr-merge.tikz}}
\tikzset{x=1em, y=1.5ex}
, 
\tikzset{x=1em, y=2.1ex}
}
\tikzset{x=1em, y=1.5ex}
$. 
\end{lemma}
\begin{proof}
The equational theory $\eqKa$ restricted to the four generators $
\tikzset{x=1em, y=2.1ex}
\InputIfFileExists{lr-copy.tikz}{}{\input{./tikz/lr-copy.tikz}}
\tikzset{x=1em, y=1.5ex}
, 
\tikzset{x=1em, y=2.1ex}
}
\tikzset{x=1em, y=1.5ex}
$, $
\tikzset{x=1em, y=2.1ex}
\InputIfFileExists{lr-merge.tikz}{}{\input{./tikz/lr-merge.tikz}}
\tikzset{x=1em, y=1.5ex}
, 
\tikzset{x=1em, y=2.1ex}
}
\tikzset{x=1em, y=1.5ex}
$ coincides with the equational theory of relations between finite sets. The proof of this fact and a normal form that factorises as in the statement of the Lemma can be found in~\cite[Section 4]{Lafont95-equationalReasoningTwoDimDiagrams}.
\end{proof}

\begin{proof}[Proof of Proposition~\ref{thm:traceform}]
First, we claim that we can always find $c$ containing no action $\scalar{a}$ such that  
\begin{equation}\label{eq:tracecanform}
\diagbox{d}{n}{m} \quad = \quad \traceaction{c}{n}{m}{l}{x}
\end{equation}
where $\scalar{x}\from \objr^l\to\objr^l$ is simply a vertical composition of $l$ different $\scalar{a}$, $a\in\Sigma$. 

To prove this claim, we reason by structural induction on $\Aut$. For the base case, if $d$ is $\scalar{a}$, we have
\begin{equation}
\scalar{a} \quad \myeq{A1} \quad 
\tikzset{x=1em, y=2.1ex}
\InputIfFileExists{trace-proof-base-1.tikz}{}{\input{./tikz/trace-proof-base-1.tikz}}
\tikzset{x=1em, y=1.5ex}
\quad=\quad 
\tikzset{x=1em, y=2.1ex}
\InputIfFileExists{trace-proof-base.tikz}{}{\input{./tikz/trace-proof-base.tikz}}
\tikzset{x=1em, y=1.5ex}

\end{equation}
and every morphism that does not contain $\scalar{x}$ is trivially in the right form, with the trace taken over the $0$ object.
 
There are two inductive cases to consider:
\begin{itemize}
\item $d$ is given by the sequential composition of two morphisms of the appropriate form (using the induction hypothesis). Then 
\begin{align}

\tikzset{x=1em, y=2.1ex}
\InputIfFileExists{trace-proof-comp.tikz}{}{\input{./tikz/trace-proof-comp.tikz}}
\tikzset{x=1em, y=1.5ex}
\quad & =\quad
\tikzset{x=1em, y=2.1ex}
\InputIfFileExists{trace-proof-comp-1.tikz}{}{\input{./tikz/trace-proof-comp-1.tikz}}
\tikzset{x=1em, y=1.5ex}
 \\
& = \quad 
\tikzset{x=1em, y=2.1ex}
\InputIfFileExists{trace-proof-comp-2.tikz}{}{\input{./tikz/trace-proof-comp-2.tikz}}
\tikzset{x=1em, y=1.5ex}

\end{align}
\item $d$ is given as the monoidal product of two morphisms of the appropriate form. Then
\begin{align}

\tikzset{x=1em, y=2.1ex}
\InputIfFileExists{trace-proof-tensor.tikz}{}{\input{./tikz/trace-proof-tensor.tikz}}
\tikzset{x=1em, y=1.5ex}
\quad & =\quad
\tikzset{x=1em, y=2.1ex}
\InputIfFileExists{trace-proof-tensor-1.tikz}{}{\input{./tikz/trace-proof-tensor-1.tikz}}
\tikzset{x=1em, y=1.5ex}
\\
& = \quad 
\tikzset{x=1em, y=2.1ex}
\InputIfFileExists{trace-proof-tensor-2.tikz}{}{\input{./tikz/trace-proof-tensor-2.tikz}}
\tikzset{x=1em, y=1.5ex}

\end{align}
\end{itemize}
In \begin{equation}\label{eq:tracecanform}
\diagbox{d}{n}{m} \quad = \quad \traceaction{c}{n}{m}{l}{x}
\end{equation}
since $c$ is contains no action nodes, by Lemma~\ref{thm:relations}, it is equal to a matrix diagram with only entries $0$ or $\epsilon$ (i.e. a relation).
In other words, we can assume that $c$ factorises as a first layer of comonoid $\Bcomult, \Bcounit$, followed by a layer of permutations and a third layer of vertical compositions of the monoid $\Bmult, \Bunit$. 
Now, the action nodes in the trace distribute over $\Bmult$ by (D3) so that we can push them inside $c$. The resulting matrix diagram $d'$ is such that $d'_{ll}$ is $\epsilon$-free, as needed.
\end{proof}

\subsection{Completeness}

\begin{proof}[Proof of Proposition~\ref{thm:left-to-right}]
This proposition holds in any compact-closed category and relies on the ability to bend wires using $
\tikzset{x=1em, y=2.1ex}
\InputIfFileExists{cap-down.tikz}{}{\input{./tikz/cap-down.tikz}}
\tikzset{x=1em, y=1.5ex}
$ and $
\tikzset{x=1em, y=2.1ex}
\InputIfFileExists{cup-down.tikz}{}{\input{./tikz/cup-down.tikz}}
\tikzset{x=1em, y=1.5ex}
$. Explicitly, given a diagram of the first form, we can obtain one of the second as follows:
\begin{equation}

\tikzset{x=1em, y=2.1ex}
\InputIfFileExists{right-to-left.tikz}{}{\input{./tikz/right-to-left.tikz}}
\tikzset{x=1em, y=1.5ex}
\quad \mapsto\quad 
\tikzset{x=1em, y=2.1ex}
\InputIfFileExists{bent-wires.tikz}{}{\input{./tikz/bent-wires.tikz}}
\tikzset{x=1em, y=1.5ex}

\end{equation}
The inverse mapping is symmetric. That they are inverse transformations follows immediately from (A1)-(A2).
\end{proof}

\begin{lemma}\label{thm:copy-regexp}
For any $\diagregexp{e}$, we have
\begin{equation*}

\tikzset{x=1em, y=2.1ex}
\InputIfFileExists{regexp-copy.tikz}{}{\input{./tikz/regexp-copy.tikz}}
\tikzset{x=1em, y=1.5ex}
\; =\; 
\tikzset{x=1em, y=2.1ex}
\InputIfFileExists{regexp-copy-1.tikz}{}{\input{./tikz/regexp-copy-1.tikz}}
\tikzset{x=1em, y=1.5ex}
\qquad\qquad 
\tikzset{x=1em, y=2.1ex}
\begin{tikzpicture}
	\begin{pgfonlayer}{nodelayer}
		\node [style=rcoreg] (4) at (-1.25, 0) {${\color{red} e}$};
		\node [style=rn] (5) at (0.25, 0) {};
	\end{pgfonlayer}
	\begin{pgfonlayer}{edgelayer}
		\draw [red] (4) to (5);
	\end{pgfonlayer}
\end{tikzpicture}
}
\tikzset{x=1em, y=1.5ex}
\; =\;
\end{equation*}
\end{lemma}
\begin{proof}
By structural induction. It also follows from the more general case of~\cite[Theorem 2.42]{ZanasiThesis}.
\end{proof}

The class of left-to-right diagrams can be characterised inductively. We call \emph{trace} the canonical feedback operation induced by $
\tikzset{x=1em, y=2.1ex}
\InputIfFileExists{cup-down.tikz}{}{\input{./tikz/cup-down.tikz}}
\tikzset{x=1em, y=1.5ex}
$ and $
\tikzset{x=1em, y=2.1ex}
\InputIfFileExists{cap-down.tikz}{}{\input{./tikz/cap-down.tikz}}
\tikzset{x=1em, y=1.5ex}
$. Given a left-to-right diagram $d\from 1+n\to 1+m$, its trace is defined as
\begin{equation}

\tikzset{x=1em, y=2.1ex}
\InputIfFileExists{directed-trace.tikz}{}{\input{./tikz/directed-trace.tikz}}
\tikzset{x=1em, y=1.5ex}

\end{equation}
Let $\mathsf{Rat}$ be the set of diagrams of $\Aut$ containing $
\tikzset{x=1em, y=2.1ex}
\InputIfFileExists{lr-copy.tikz}{}{\input{./tikz/lr-copy.tikz}}
\tikzset{x=1em, y=1.5ex}
, 
\tikzset{x=1em, y=2.1ex}
}
\tikzset{x=1em, y=1.5ex}
$, $
\tikzset{x=1em, y=2.1ex}
\InputIfFileExists{lr-merge.tikz}{}{\input{./tikz/lr-merge.tikz}}
\tikzset{x=1em, y=1.5ex}
, 
\tikzset{x=1em, y=2.1ex}
}
\tikzset{x=1em, y=1.5ex}
$, $
\tikzset{x=1em, y=2.1ex}
\InputIfFileExists{action.tikz}{}{\input{./tikz/action.tikz}}
\tikzset{x=1em, y=1.5ex}
$, and all red generators, which is closed under the operations of vertical composition, horizontal composition, and the trace. Clearly any diagram of $\mathsf{Rat}$ is a left-to-right diagram. The converse is also true, up to $\eqKa$, as a corollary of Lemma~\ref{thm:traceform}, which proves a stronger result about the form of left-to-right diagrams.
\begin{proposition}\label{thm:inductive-lr}
Every left-to-right diagram is equal to one in $\mathsf{Rat}$.  
\end{proposition}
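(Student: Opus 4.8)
The plan is to obtain this as a corollary of the trace canonical form established inside the proof of Proposition~\ref{thm:traceform}, rather than running a fresh induction. Recall that that proof shows, by structural induction on the free symmetric monoidal category $\Aut$, that any left-to-right diagram $d\from\objr^n\to\objr^m$ satisfies
\begin{equation*}
\diagbox{d}{n}{m}\;=\;\traceaction{c}{n}{m}{l}{x}
\end{equation*}
where $c$ is an action-free left-to-right diagram and $\scalar{x}\from\objr^l\to\objr^l$ is a vertical composite of $l$ atomic actions $\scalar{a}$. The entire right-hand side is an (iterated single-wire) trace of a composite, so the first thing I would do is observe that this expression already lives in $\mathsf{Rat}$, provided each of its constituents does.

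First I would dispatch $c$: since it contains no action node, Lemma~\ref{thm:relations} lets us rewrite it, up to $\eqKa$, as a block of $\tikzfig{lr-copy},\tikzfig{lr-delete}$ followed by a block of $\tikzfig{lr-merge},\tikzfig{lr-generate}$. Thus $c$ is built entirely from the four black (co)monoid generators, each of which is a generator of $\mathsf{Rat}$, combined only by sequential and monoidal composition. Next I would note that the action layer $\scalar{x}$ is a monoidal product of copies of $\scalar{a}$, and that each $\scalar{a}$ unfolds by definition~\eqref{def:scalar} into the red atom $\atom{a}$ composed with $\tikzfig{action}$ --- both among the generators of $\mathsf{Rat}$ (the action and the red generators). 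Composing $c$ with this action layer and wrapping the result in the $l$-fold trace are precisely the three closure operations --- vertical composition, horizontal composition, and trace --- under which $\mathsf{Rat}$ was defined to be closed. Hence the canonical form is a $\mathsf{Rat}$ diagram, and since it is $\eqKa$-equal to $d$, the proposition follows.

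I would also flag the alternative of a direct structural induction on $\Aut$, establishing that the class of diagrams $\eqKa$-equal to a $\mathsf{Rat}$ diagram contains the relevant generators and is closed under composition and tensor. The crux --- and the only genuine obstacle --- is the treatment of the cup and cap generators $\tikzfig{cup-down},\tikzfig{cap-down}$ and the left-oriented object $\objl$ they introduce: these are deliberately excluded from the generators of $\mathsf{Rat}$, so one must show that every internal use of an $\objl$ wire can be rerouted into the permitted trace operation. This is exactly the bookkeeping that the compact-closed axioms (A1)--(A3) and the inductive argument of Proposition~\ref{thm:traceform} already carry out by pushing all feedback into a single global trace. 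Reusing that work via the corollary route reduces the present proof to the routine verification above that every building block of the canonical form is a $\mathsf{Rat}$-generator, which is why I would favour it.
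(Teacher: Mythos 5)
Your proposal is correct and follows essentially the same route as the paper, which presents this proposition precisely as a corollary of the trace canonical form established in the proof of Proposition~\ref{thm:traceform} (with Lemma~\ref{thm:relations} disposing of the action-free part $c$). The observation that every constituent of the canonical form is a generator of $\mathsf{Rat}$ and that the assembly uses only the three closure operations is exactly the intended argument, so no further comparison is needed.
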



We can define the diagrammatic counterpart of matrices whose coefficients are regular expressions. 
\begin{definition}\label{def:generalised-matrix}
A \emph{generalised matrix-diagram} is a left-to-right diagram that factors as a block of $
\tikzset{x=1em, y=2.1ex}
\InputIfFileExists{lr-copy.tikz}{}{\input{./tikz/lr-copy.tikz}}
\tikzset{x=1em, y=1.5ex}
, 
\tikzset{x=1em, y=2.1ex}
}
\tikzset{x=1em, y=1.5ex}
$, followed by a block of $\scalar{e} (e\in \RegExp)$ and finally, a block of $
\tikzset{x=1em, y=2.1ex}
\InputIfFileExists{lr-merge.tikz}{}{\input{./tikz/lr-merge.tikz}}
\tikzset{x=1em, y=1.5ex}
, 
\tikzset{x=1em, y=2.1ex}
}
\tikzset{x=1em, y=1.5ex}
$. 
\end{definition}
Just like for matrices, we call the regular expressions $d_{ij}$ that appear in a generalised matrix-diagram $d\from \objr^n\to \objr^m$ the \emph{coefficients} of $d$ and index them by pairs of numbers $\{1,\dots n\}$, $\{1,\dots m\}$ in the usual way. The following proposition shows that left-to-right diagrams are as expressive as matrices of regular expressions. 
\begin{proposition}\label{thm:matrix-regexp}
Any left-to-right diagram is equal to a generalised matrix diagram.
\end{proposition}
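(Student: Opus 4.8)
The plan is to argue by induction on the construction of $\mathsf{Rat}$-diagrams. By Proposition~\ref{thm:inductive-lr} every left-to-right diagram is $\eqKa$-equal to one in $\mathsf{Rat}$, so it suffices to prove that the class of generalised matrix-diagrams (Definition~\ref{def:generalised-matrix}) contains the generators and is closed, up to $\eqKa$, under the three operations that generate $\mathsf{Rat}$: horizontal (monoidal) product, vertical (sequential) composition, and trace. Because we restrict to diagrams with no open red wires, every red wire is created by a red generator and eventually consumed by an action; using the red comonoid structure (axioms E1--E3) together with Lemma~\ref{thm:copy-regexp} to freely duplicate and route each $\diagregexp{e}$, one may assume the only left-to-right building blocks are the black (co)monoid generators and the scalars $\scalar{e}$. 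Each of these is trivially a generalised matrix-diagram --- a ``matrix'' all of whose coefficients are $0$ or $\epsilon$ except for at most one entry equal to $e$. Closure under the monoidal product is immediate: juxtaposing two such factorisations and regrouping yields a single copy/delete block, a single scalar block, and a single merge/generate block.

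The crux of the composition case is to show that the sequential composite of two generalised matrix-diagrams is again one; this is diagrammatic matrix multiplication over the semiring $\RegExp$. Writing each factor as (copy block)$\poi$(scalar block)$\poi$(merge block), the composite contains a merge block immediately followed by a copy block in the middle. First I would rewrite this middle ``merge-then-copy'' into a ``copy-then-merge'' using the idempotent bimonoid law (B7) and the remaining B-axioms, exactly as in Lemma~\ref{thm:relations}. I would then slide the two scalar blocks inward, past the newly created copy and merge blocks, using the action--comonoid and action--monoid homomorphism axioms (D1) and (D3) (duplicating scalars as needed), so that the two scalar blocks become adjacent. Finally, adjacent scalars are fused by the action-product axiom (C1) into a single scalar, producing for each input--output pair $(i,j)$ the coefficient $\sum_k e_{ik}\,f_{kj}$. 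Regrouping yields a single generalised matrix-diagram; the global (co)copy and (co)delete of Theorem~\ref{thm:copy-merge} underpin all of this block-shuffling.

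The main obstacle is closure under the trace, which is the only place where the Kleene star must genuinely appear. Since each $\Tr^{l}$ decomposes into $l$ nested single traces $\Tr^{1}$ (peeling the feedback wires one at a time), it suffices to handle a single feedback wire applied to a generalised matrix-diagram $\hat d\from \objr^{1+n}\to\objr^{1+m}$. I would isolate the traced wire and, by the composition machinery of the previous paragraph, express the loop it creates as a single self-loop carrying the scalar $\scalar{e}$, where $e$ is the $(1,1)$-coefficient. Reading axiom (C5) from right to left turns this self-loop into $\scalar{e^*}$, eliminating the feedback. The incoming coefficients $b_i$ and outgoing coefficients $c_j$ then compose with $\scalar{e^*}$ through the matrix-multiplication step, updating each coefficient to $e_{ij} + c_j\,e^*\,b_i$. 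This is exactly one step of Gaussian elimination (equivalently Arden's lemma) carried out equationally, and iterating over the $l$ traced wires removes the trace entirely, leaving a generalised matrix-diagram. Thus the trace case, together with the product and composition cases, closes the induction, and the subtle interaction of the star with feedback in (C5) is the only ingredient beyond routine bimonoid and homomorphism bookkeeping.
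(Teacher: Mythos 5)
Your proposal follows essentially the same route as the paper's proof: structural induction over the inductive presentation of left-to-right diagrams from Proposition~\ref{thm:inductive-lr}, with the generators as base cases, sequential composition handled by commuting the middle merge/copy blocks via the bimonoid laws and the action-homomorphism axioms and then fusing scalars with (C1) (and, for parallel paths, (C4), which you leave implicit in the coefficient formula $\sum_k e_{ik}f_{kj}$), and the trace reduced one wire at a time to a self-loop on the $(1,1)$-coefficient that (C5) converts into a Kleene star. The argument is correct and matches the paper's in all essentials.
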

\begin{proof}
We reason by structural induction, using the inductive characterisation of left-to-right diagrams of Proposition~\ref{thm:inductive-lr}. The base cases are those of $
\tikzset{x=1em, y=2.1ex}
\InputIfFileExists{lr-copy.tikz}{}{\input{./tikz/lr-copy.tikz}}
\tikzset{x=1em, y=1.5ex}
, 
\tikzset{x=1em, y=2.1ex}
}
\tikzset{x=1em, y=1.5ex}
$, $\scalar{e} (e\in \RegExp)$, $
\tikzset{x=1em, y=2.1ex}
\InputIfFileExists{lr-merge.tikz}{}{\input{./tikz/lr-merge.tikz}}
\tikzset{x=1em, y=1.5ex}
$, and $
\tikzset{x=1em, y=2.1ex}
}
\tikzset{x=1em, y=1.5ex}
$ which are all generalised matrix diagrams, by definition.

By Proposition~\ref{thm:inductive-lr}, there are three inductive cases to consider.
\begin{enumerate}
\item Let $d\from \objr^n\to \objr^m$ and $c\from \objr^m\to \objr^l$ be two generalised matrix-diagrams. Consider their horizontal composite. By applying the bimonoid laws \hyperref[ax:rel]{(B6)-(B8)} as many times as needed we can commute the block of $
\tikzset{x=1em, y=2.1ex}
\InputIfFileExists{lr-merge.tikz}{}{\input{./tikz/lr-merge.tikz}}
\tikzset{x=1em, y=1.5ex}
, 
\tikzset{x=1em, y=2.1ex}
}
\tikzset{x=1em, y=1.5ex}
$ of $d$ past the block of $
\tikzset{x=1em, y=2.1ex}
\InputIfFileExists{lr-copy.tikz}{}{\input{./tikz/lr-copy.tikz}}
\tikzset{x=1em, y=1.5ex}
, 
\tikzset{x=1em, y=2.1ex}
}
\tikzset{x=1em, y=1.5ex}
$ of $c$. Then, in the same way, we can apply \hyperref[ax:action-homomorphism]{(D3)-(D4)} to commute the block of $
\tikzset{x=1em, y=2.1ex}
\InputIfFileExists{lr-merge.tikz}{}{\input{./tikz/lr-merge.tikz}}
\tikzset{x=1em, y=1.5ex}
, 
\tikzset{x=1em, y=2.1ex}
}
\tikzset{x=1em, y=1.5ex}
$ of $d$ past the block of $\scalar{e}$ of $c$. As a result, their horizontal composite is equal to a diagram that factors as a block of $
\tikzset{x=1em, y=2.1ex}
\InputIfFileExists{lr-copy.tikz}{}{\input{./tikz/lr-copy.tikz}}
\tikzset{x=1em, y=1.5ex}
, 
\tikzset{x=1em, y=2.1ex}
}
\tikzset{x=1em, y=1.5ex}
$, followed by \emph{two} blocks of $\scalar{e} (e\in \RegExp)$ and finally, a block of $
\tikzset{x=1em, y=2.1ex}
\InputIfFileExists{lr-merge.tikz}{}{\input{./tikz/lr-merge.tikz}}
\tikzset{x=1em, y=1.5ex}
, 
\tikzset{x=1em, y=2.1ex}
}
\tikzset{x=1em, y=1.5ex}
$. Now, we may have two types of subdiagrams to eliminate.
\begin{itemize}
\item If the diagram contains 
\tikzset{x=1em, y=2.1ex}
\begin{tikzpicture}
	\begin{pgfonlayer}{nodelayer}
		\node [style=none] (0) at (-2.25, 0) {};
		\node [style=reg] (1) at (-0.75, 0) {$e$};
		\node [style=reg] (2) at (1, 0) {$e'$};
		\node [style=none] (3) at (2.75, 0) {};
	\end{pgfonlayer}
	\begin{pgfonlayer}{edgelayer}
		\draw (0.center) to (1);
		\draw (1) to (2);
		\draw (2) to (3.center);
	\end{pgfonlayer}
\end{tikzpicture}
}
\tikzset{x=1em, y=1.5ex}
 then we can turn this into a single coefficient as follows:
\begin{equation*}

\tikzset{x=1em, y=2.1ex}
}
\tikzset{x=1em, y=1.5ex}
 \;=\; 
\tikzset{x=1em, y=2.1ex}
\InputIfFileExists{regexps-product.tikz}{}{\input{./tikz/regexps-product.tikz}}
\tikzset{x=1em, y=1.5ex}
 \;=\; 
\tikzset{x=1em, y=2.1ex}
\InputIfFileExists{regexps-product-1.tikz}{}{\input{./tikz/regexps-product-1.tikz}}
\tikzset{x=1em, y=1.5ex}
 \;=\; \scalar{e\,e'}
\end{equation*}
\item If the diagram contains 
\tikzset{x=1em, y=2.1ex}
\InputIfFileExists{regexps-union.tikz}{}{\input{./tikz/regexps-union.tikz}}
\tikzset{x=1em, y=1.5ex}
 then we can turn this into a single coefficient as follows:
\begin{equation*}

\tikzset{x=1em, y=2.1ex}
\InputIfFileExists{regexps-union.tikz}{}{\input{./tikz/regexps-union.tikz}}
\tikzset{x=1em, y=1.5ex}
 \;=\; 
\tikzset{x=1em, y=2.1ex}
\InputIfFileExists{regexps-sum.tikz}{}{\input{./tikz/regexps-sum.tikz}}
\tikzset{x=1em, y=1.5ex}
 \;=\; 
\tikzset{x=1em, y=2.1ex}
\InputIfFileExists{regexps-sum-1.tikz}{}{\input{./tikz/regexps-sum-1.tikz}}
\tikzset{x=1em, y=1.5ex}
 \;=\; \scalar{e_1+e_2}
\end{equation*}
\end{itemize}
In this way we merge the two blocks of regular expression coefficients into a single block, as needed to obtain a generalised matrix-diagram.
\item The case of vertical composition is immediate: if $d_1\from \objr^{n_1}\to \objr^{m_1}$ and $d_2\from \objr^{n_2}\to \objr^{m_2}$ are two generalised matrix-diagrams then so is their vertical composite.
\item The remaining case is that of the trace. Suppose $d\from \objr^{1+n}\to \objr^{1+m}$ is a generalised matrix-diagram.  Then, there exists a regular expression $d_{11}$ and generalised matrix-subdiagrams $c_{1m}, c_{n1}$, and $c_{nm}$ such that
\begin{equation*}
d = \quad 
\tikzset{x=1em, y=2.1ex}
\InputIfFileExists{d-submatrices.tikz}{}{\input{./tikz/d-submatrices.tikz}}
\tikzset{x=1em, y=1.5ex}

\end{equation*}  
Then 
\begin{align*}
\traceform{d}{n}{m}{}\quad &=\quad 
\tikzset{x=1em, y=2.1ex}
\InputIfFileExists{regexp-inductive-trace.tikz}{}{\input{./tikz/regexp-inductive-trace.tikz}}
\tikzset{x=1em, y=1.5ex}
\\
& \myeq{A1-A2}\; 
\tikzset{x=1em, y=2.1ex}
\InputIfFileExists{regexp-inductive-trace-1.tikz}{}{\input{./tikz/regexp-inductive-trace-1.tikz}}
\tikzset{x=1em, y=1.5ex}
\\
& :=\; 
\tikzset{x=1em, y=2.1ex}
\InputIfFileExists{regexp-inductive-trace-2.tikz}{}{\input{./tikz/regexp-inductive-trace-2.tikz}}
\tikzset{x=1em, y=1.5ex}
\\
& \myeq{C5}\; 
\tikzset{x=1em, y=2.1ex}
\InputIfFileExists{regexp-inductive-trace-3.tikz}{}{\input{./tikz/regexp-inductive-trace-3.tikz}}
\tikzset{x=1em, y=1.5ex}
\\
& =:\; 
\tikzset{x=1em, y=2.1ex}
\InputIfFileExists{regexp-inductive-trace-4.tikz}{}{\input{./tikz/regexp-inductive-trace-4.tikz}}
\tikzset{x=1em, y=1.5ex}

\end{align*}
Then we can use the inductive case of composition above (1.) to obtain a generalised matrix diagram from the horizontal composite of $c_{n_1}$, $d_{11}^*$ and $c_{1m}$, and therefore for the whole diagram, thus concluding the proof.
\end{enumerate}
\end{proof}


\begin{proof}[Proof of Theorem~\ref{thm:copy-merge}]
By Proposition~\ref{thm:matrix-regexp}, any $d$ as in the statement of the theorem is equal to a generalised matrix-diagram. These are made up of consecutive blocks of $
\tikzset{x=1em, y=2.1ex}
\InputIfFileExists{lr-copy.tikz}{}{\input{./tikz/lr-copy.tikz}}
\tikzset{x=1em, y=1.5ex}
, 
\tikzset{x=1em, y=2.1ex}
}
\tikzset{x=1em, y=1.5ex}
$, $
\tikzset{x=1em, y=2.1ex}
\InputIfFileExists{regexp-action.tikz}{}{\input{./tikz/regexp-action.tikz}}
\tikzset{x=1em, y=1.5ex}
$, and $
\tikzset{x=1em, y=2.1ex}
\InputIfFileExists{lr-merge.tikz}{}{\input{./tikz/lr-merge.tikz}}
\tikzset{x=1em, y=1.5ex}
, 
\tikzset{x=1em, y=2.1ex}
}
\tikzset{x=1em, y=1.5ex}
$. Each equation in the statement holds for all of these components. For example, (cpy) holds for $
\tikzset{x=1em, y=2.1ex}
\InputIfFileExists{lr-merge.tikz}{}{\input{./tikz/lr-merge.tikz}}
\tikzset{x=1em, y=1.5ex}
$ by (B7), for $
\tikzset{x=1em, y=2.1ex}
}
\tikzset{x=1em, y=1.5ex}
$ by (B8), and for the $
\tikzset{x=1em, y=2.1ex}
\InputIfFileExists{regexp-action.tikz}{}{\input{./tikz/regexp-action.tikz}}
\tikzset{x=1em, y=1.5ex}
$ block by (D1) in conjunction with Lemma~\ref{thm:copy-regexp} to copy $\diagregexp{e}$. 

Thus, a simple structural induction on the form of generalised matrix-diagrams suffices to prove the theorem. 
\end{proof}

\begin{lemma}\label{lem:absorb-merge}
For any deterministic matrix-diagram $d$, there exists a deterministic matrix-diagram $d'$ such that
\[
\tikzset{x=1em, y=2.1ex}
\InputIfFileExists{absorb-merge.tikz}{}{\input{./tikz/absorb-merge.tikz}}
\tikzset{x=1em, y=1.5ex}
\quad =\quad \diagbox{d'}{n+2}{m}\]
\end{lemma}
\begin{proof}
This is a straightforward induction on the structure of matrix-diagrams: they are formed of a layer of $\Bcomult$ followed by a layer of action nodes $\scalar{x}$ and finally a layer of $\Bmult$. We can use the bimonoid axioms (B7)-(B9) to push the $\Bmult$ past the $\Bcomult$-layer and the ability to merge two $\scalar{a}$ using co-copying axiom (D3), to push $\Bmult$ past the $\scalar{x}$-layer. 
\end{proof}

The next lemma performs the key step in removing nondeterminism. In more transparent language, it asserts that if we have a diagram that correspond to a deterministic automaton, and we identify two inputs to two of its states, we can get rid of the potential nondeterminism that we have introduced, using equational reasoning. 
\begin{lemma}\label{thm:elim-nondeterminism}
For a matrix-diagram $d\from \objr^{l+3}\to \objr^{l+1}$ with $d_{ll}$ and $d_{3l}$ deterministic, there exists a matrix-diagram $d'\from \objr^{l'+2}\to \objr^{l'+1}$ with $d'_{l'l'}$ and $d'_{2l'}$ deterministic such that 
\[
\tikzset{x=1em, y=2.1ex}
\InputIfFileExists{lemma-elim-nondeterminism.tikz}{}{\input{./tikz/lemma-elim-nondeterminism.tikz}}
\tikzset{x=1em, y=1.5ex}
 \quad =\quad 
\tikzset{x=1em, y=2.1ex}
\InputIfFileExists{lemma-elim-nondeterminism-absorbed.tikz}{}{\input{./tikz/lemma-elim-nondeterminism-absorbed.tikz}}
\tikzset{x=1em, y=1.5ex}
\]
\end{lemma}
\begin{proof}
The idea is to identify the largest equivalent (i.e., that give the same language) subdiagrams, starting from any of the branches of $\Bcomult$, and pull them through $\Bcomult$ to merge them. Thinking of the diagram as an automaton, this amounts to identify the intersection of the languages recognised by the two states that $\Bcomult$ merges, to pull them through this generator, and therefore create a new state that recognises the intersection.

Following this idea, we take the largest submatrix-diagram $c$ of $d$ such that
\begin{equation}\label{eq:two-states-intersection}

\tikzset{x=1em, y=2.1ex}
\InputIfFileExists{deterministic-to-merge.tikz}{}{\input{./tikz/deterministic-to-merge.tikz}}
\tikzset{x=1em, y=1.5ex}
\quad=\quad
\tikzset{x=1em, y=2.1ex}
\InputIfFileExists{two-states-intersection.tikz}{}{\input{./tikz/two-states-intersection.tikz}}
\tikzset{x=1em, y=1.5ex}

\end{equation}
for some deterministic $e$, and $l=l_1+l_1+l_2$.
Note that if there is no such subdiagram, we are done, since merging the two states does not introduce nondeterminism. Otherwise we proceed as follows. 
First, replacing~\eqref{eq:two-states-intersection} in the context of the statement, we obtain
\begin{align*}

\tikzset{x=1em, y=2.1ex}
\InputIfFileExists{lemma-elim-nondeterminism.tikz}{}{\input{./tikz/lemma-elim-nondeterminism.tikz}}
\tikzset{x=1em, y=1.5ex}
\quad &=\quad
\tikzset{x=1em, y=2.1ex}
\InputIfFileExists{lemma-elim-nondeterminism-1.tikz}{}{\input{./tikz/lemma-elim-nondeterminism-1.tikz}}
\tikzset{x=1em, y=1.5ex}
\\
\myeq{B1}\quad
\tikzset{x=1em, y=2.1ex}
\InputIfFileExists{lemma-elim-nondeterminism-2.tikz}{}{\input{./tikz/lemma-elim-nondeterminism-2.tikz}}
\tikzset{x=1em, y=1.5ex}
 &=\quad 
\tikzset{x=1em, y=2.1ex}
\InputIfFileExists{lemma-elim-nondeterminism-2bis.tikz}{}{\input{./tikz/lemma-elim-nondeterminism-2bis.tikz}}
\tikzset{x=1em, y=1.5ex}

\end{align*}
where $e'$ is the dashed box in the previous diagram. Note that we have not introduced more nondeterminism since, by construction, $e'_{ll}$ and $e'_{1l}$ are deterministic. Indeed, otherwise, $c$ would not be the largest subdiagram satisfying~\eqref{eq:two-states-intersection} and we could add any additional nondeterministic transition in $e'$ to it.

We now focus on transforming the following subdiagram, which we isolate for clarity. First, we can split $c$ into two submatrix-diagrams $c_1$ and $c_2$ such that
\begin{align*}

\tikzset{x=1em, y=2.1ex}
\InputIfFileExists{lemma-elim-nondeterminism-3.tikz}{}{\input{./tikz/lemma-elim-nondeterminism-3.tikz}}
\tikzset{x=1em, y=1.5ex}
\quad & = \quad 
\tikzset{x=1em, y=2.1ex}
\InputIfFileExists{merge-non-determinism.tikz}{}{\input{./tikz/merge-non-determinism.tikz}}
\tikzset{x=1em, y=1.5ex}
\\
\big\{\text{Theorem~\ref{thm:copy-merge}-(cpy)}\big\}\quad & = \quad 
\tikzset{x=1em, y=2.1ex}
\InputIfFileExists{merge-non-determinism-2.tikz}{}{\input{./tikz/merge-non-determinism-2.tikz}}
\tikzset{x=1em, y=1.5ex}
\\
\quad & \myeq{B1} \quad 
\tikzset{x=1em, y=2.1ex}
\InputIfFileExists{merge-non-determinism-3.tikz}{}{\input{./tikz/merge-non-determinism-3.tikz}}
\tikzset{x=1em, y=1.5ex}
\\
\left\{\text{let }
\tikzset{x=1em, y=2.1ex}
\InputIfFileExists{star-c_i.tikz}{}{\input{./tikz/star-c_i.tikz}}
\tikzset{x=1em, y=1.5ex}
 :=
\tikzset{x=1em, y=2.1ex}
\InputIfFileExists{star-c_i-def.tikz}{}{\input{./tikz/star-c_i-def.tikz}}
\tikzset{x=1em, y=1.5ex}
 \right\}\quad & = \quad 
\tikzset{x=1em, y=2.1ex}
\InputIfFileExists{merge-non-determinism-4.tikz}{}{\input{./tikz/merge-non-determinism-4.tikz}}
\tikzset{x=1em, y=1.5ex}
\\
\big\{\text{Theorem~\ref{thm:copy-merge}-(co-cpy)}\big\}\quad & = \quad 
\tikzset{x=1em, y=2.1ex}
\InputIfFileExists{merge-non-determinism-5.tikz}{}{\input{./tikz/merge-non-determinism-5.tikz}}
\tikzset{x=1em, y=1.5ex}
\\
& \myeq{A1-A2} \quad 
\tikzset{x=1em, y=2.1ex}
\InputIfFileExists{merge-non-determinism-6.tikz}{}{\input{./tikz/merge-non-determinism-6.tikz}}
\tikzset{x=1em, y=1.5ex}
\\
\left\{\text{since }
\tikzset{x=1em, y=2.1ex}
\InputIfFileExists{star-c_i.tikz}{}{\input{./tikz/star-c_i.tikz}}
\tikzset{x=1em, y=1.5ex}
 :=
\tikzset{x=1em, y=2.1ex}
\InputIfFileExists{star-c_i-def.tikz}{}{\input{./tikz/star-c_i-def.tikz}}
\tikzset{x=1em, y=1.5ex}
 \right\}\quad& =  
\tikzset{x=1em, y=2.1ex}
\InputIfFileExists{merge-non-determinism-6-1.tikz}{}{\input{./tikz/merge-non-determinism-6-1.tikz}}
\tikzset{x=1em, y=1.5ex}

\end{align*}
In context, this gives
\begin{equation*}

\tikzset{x=1em, y=2.1ex}
\InputIfFileExists{lemma-elim-nondeterminism.tikz}{}{\input{./tikz/lemma-elim-nondeterminism.tikz}}
\tikzset{x=1em, y=1.5ex}
\quad=\quad
\tikzset{x=1em, y=2.1ex}
\InputIfFileExists{merge-non-determinism-7.tikz}{}{\input{./tikz/merge-non-determinism-7.tikz}}
\tikzset{x=1em, y=1.5ex}

\end{equation*}
Now, we can use Lemma~\ref{lem:absorb-merge} to absorb the two occurences of $\Bmult$ into $e'$. We get $e''$ such that
\begin{equation*}

\tikzset{x=1em, y=2.1ex}
\InputIfFileExists{lemma-elim-nondeterminism.tikz}{}{\input{./tikz/lemma-elim-nondeterminism.tikz}}
\tikzset{x=1em, y=1.5ex}
\quad=\quad
\tikzset{x=1em, y=2.1ex}
\InputIfFileExists{merge-non-determinism-8.tikz}{}{\input{./tikz/merge-non-determinism-8.tikz}}
\tikzset{x=1em, y=1.5ex}

\end{equation*}
Now, the dashed box in the diagram above is the $d'$ required by the statement of the lemma: indeed, any further nondeterministic transition would mean that $c$ as chosen above was not be the largest subdiagram satisfying the conditions of~\eqref{eq:two-states-intersection}.
\end{proof}


\begin{proof}[Proof of Proposition~\ref{thm:deterministic-rep} (Determinisation)]
First, by Proposition~\ref{thm:traceform}, we can obtain a representation for any give left-to-right diagram. Thus, we only need to prove that, for any matrix-diagram $d\from \objr^{l+1}\to \objr^{l+1}$ with $d_{ll}$ $\epsilon$-free,
there exists $l'\in \N$ and a  matrix-diagram $d'\from \objr^{l'+1}\to \objr^{l'+1}$, with $d'_{l'l'}$ deterministic 
such that
\[\traceform{d}{}{}{l}\; = \; \traceform{d'}{}{}{l'}\]
We proceed by induction on the number of nondeterministic transitions in $d$.
Recall that, in diagrammatic terms, a nondeterministic transition of the associated automaton corresponds to a subdiagram of the form, for some $a\in \Sigma$:
\begin{equation}\label{diag:non-determinism}

\tikzset{x=1em, y=2.1ex}
\InputIfFileExists{non-determinism.tikz}{}{\input{./tikz/non-determinism.tikz}}
\tikzset{x=1em, y=1.5ex}

\end{equation}
If there are none there is nothing to do. Assuming that the statement of the theorem holds for any matrix-diagram containing $n$ nondeterministic transitions, let $d$ be a matrix diagram with $n+1$ such transitions. We can choose one, so that there exists some diagram $d^{(1)}$ with
\begin{equation*}
\diagstate{d}{}{}{l} \;= \;
\tikzset{x=1em, y=2.1ex}
\InputIfFileExists{non-deterministic-ih.tikz}{}{\input{./tikz/non-deterministic-ih.tikz}}
\tikzset{x=1em, y=1.5ex}
  \;\myeq{D1} \;
\tikzset{x=1em, y=2.1ex}
\InputIfFileExists{non-deterministic-merged-ih.tikz}{}{\input{./tikz/non-deterministic-merged-ih.tikz}}
\tikzset{x=1em, y=1.5ex}

\end{equation*}
and $l^{(1)} = l - 3$, for $k$ the arity of the nondeterministic transition we picked. The second equation is an immediate application of Theorem~\ref{thm:copy-merge}.
Then, by tracing out, we get
\begin{align}\label{eq:absord-merge}
\traceform{d}{}{}{l}\quad &= \quad
\tikzset{x=1em, y=2.1ex}
\InputIfFileExists{non-deterministic-merged-ih-1.tikz}{}{\input{./tikz/non-deterministic-merged-ih-1.tikz}}
\tikzset{x=1em, y=1.5ex}
 \\
& = \quad
\tikzset{x=1em, y=2.1ex}
\InputIfFileExists{non-deterministic-merged-ih-2.tikz}{}{\input{./tikz/non-deterministic-merged-ih-2.tikz}}
\tikzset{x=1em, y=1.5ex}

\end{align}
Let $d^{(2)}$ be the subdiagram in the dashed box above and let $l^{(2)} = l^{(1)} +1$. We now have
\begin{equation}\label{eq:absord-action}
\traceform{d}{}{}{l}\quad = \quad
\tikzset{x=1em, y=2.1ex}
\InputIfFileExists{non-deterministic-merged-ih-3.tikz}{}{\input{./tikz/non-deterministic-merged-ih-3.tikz}}
\tikzset{x=1em, y=1.5ex}

\end{equation}
Note, that, by construction $d^{(2)}$ contains $n$ nondeterministic transitions. We can therefore apply the induction hypothesis to determinise the following subdiagram:
\[
\tikzset{x=1em, y=2.1ex}
\InputIfFileExists{to-determinise-ih.tikz}{}{\input{./tikz/to-determinise-ih.tikz}}
\tikzset{x=1em, y=1.5ex}
\]
From this, we obtain $d^{(3)}$ deterministic such that
\begin{align*}
\traceform{d}{}{}{l}\quad &= \quad
\tikzset{x=1em, y=2.1ex}
\InputIfFileExists{non-deterministic-merged-ih-4.tikz}{}{\input{./tikz/non-deterministic-merged-ih-4.tikz}}
\tikzset{x=1em, y=1.5ex}
 \\
& \myeq{A1-A2} \:
\tikzset{x=1em, y=2.1ex}
\InputIfFileExists{non-deterministic-merged-ih-5.tikz}{}{\input{./tikz/non-deterministic-merged-ih-5.tikz}}
\tikzset{x=1em, y=1.5ex}
 
\end{align*}
Applying Lemma~\ref{lem:absorb-merge} twice, we obtain $d^{(4)}$ such that
\begin{equation*}
\traceform{d}{}{}{l}\quad = \:
\tikzset{x=1em, y=2.1ex}
\InputIfFileExists{non-deterministic-merged-ih-6.tikz}{}{\input{./tikz/non-deterministic-merged-ih-6.tikz}}
\tikzset{x=1em, y=1.5ex}

\end{equation*}
and we are now able to apply Lemma~\ref{thm:elim-nondeterminism} to eliminate the copying node $\Bcomult$, obtaining $d'$ as needed, and concluding the proof.
\end{proof}

\end{document}